\documentclass[a4paper,11pt]{article}
\usepackage{fullpage}
\usepackage{hyperref}

\usepackage[utf8]{inputenc}
\usepackage{amssymb,amsmath,amsfonts,amsthm}  
\usepackage{xspace,enumerate}
\usepackage[dvipsnames]{xcolor}
\usepackage{thmtools}
\usepackage[cmbtt]{bold-extra}
\usepackage[T1]{fontenc}
\usepackage{makecell}
\usepackage{listings}
\usepackage{multirow}
\usepackage{diagbox}
\usepackage{tabularx}
\usepackage{epsfig}
\usepackage{verbatim}
\usepackage[noadjust]{cite}
\usepackage{subcaption}
\usepackage{tikz}
\usetikzlibrary{positioning,arrows,arrows.meta,calc,patterns,patterns.meta,fit}

\usepackage[capitalise]{cleveref}
\let\Cref\cref
 
\newtheorem{theorem}{Theorem}[section]
\newtheorem{@theorem}{Theorem}[section]
\newtheorem{fact}{Fact}[section]
\newtheorem{corollary}{Corollary}[section]
\newtheorem{lemma}{Lemma}[section]

\newtheorem{definition}{Definition}[section]
\newtheorem{claim}{Claim}[section]


\crefname{conjecture}{Conjecture}{Conjectures}
\crefname{lemma}{Lemma}{Lemmas}
\crefname{problem}{Problem}{Problems}
\crefname{remark}{Remark}{Remarks}
\crefname{definition}{Definition}{Definitions}
\crefname{observation}{Observation}{Observations}
\crefname{@theorem}{Theorem}{Theorems}
\crefname{fact}{Fact}{Facts}
\crefname{claim}{Claim}{Claims} 

\def\dd{\mathinner{.\,.}}
\newcommand{\R}{\mathbb{R}}
\newcommand{\Z}{\mathbb{Z}}

\renewcommand{\O}{\mathcal{O}}
\newcommand{\tO}{\tilde{\mathcal{O}}}

\newcommand{\Q}{\mathcal{Q}}

\renewcommand{\S}{\mathcal{S}}

\newcommand{\U}{\mathcal{U}}
\newcommand{\V}{\mathcal{V}}

\renewcommand{\L}{\mathcal{L}}
\newcommand{\W}{\mathcal{W}}
\newcommand{\I}{\mathcal{I}}

\newcommand{\Ta}{T_\mathbf{a}}

\renewcommand{\phi}{\varphi}
\newcommand{\floor}[1]{\left\lfloor #1 \right\rfloor}
\newcommand{\set}[1]{\left\lbrace #1 \right\rbrace}
\newcommand{\bigset}[1]{\big \lbrace #1 \big \rbrace}

\newcommand{\eq}[1]{\begin{align*} #1 \end{align*}}

\DeclareMathOperator*{\Edges}{E}
\DeclareMathOperator*{\X}{X}
\DeclareMathOperator*{\Y}{Y}
\DeclareMathOperator*{\score}{score}
\DeclareMathOperator*{\Ham}{Ham}
\DeclareMathOperator*{\ID}{Id}
\DeclareMathOperator*{\dom}{dom}
\DeclareMathOperator*{\chrome}{C}

\newcommand{\absolute}[1]{\left\lvert#1\right\rvert}

\newcommand{\defproblem}[3]{
\vspace{2mm}
\noindent\fbox{
  \begin{minipage}{0.94\textwidth}
    \textsc{\large #1}\\
    {\bf{Input:}} #2  \\
    {\bf{Output:}} #3
  \end{minipage}
  }
\vspace{2mm}
}

\graphicspath{{drawings/}}

\sloppy

\begin{document}
\title{Faster two-dimensional pattern matching with $k$ mismatches}

\author{Jonas Ellert\thanks{DI/ENS, PSL Research University, France. Partially supported by the grant ANR-20-CE48-0001 from the French National Research Agency (ANR).}
\and Paweł Gawrychowski\thanks{Institute of Computer Science, University of Wrocław, Poland. Partially supported by the Polish National Science Centre grant number 2023/51/B/ST6/01505.}
\and Adam Górkiewicz\thanks{Institute of Computer Science, University of Wrocław, Poland. Partially supported by the Polish National Science Centre grant number 2023/51/B/ST6/01505.}
\and Tatiana Starikovskaya\thanks{DI/ENS, PSL Research University, France. Partially supported by the grant ANR-20-CE48-0001 from the French National Research Agency (ANR).}
}



\date{}
\maketitle

\begin{abstract}
The classical pattern matching asks for locating all occurrences of one string, called the pattern, in another, called the text, where
a string is simply a sequence of characters.
Due to the potential practical applications, it is desirable to seek approximate occurrences, for example by bounding the number
of mismatches. This problem has been extensively studied, and by now we have a good
understanding of the best possible time complexity as a function of $n$ (length of the text), $m$ (length of the pattern), and $k$
(number of mismatches).
In particular, we know that for $k=\O(\sqrt{m})$, we can achieve quasi-linear time complexity [Gawrychowski and Uznański, ICALP 2018].

We consider a natural generalisation of the approximate pattern matching problem to two-dimensional strings, which are simply
square arrays of characters. The exact version of this problem has been extensively studied in the early 90s, motivated by the
potential applications in image processing. While periodicity, which is the basic tool for one-dimensional pattern matching, 
admits a natural extension to two dimensions, it turns out to become significantly more challenging to work with,
and it took some time until an alphabet-independent linear-time algorithm has been obtained by Galil and Park [SICOMP 1996].

In the approximate two-dimensional pattern matching, we are given a pattern of size $m\times m$ and a text of size $n\times n$,
and ask for all locations in the text where the pattern matches with at most $k$ mismatches.
The asymptotically fastest algorithm for this algorithm works in $\O(kn^{2})$ time [Amir and Landau, TCS 1991].
We provide a new insight
into two-dimensional periodicity to improve on these 30-years old bounds.
Our algorithm works in $\tO((m^{2}+mk^{5/4})n^{2}/m^{2})$ time, which is $\tO(n^{2})$ for $k=\O(m^{4/5})$.
\end{abstract}

\setcounter{page}{0}
\thispagestyle{empty}
\clearpage

\section{Introduction}
Strings are basic objects that can used to store and manipulate sequential data, readily available in any popular
programming language. The fundamental algorithmic problem on such objects is pattern matching: identifying occurrences of one string in another.
Traditionally, the former is called the pattern and the latter the text.
An asymptotically optimal linear-time algorithm for this problem is known since the 70s~\cite{Knuth1977}, and multiple dozens
efficient algorithms have been described in the literature since then~\cite{DBLP:books/daglib/0025563}. However, 
from the point of view of possible applications, in particular those in computational biology, where we work with biological sequences
that can be described by finite strings over the alphabet $\{A,C,G,T\}$, a more appropriate task is to search for approximate occurrences, that is, to 
allow for some slack when defining when two such strings are equal.

From a theoretical point of view, a particular clean notion of an approximate occurrence is that of bounded Hamming distance: Given an integer parameter $k$ together with a text of length $n$ and a pattern of length $m$, the task is to find all
positions in the text where the pattern occurs with at most $k$ mismatches. This is known as pattern matching with $k$
mismatches. The natural assumption is that $k$ is not too large,
and the running time should be close to linear when $k$ is small. Indeed, from the point of view of potential applications we do
not want positions where there are many mismatches, as they would not be very meaningful.

Pattern matching with $k$ mismatches was already studied in the 80s, when a technique informally known as
the ``kangaroo jumping'' was introduced~\cite{Landau1986,Galil1986}. In this approach,
each position in the text is considered one-by-one, and we calculate the number of mismatches
by jumping over regions where there is no mismatch, moving to the next position when it exceeds $k$.
A single jump can be implemented in constant time with a data structure for the longest common extensions, such
as a suffix tree augmented with a lowest common ancestors structure (or some of the earlier simpler structures with slightly worse
parameters). This results in an $\O(nk)$ time algorithm, which is efficient for small values of $k$.
Another early discovery, efficient when there are very few distinct characters present in the input,
is to run the fast Fourier transform for each distinct character, resulting in $\O(n|\Sigma|\log n)$ running time~\cite{FischerP74}.
While, in principle, this is possibly quite bad for a general input, in the late 80s Abrahamson showed
how to apply the so-called heavy-light trick to guarantee that the time complexity is always
$\O(n \sqrt{m \log m})$ time~\cite{Abrahamson1987}. This clearly established that the $\O(nk)$ time complexity
can't be the right answer for the whole range values of $k$.

It was only in 2004 that both bounds were unified to obtain an $\O(n\sqrt{k \log k})$ time algorithm~\cite{Amir2004},
by combining the fast Fourier transform with a purely combinatorial reasoning that allows for eliminating all but a few relevant positions
in the text. It might have been seen at the time that such a complexity is the best possible as a function of $n$, $m$, and $k$,
but surprisingly in 2016 Clifford et al.~\cite{Clifford2016a} designed another algorithm that works
in $\tO(n + k^2n/m)$ time\footnote{We write $\tO$ to hide factors polylogarithmic in $n$.}. In particular, this is almost linear when $k=\O(\sqrt{m})$. Later, their
approach was refined to work in $\tO(n + kn/\sqrt{m})$ time~\cite{Gawrychowski2018},
which gives a smooth trade-off between $\tO(n\sqrt{k})$ and $\tO(n + k^2n/m)$. 
It is known that a significantly faster algorithm implies fast Boolean matrix multiplication~\cite{Gawrychowski2018},
and the time complexity can be slightly improved to $\O(n + kn\sqrt{(\log m) / m})$ \cite{Chan2020} (at the expense
of allowing Monte Carlo randomization). In a very recent exciting improvement, it was shown how to slightly improve
these time complexities by leveraging a connection to the 3-SUM problem~\cite{Chan0WX23}.
While further improvements might be still possible, it is certainly the case that one-dimensional pattern matching with
bounded Hamming distance is a fairly well-understood problem.
This is also the case from the more combinatorial point of view: we know that occurrences of the pattern with $k$ mismatches
either have a simple and exploitable structure, or the pattern is close to being periodic~\cite{Bringmann2019,Charalampopoulos2020a}.

\paragraph{2D strings.} The natural extension of strings to two dimensions are arrays of characters, called 2D strings. 
To avoid multiplying the parameters, we will assume that they are squares, but one could as well consider rectangles.
Naturally, such an extension is motivated by the possible application in image processing and the related areas,
and not surprisingly the corresponding problem of 2D exact pattern matching has been already considered in the 70s.
In this problem, the task is to locate all occurrences of an $m\times m$ pattern in an $n\times n$ text.
For constant-size alphabets, an asymptotically optimal $\O(n^{2}+m^{2})$-time algorithm has been found relatively
quickly~\cite{AMIR1992233,AMIR19922,BIRD1977168,doi:10.1137/0207043}, but (as opposed to 1D exact pattern matching) it was quite unclear for quite some time how to obtain
such a complexity without any assumptions on the size of the alphabet. Only in the mid 90s, a systematic
study of the so-called periodicities in 2D strings paved up the way for obtaining such an algorithm~\cite{Amir1994,Galil1996}
(even in logarithmic space~\cite{Crochemore1995}). Efficient parallel algorithms have also been obtained~\cite{Crochemore1998},
and the time complexity for random inputs, i.e., average time complexity, has also been considered~\cite{Baeza-Yates1993,Tarhio1996,Kaerkkaeinen1999}.

\paragraph{Periodicities in 2D strings.} The fundamental combinatorial tool used for 1D strings is periodicity,
defined as follows. We say that $p$ is a period of a string $S[1 \dd n]$ when $S[i]=S[i+p]$, for all $i$ such that the expression is defined.
The set of all periods of a given string is very structured~\cite{Fine1965}, namely, when $p$ and $q$ are both periods
of the same string $S[1 \dd n]$, and further $p+q\leq n$, then so is $\gcd(p,q)$.
For 2D strings, the notion of periodicity becomes more involved. In particular, instead of periodic
and non-periodic strings it is now necessary to consider four types of strings: non-periodic, lattice periodic, line periodic, and
radiant periodic~\cite{Amir1998}. Such considerations formed the foundations of the asymptotically optimal alphabet-independent
2D exact pattern matching algorithms~\cite{Amir1994,Galil1996}.
Later, some purely combinatorial properties of two-dimensional periodicities have been discovered~\cite{Mignosi2003,Gamard2017},
but generally speaking repetitions in two-dimensional strings are inherently more complicated than in one-dimensional strings.
For example, compressed pattern matching for two-dimensional strings becomes NP-complete~\cite{Berman2002}, see~\cite{Rytter2000}
for a more extensive discussion.
Another example, perhaps less extreme, is the bounds on two-dimensional runs~\cite{Amir2020} and
distinct squares~\cite{Charalampopoulos2020}, where we know that increasing the dimension incurs at least an additional
logarithmic factor~\cite{Charalampopoulos2020}.

\paragraph{2D pattern matching with \boldmath$k$\unboldmath{} mismatches.} The next step for 2D pattern matching is to allow $k$ mismatches.
Already in 1987, an $\tO(kmn^{2})$ time algorithm was obtained for this problem~\cite{Krithivasan1987}. This was
soon improved to $\tO((k+m)n^2)$ time~\cite{Ranka1991}, and finally to $\O(kn^2)$~\cite{Amir1991}, which remains
to be the asymptotically fastest algorithm. A number of non-trivial results have been obtained under the assumption
that the input is random, i.e., for the average time complexity~\cite{Baeza-Yates1998,Park1998,Kaerkkaeinen1999}.
Given that other notions of approximate occurrences, e.g.~bounded edit distance, seem less natural in the two-dimensional
setting~\cite{Baeza-Yates1998a}, the natural challenge is to better understand the complexity
of 2D pattern matching with $k$ mismatches. Following the line of research for 1D pattern matching with $k$
mismatches, a particularly natural question is to understand if we are able to design a quasi-linear time
algorithm for polynomial $k=\O(n^{\epsilon})$ number of mismatches, where $0 < \epsilon$ is a small constant. It appears that no such result was known
in the literature, and it was not clear if the methods designed for 1D pattern matching with $k$ mismatches
can be immediately adapted to 2D strings, as they are based on extending the notion of periodicity, which becomes
inherently more complicated in two dimensions.

\paragraph{Our result.} We design an algorithm that, given an $n\times n$ text and $m\times m$ pattern, finds
all occurrences with at most $k$ mismatches of the former in the latter in $\tO((m^2 + mk^{5/4})n^2 / m^2)$ time.
This significantly improves on the previously known upper bound of $\O(kn^{2})$ (from over 30 years ago),
and provides a quasi-linear time algorithm for $k=\O(m^{4/5})$. To obtain our result, we follow the framework
used to solve one-dimensional pattern matching with bounded Hamming distance. However, due to the intrinsically
more complex nature of two-dimensional periodicities, this turns out to require a deeper delve into the geometric
properties of such objects.

\paragraph{Overview of the techniques.}
The starting point for our algorithm is the approach designed for the one-dimensional version, see e.g.~\cite{Gawrychowski2018}
for an optimized version (but the approach is due to~\cite{Clifford2016}), which proceeds as follows. Using the standard trick~\cite{Abrahamson1987}, we can assume w.l.o.g.~$n = 2m$. 
First, we approximate the Hamming distance for every position in the text with Karloff's algorithm~\cite{Karloff1993}.
Then, we can eliminate positions for which the approximated distance is very large.
If the number of remaining positions is small enough, we can use kangaroo jumps~\cite{Galil1986} to verify them one by one.
Otherwise, some two remaining possible occurrences must have a large overlap, and thus induce a small approximate period in the pattern, i.e., an integer $p$ such that aligning the pattern with itself at a distance $p$ incurs few mismatches.
Then, (for $n=2m$), we can restrict our attention to the middle part of the text with the same approximate period $p$.
Both the pattern and the middle part of the text compress very well under the simple RLE compression, if we rearrange their characters
by considering the positions modulo $p$. In other words, they can be both decomposed into few subsequences
of the form $i, i+p, i+2p, \ldots, i+\alpha p$ consisting of the same character.
By appropriately plugging in an efficient algorithm for approximate pattern
matching for RLE-compressed inputs, this allows us to obtain the desired time complexity.

In the two-dimensional case, there is no difficulty in adapting Karloff's algorithm or kangaroo jumps, which allows
us to focus on the case where there are two possible occurrences with a large overlap. Here, the two-dimensional
case significantly departs from the one-dimensional case in terms of technical complications. In 2D,
the natural definition of a period is not an integer but a pair of integers, i.e., a vector.
However, to obtain a compressed representation of
a 2D string with small approximate period we actually need two such periods $\phi$ and $\psi$.
Roughly speaking, we require that the parallelogram spanned by $\phi$ and $\psi$ has a small area.
We show that two vectors with the required properties exist with some geometric considerations and
applying the Dilworth's theorem.

Next, we show that with $\phi$ and $\psi$ in hand we can decompose the pattern
into nicely structured monochromatic pieces. First, we define a lattice as the set of points such that for each two of them their difference is
of the form $s\phi+t\psi$, for some $s,t\in \Z$. Second, a tile is a set of points in a parallelogram with sides
parallel to $\phi$ and $\psi$. Then, each piece is a lattice restricted to a truncated tile, where truncated
means that we only consider the points in a rectilinear rectangle. We show that the pattern
can be decomposed into $\O(k)$ such pieces.

The natural next step is to similarly decompose the text. As in the one-dimensional case, we
need to focus on the middle part of the text that admits the same approximate period. This turns out
to somewhat challenging in two dimensions, but nevertheless it is not very technical to obtain
a partition into $\O(mk)$ pieces. Then, we consider each piece of the pattern and each piece of the text,
and convolve them to calculate their contribution to the number of mismatches, obtaining an $\tO(m^{2}+mk^{2})$ time
algorithm for $n\leq \frac{3}{2}m$ (which can be guaranteed by covering the text into overlapping
squares).

To obtain our final result, we need to proceed more carefully when partitioning the text. We partition
the relevant positions of the text into its periphery and the rest, and observe that (by carefully choosing the parameters) we only need to convolve the periphery with a small portion of the pattern. The remaining non-peripheral
part of the text can be partitioned into fewer pieces, which allows us to obtain the following theorem.

\def\mainThmTitle{Informal version}
\def\mainThmContent{%
Given a two-dimensional $m \times m$ pattern string $P$ and a two-dimensional $n \times n$ text string $T$ with $m \le n$,
there is an algorithm that solves the $k$-mismatch problem in $\tO((m^2 + mk^{5/4})n^2 / m^2)$ time.%
}

\begin{restatable}[\mainThmTitle]{@theorem}{restateThmMain}\label{th:main}
\mainThmContent%
\end{restatable}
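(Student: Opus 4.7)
The plan is to follow the one-dimensional blueprint from~\cite{Gawrychowski2018,Clifford2016} but replace each ingredient by its two-dimensional counterpart. First, I would reduce to the balanced case $n = \Theta(m)$ by covering the $n \times n$ text with $\Theta(n^{2}/m^{2})$ overlapping $\Theta(m) \times \Theta(m)$ sub-texts; if we solve each subproblem in $\tO(m^{2} + mk^{5/4})$ time, the stated global bound follows immediately. Next, I would run a two-dimensional version of Karloff's algorithm to obtain, at every alignment, a constant-factor approximation of the Hamming distance; all alignments whose approximation comfortably exceeds $k$ are discarded. Let $R$ be the set of surviving candidate alignments. If $|R|$ is below a threshold $\tau$ (to be tuned), I would verify each surviving alignment one-by-one using two-dimensional kangaroo jumps on a generalised suffix structure, costing $\tO(|R| k)$.

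If $|R| > \tau$, then by pigeonhole two candidate alignments have a large overlap, which certifies the existence of a small approximate two-dimensional period for the pattern: a vector $v$ such that shifting $P$ by $v$ induces only $\O(k)$ mismatches. The central difficulty here is that one vector is \emph{not} enough to compress $P$ well; I would use a geometric argument, combined with Dilworth's theorem on chains and antichains of vectors ordered by some coordinate rule, to extract a second independent vector $\psi$ so that the parallelogram spanned by $\phi,\psi$ has small area and all lattice translates $s\phi+t\psi$ that fall inside $P$ are nearly self-similar. This is the step I expect to be the main obstacle: controlling the interaction of two approximate period vectors, and in particular guaranteeing that together they induce only $\tO(k)$ total mismatches, requires a genuinely two-dimensional combinatorial argument with no clean one-dimensional analogue.

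Once $\phi,\psi$ are fixed, I would decompose $P$ into monochromatic pieces, where each piece is the intersection of a lattice coset $\{p_{0}+s\phi+t\psi : s,t\in\Z\}$ with a truncated tile (a parallelogram clipped to an axis-aligned rectangle). A careful charging argument of mismatches to pieces should yield a decomposition of the pattern into $\O(k)$ such pieces. I would then argue that the ``middle'' of the sub-text that can still match $P$ admits the same pair of approximate periods, and hence decomposes into $\O(mk)$ analogous pieces. The mismatch count at every remaining alignment can be computed by convolving each piece of $P$ with each piece of the text; since each piece is supported on a shifted lattice inside a rectangle, each such convolution costs $\tO(m)$ via an FFT on a compressed one-dimensional re-indexing of the lattice, giving $\tO(m \cdot k \cdot mk) = \tO(m^{2}k^{2})$ in a naive accounting and $\tO(m^{2}+mk^{2})$ after batching pieces sharing a lattice coset.

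To sharpen the $k^{2}$ factor to $k^{5/4}$ I would refine the text decomposition by separating the \emph{periphery} (a thin strip near the boundary of the relevant region of the text) from the interior. The periphery, being one-dimensional in flavour, only needs to be convolved against a small ``border'' portion of the pattern, while the interior admits a much coarser partition into far fewer pieces because the two periodicities extend coherently across it. Setting $\tau$ and the periphery width to balance (i) kangaroo verification cost $\tO(\tau k)$, (ii) convolution cost for the interior, and (iii) convolution cost for the periphery should produce the claimed $\tO(m^{2} + mk^{5/4})$ bound for a single $\Theta(m) \times \Theta(m)$ sub-text, which after the initial tiling yields the theorem.
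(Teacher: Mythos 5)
Your plan follows the paper's approach essentially step-for-step: cover with $\Theta(m)\times\Theta(m)$ sub-texts, 2D Karloff to filter alignments, kangaroo verification when few survive, Dilworth-based extraction of two period vectors $\phi,\psi$ with small parallelogram area, decomposition of $P$ into $\O(k)$ monochromatic lattice-coset pieces clipped to truncated tiles, a similar $\O(mk)$-piece decomposition of the relevant text yielding $\tO(m^2+mk^2)$, and finally a periphery/interior split with a tuned width to sharpen the exponent to $5/4$. All of these correspond to explicit structures and lemmas in the paper (Theorems~\ref{get_periods}, \ref{tile_decomposition}, \ref{text_decomposition}, \ref{th:sparse_algo}, \ref{th:dense_algo}), so this is the same route, not a different one.

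The one place where your sketch is vaguer than the paper in a way that matters is the final aggregation step. You propose computing each piece-to-piece contribution by ``an FFT on a compressed one-dimensional re-indexing of the lattice'' and then ``batching pieces sharing a lattice coset.'' The paper does not convolve piece pairs. Instead, Theorem~\ref{th:sparse_algo} computes $\sum_{S}\Ham(P+q,S)$ for all $q$ at once via a single cheap FFT for $|\dom(P+q)\cap U|$ plus an inclusion--exclusion correction term: each truncated subtile is written as a $\pm$-combination of four shifted lattice quarter-planes $D+w_{i,j}$ (Lemma~\ref{primitive}), the point-wise coverage count $\score(u)$ is obtained by an $8$-dimensional range tree over subtile signatures, and the final sums $\sum_{u\in D-q}\score(u)$ are assembled by a two-variable recurrence (dynamic programming) over the $\O(m^2)$ relevant positions. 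This achieves $\tO(m^2+\sum_S|\V_{\chrome(S)}|)$, and it is genuinely not an FFT per piece pair. If you tried to realize your step literally, the ``1D re-indexing'' of a truncated subtile interacts with the axis-aligned clipping and does not straightforwardly yield a single convolution length independent of alignment; you would need something like the paper's quarter-plane decomposition anyway. So while the high-level blueprint is right, you should replace that paragraph with the inclusion--exclusion-plus-range-tree machinery (or give a concrete argument that a batched FFT over cosets actually handles the truncation), as it is the load-bearing technical lemma of the whole algorithm.

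Two minor inaccuracies worth fixing in the write-up: the ``two overlapping alignments'' phrasing understates what is actually used — every pair $u,v\in Q$ gives an $\O(k)$-period of $P$ by the triangle inequality through $T$ (Claim~\ref{periodicity_lemma}), and you need this for \emph{all} pairs, not one, to run the chain/antichain argument; and the pattern decomposition is into $\O(k)$ pieces not because ``together $\phi,\psi$ induce $\O(k)$ mismatches'' but because the lattice has $\O(\min\{m,k\})$ cosets (Pick's theorem on the small parallelogram) and then cutting each coset at the $\phi$-mismatches and then the $\psi$-mismatches adds only $\O(k)$ further cuts in total.
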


\noindent Throughout the paper, we assume the standard word-RAM model of computation with words of size $\Omega(\log n)$.

\section{Preliminaries}
\label{sec:preliminaries}
\newcommand{\x}[1]{#1.x}
\newcommand{\y}[1]{#1.y}
\newcommand{\h}[1]{\phi \times #1}
\newcommand{\s}[1]{\psi \times #1}

\textbf{Geometric notations.} For $n \in \Z^+$, denote $[n] = \set{0, \dots, n - 1}$. For $u \in \R^2$, denote its coordinates as $\x{u}, \y{u}$, i.e., $u = (\x{u}, \y{u})$. Furthermore, for $u, v \in \R^2$, denote 
\begin{alignat*}{4}
u &\ +\ &&      v = (\x{u}+\x{v}, \y{u} + \y{v}) & \qquad\qquad
u &\ -\ &&      v = (\x{u}-\x{v}, \y{u}-\y{v})\\
u &\ \;\cdot\ &&  v = \x{u} \cdot \x{v} + \y{u} \cdot \y{v} & 
u &\ \times\ && v = \x{u} \cdot \y{v} - \y{u} \cdot \x{v}
\end{alignat*}
Alternatively, $u \cdot v = \absolute{u}\absolute{v} \cos \alpha$ and $u \times v = \absolute{u}\absolute{v} \sin \alpha$, where $\alpha$ is the angle between $u$ and $v$.

\begin{definition}
	For a set $U \subseteq \R^2$ we denote
	\[ \X(U) = \set{\x{u} : u \in U}, \quad \Y(U) = \set{\y{u} : u \in U}.\]
\end{definition}


\newcommand{\wild}{\texttt{?}}
\noindent \textbf{String notations.} In this work, $\Sigma$ denotes an \emph{alphabet}, a finite set consisting of integers polynomially bounded in the size of the input strings. The elements of $\Sigma$ are called \emph{characters}. Additionally, we consider a special character denoted by $\wild$ that is not in $\Sigma$ and is called a \emph{wildcard}.

\newcommand{\getchar}[1]{\chrome(#1)}
\newcommand{\pto}{\mathrel{\ooalign{\hfil$\mapstochar\mkern5mu$\hfil\cr$\to$\cr}}}
\renewcommand{\d}[1]{\dom(#1)}
\newcommand{\f}[1]{#1^\mathbf{f}}

A \emph{one-dimensional string} $S$ is a function from a finite continuous subrange $[a,b]\subset \mathbb Z$ to $\Sigma \cup \{\wild\}$. If $a = 0$, we will sometimes simply write $S = S(a)S(a+1)\ldots S(b)$.\footnote{We chose this definition, which is slightly different from the standard one, as it conveniently generalises to two dimensions.} We define a \emph{two-dimensional string} $S$ as a partial function $\Z^2 \pto \Sigma$. This is more general than seemingly necessary due to the technical details of the algorithm. We store strings as lists of point-character pairs. For a string $S$, let $\d{S}$ be its domain. The \emph{size} of $S$ is the size of its domain.
The \emph{width} of a non-empty two-dimensional string $S$ is defined as $\max \X(\d{S}) - \min \X(\d{S}) + 1$ and the \emph{height} as $\max \Y(\d{S}) - \min \Y(\d{S}) + 1$. Finally, we say that $S$ is \emph{partitioned} into strings $R_0, R_1, \ldots, R_{\ell-1}$ if $\d{S} = \sqcup_{i \in [\ell]} \d{R_i}$ and $R_i(u) = S(u)$ for all $i \in [\ell]$ and $u \in \d{R_i}$.

\newcommand{\MI}{\textnormal{\textsc{MI}}\xspace}

\begin{definition}[Hamming distance]
	Two characters $a,b \in \Sigma$ \emph{match} if $a = b$. The wildcard matches itself and all characters in $\Sigma$. 
	For strings $S, R$ (either both one-dimensional or both two-dimensional), define
	\begin{align*}
		\MI(S, R)\ =\ &\set{u : u \in \d{S} \cap \d{R}, S(u), R(u) \text{ do not match}}\\
		\Ham(S,R)\ =\ &\absolute{\MI(S,R)}
	\end{align*}			 
	We call the elements of $\MI(S,R)$ the \emph{mismatches} between $S$ and $R$.
\end{definition}

\begin{definition}[Shifting]
	For a set $V \subseteq \Z$ (resp., $V \subseteq \Z^2$) and $u \in \Z$ (resp., $u \in \Z^2$), denote $V + u := \set{v + u : v \in V}$.
	For a one-dimensional (resp., two-dimensional) string $S$ and $u \in \Z$ (resp., $u \in \Z^2$), define $S + u$ to be a string $R$ such that
	$\d{R} = \d{S} + u$ and $R(v) = S(v - u)$ for $v \in \d{R}$.
	Intuitively, we shift the domain of the string while maintaining the character values corresponding to its elements.
\end{definition}

\defproblem{The text-to-pattern Hamming distances problem}{One-dimensional (resp., two-dimensional) strings $P, T$}{$\Ham(T, P+q)$ for all $q \in \Z$ (resp., $q \in \Z^2$) such that $\d{P + q} \subseteq \d{T}$}

\defproblem{The $k$-mismatch problem}{One-dimensional (resp., two-dimensional) strings $P, T$, an integer $k \in \Z$}{$\min\{k+1,\Ham(T, P+q)\}$ for all $q \in \Z$ (resp., $q \in \Z^2$) such that $\d{P + q} \subseteq \d{T}$}

\noindent Traditionally, $P$ is referred to as \emph{the pattern} and $T$ as \emph{the text}.

\subsection{Generalisations of one-dimensional algorithms.}
We start by generalising existing one-dimensional algorithms to two dimensions, which we will use as subroutines in the final algorithm for the $k$-mismatch problem. 

For a two-dimensional string $S$ of width $w_S$ and height $h_S$ define its \emph{linearisation} as a one-dimensional string $\bar{S}$ of size $w_S \cdot h_S$, where $\forall x \in \X(\d{S}), y \in \Y(\d{S})$:

$$
\bar{S}(x \cdot h_S + y) =
\begin{cases}
\wild & \text{ if } S(x,y) \text{ is not defined}\\
S(x,y) & \text{ otherwise}
\end{cases}
$$

Intuitively, this can be seen as inscribing $S$ into a rectangle, filling non-defined character values with wildcards, and writing the rectangle two-dimensional string column-by-column. 

\begin{theorem}\label{kangaroos}
Assume that $P, T$ are two-dimensional strings where $\d{P} = [m] \times [m]$ and $\d{T} = [n] \times [n]$, for $m \le n$. Given a set $Q \subseteq \Z^2$, there is an algorithm that computes $d_q = \Ham(P + q, T) $ for every $q \in Q$ in total time $\O(n^2 + \sum_{q \in Q} d_q)$.
\end{theorem}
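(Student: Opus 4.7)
The plan is to adapt the classical 1D ``kangaroo jumping'' technique. In one dimension, one preprocesses an LCE data structure on pattern and text in linear time and, for each shift $q$, uses $\O(d_q + 1)$ constant-time LCE queries to enumerate mismatches, giving total cost $\O(n + \sum_q d_q)$. The natural 2D obstruction is that a shift $q \in \Z^2$ does not correspond to a single 1D shift of any straightforward linearisation, so a row-by-row 2D kangaroo would spend $\O(m)$ per shift and yield only $\O(n^2 + |Q|\cdot m + \sum_q d_q)$; the core idea is instead to reduce the whole 2D problem to a \emph{single} 1D kangaroo with don't-cares.

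Concretely, let $\tilde P$ be the column-major linearisation of $P$ in which each length-$m$ column is padded at the bottom with $n - m$ wildcards, so $\tilde P$ has length $mn$ with $\tilde P(xn + y) = P(x, y)$ for $y < m$ and $\tilde P(xn + y) = \wild$ for $y \in [m, n)$; let $\bar T$ be the linearisation of $T$ from the preliminaries, extended by wildcards beyond position $n^2 - 1$. A direct calculation shows that for every shift $q = (q_x, q_y)$ and every $(x, y) \in [m] \times [n]$, the entry $\tilde P(xn + y)$ is aligned with $\bar T(xn + y + \delta_q)$ where $\delta_q := q_x \cdot n + q_y$ is independent of $(x, y)$. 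Since the wildcards of $\tilde P$ are exactly the padded positions, the set $\MI(T, P + q)$ is in bijection with the non-wildcard 1D mismatches between $\tilde P$ and $\bar T$ at the fixed 1D shift $\delta_q$, so the problem reduces to a 1D kangaroo with don't-cares on $\tilde P$ versus $\bar T$.

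In $\O(n^2)$ preprocessing I then build a 1D LCE-with-don't-cares primitive that, given $i \in [mn]$ and $\delta$, returns in $\O(1)$ the smallest $j \ge 0$ with $\tilde P(i + j) \ne \wild$ and $\tilde P(i + j) \ne \bar T(i + j + \delta)$. On top of a standard 1D LCE structure on $\bar T$ together with a copy of $\tilde P$ whose wildcards have been replaced by distinct sentinels, I attach an auxiliary table of size $\O(n^2)$ that, whenever the underlying LCE stops at a sentinel, leapfrogs the entire wildcard run and any following all-matching columns in $\O(1)$; the regularity of the wildcards (precisely the bottom $n - m$ entries of each length-$n$ column of $\tilde P$) is what makes such a table feasible. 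Applying the primitive iteratively for each $q \in Q$ starting from position $0$ of $\tilde P$ produces all mismatches in $d_q + 1$ constant-time queries, so, using $|Q| \le (n - m + 1)^2 \le n^2$, the total running time is $\O(n^2) + \sum_{q \in Q}\O(d_q + 1) = \O(n^2 + \sum_{q \in Q} d_q)$. The main technical obstacle is precisely the don't-care handling: a naive column-by-column implementation pays $\O(1)$ per column and thus $\O(m)$ per shift, which would ruin the bound, and the leapfrog table is what lets a single primitive call cross several columns and stop only at a genuine mismatch or at the end of $\tilde P$.
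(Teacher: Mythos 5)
Your reduction to a single one-dimensional shift is correct and clean: padding each length-$m$ column of $P$ to length $n$ with wildcards makes the 2D shift $q=(q_x,q_y)$ correspond to the single 1D offset $\delta_q = q_x n + q_y$ between $\tilde P$ and $\bar T$, and the wildcards are exactly the padded positions, so the 2D mismatch set for $q$ is indeed in bijection with the non-wildcard 1D mismatches at offset~$\delta_q$. This is, in spirit, the same column-wise decomposition the paper uses.

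The gap is the ``leapfrog table of size $\O(n^2)$.'' You need a primitive that, from a sentinel position $xn+m$ of $\tilde P$ at offset $\delta$, jumps past the wildcard run \emph{and} past every following column of $P$ that fully matches its aligned window of $T$. The length of that jump is a function of the pair $(x,\delta)$: it equals the length of the longest run of indices $x'>x$ such that column $x'$ of $P$ matches $T_{x'+q_x,\,q_y}$. There are $\Theta(m)$ choices of $x$ and $\Theta(n^2)$ choices of $\delta$, and these parameters do not collapse (knowing $xn+\delta$, say, does not determine the answer), so a direct lookup table has $\Theta(mn^2)$ entries, not $\O(n^2)$. The ``regularity of the wildcards'' only controls where the sentinel runs are, not how many of the subsequent columns of $P$ happen to match $T$ at a given offset, so it does not rescue the table size. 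Without this primitive, a per-column fallback pays $\Theta(m)$ per $q$ and only gives $\O(n^2 + |Q|\cdot m + \sum_q d_q)$, exactly the bound you correctly flag as insufficient.

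The missing idea is a \emph{second} constant-time LCE structure at the granularity of whole columns. Concretely, assign each column $P_x$ and each window $T_{i,j}$ an integer identifier (via a suffix tree on $\bar T \$ \bar P$ truncated at depth $m$), form the strings $\ID(P)=\ID(P_0)\cdots\ID(P_{m-1})$ and $\ID(T)$, and build an LCE structure on $\ID(T)\$ \ID(P)$. An LCE query there \emph{is} your leapfrog: it returns, in $\O(1)$, exactly how many consecutive columns fully match at offset $\delta$. Then one LCE query on identifiers per genuine mismatch, plus one character-level LCE inside the offending column, yields the claimed $\O(d_q+1)$ per $q$. This two-level kangaroo is what the paper does; your single linearisation with wildcards is an equivalent framing, but the step that makes the bound go through is precisely this identifier-level LCE, which your proposal replaces with an unconstructed table.
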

\begin{proof}
For every $i \in [m]$, define a one-dimensional string $P_i = P(i,0) P(i,1) \ldots P(i,m-1)$. Analogously, for every $i \in [n], j \in [n-m+1]$, define a one-dimensional string $T_{i,j} = T(i,j) T(i,j+1) \ldots T(i,j+m-1)$. 

We assign in $\O(n^2)$ time an integer identifier $\ID$ to each $P_i$ and $T_{i,j}$, satisfying the property that the identifiers are equal if and only if the strings are equal, in a folklore way: First, we build the suffix tree~\cite{DBLP:conf/focs/Weiner73} for $\bar{T} \$ \bar{P}$ in $\O(n^2)$ time, where $\$ \notin \Sigma \cup \{\wild\}$. Next, we trim the suffix tree at depth $m$ by a depth-first traverse in $\O(n^2)$ time as well. By definition, every string $P_i$ and $T_{i,j}$ now corresponds to a leaf of the tree, and distinct strings correspond to distinct leaves. We can now identify a string with the corresponding leaf. 

\begin{fact}[{\cite{Galil1986}}]\label{fact:kangaroo1D}
Given a one-dimensional string $S: [\ell] \rightarrow \Sigma$. There is a data structure that can be built in $\O(\ell)$ time and for all pairs of strings $S_1 = S(i) S(i+1) \ldots S(i+\ell')$, $S_2 = S(j) S(j+1) \ldots S(j+\ell')$, where $0 \le i,j < \ell-\ell'$, allows computing $\MI(S_1+j-i, S_2)$ in time $\O(\Ham(S_1,S_2))$. 
\end{fact}

Let $\ID(T) = \ID(T_{0,0}) \ID(T_{0,1}) \ldots \ID(T_{0,n-m}) \ldots \ID(T_{n-1,0}) \ID(T_{n-1,1}) \ldots \ID(T_{n-1,n-m}))$ and $\ID(P) = \ID(P_0)\ID(P_1) \ldots \ID(P_{m-1})$. 
We build the data structure of \cref{fact:kangaroo1D} for two strings: $\ID(T) \$ \ID(P)$ and $\bar{T} \$ \bar{P}$ (note that neither of the strings contains a wildcard). In total, the construction takes $\O(\absolute{\ID(T) \$ \ID(P)} + \absolute{\bar{T} \$ \bar{P}}) = \O(n^2)$ time.

Consider now $q \in Q$. To compute $\Ham(T, P+q)$, note that 

\begin{align*}
\Ham(T, P+q) &= \sum_{i=0}^{m-1} \Ham(T_{\x{q}, \y{q}+i}, P_i)\\
&= \sum_{0 \le i \le m-1 \; : \; \ID(T_{\x{q}, \y{q}+i}) \neq \ID(P_i)} \Ham(T_{\x{q}, \y{q}+i}, P_i) \\
&= \sum_{y \in \MI(\ID(T_{\x{q}, \y{q}}) \ldots \ID(T_{\x{q}, \y{q}+m-1}), \ID(P))} \Ham(T_{\x{q}, y}, P_{y-\y{q}})\\
\end{align*}

It follows that we can first use the data structure of \cref{fact:kangaroo1D} for  $\ID(T) \$ \ID(P)$ to compute the set $\MI =  \MI(\ID(T_{\x{q}, \y{q}}) \ldots \ID(T_{\x{q}, \y{q}+m-1}, \ID(P))$ in $\O(\absolute{\MI}) = \O(d_q)$ time, and then for each $y \in \MI$, we use the data structure of  \cref{fact:kangaroo1D}  for $\bar{T} \$ \bar{P}$ to compute 
$\Ham(T_{\x{q}, y}, P_{y-\y{q}})$ in total $\O(d_q)$ time as well. Summing up the resulting values, we obtain $\Ham(T, P+q)$. The claim follows. 
\end{proof}

The following fact is folklore, and follows by applying, for every distinct character present in $P$ and $T$, a separate instance of the fast Fourier transform~\cite{FischerP74} to copies of the pattern and the text where all but the given character are replaced with a new character.

\begin{fact}[folklore]\label{fact:sigman1d}
Given one-dimensional strings $P$ of size $m$ and $T$ of size $n \ge m$ (both can contain wildcards), the text-to-pattern Hamming distances problem can be solved in $\O(\sigma \cdot n \log m)$ time, where $\sigma$ is the number of distinct characters present in $P$ and $T$.
\end{fact}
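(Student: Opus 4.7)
The plan is to reduce the problem to $\sigma$ independent cross-correlations, one per distinct character, each computable via the fast Fourier transform. For each character $c \in \Sigma$ that appears in $P$ or $T$, I will define indicator strings $p_c$ and $t_c$ over $\{0,1\}$ by setting $p_c(i) = 1$ iff $P(i) = c$, and $t_c(i) = 1$ iff $T(i) = c$ (so wildcards contribute $0$ in every $p_c, t_c$). Then for a shift $q$ the quantity $\sum_c \sum_i p_c(i)\,t_c(i + q)$ counts exactly those positions $i$ at which $P(i)$ and $T(i+q)$ agree on a non-wildcard character, because for each such $i$ there is exactly one $c$ contributing a $1$.

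Next I need to account for wildcards, which match everything. Let $\mathbf{1}_P(i) = 1$ iff $P(i) \neq \wild$ and similarly $\mathbf{1}_T(i) = 1$ iff $T(i) \neq \wild$. Then the number of positions $i \in \dom(P)$ at which both $P(i)$ and $T(i+q)$ are non-wildcards equals $\sum_i \mathbf{1}_P(i)\,\mathbf{1}_T(i+q)$. Subtracting the matching count from this gives
\[
\Ham(T, P+q) \;=\; \sum_i \mathbf{1}_P(i)\,\mathbf{1}_T(i + q) \;-\; \sum_{c} \sum_{i} p_c(i)\,t_c(i+q),
\]
since a mismatch is precisely a position where both characters are non-wildcards but disagree.

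Each of the $\sigma + 1$ inner sums is a cross-correlation of a length-$m$ sequence with a length-$n$ sequence. Using the standard technique of splitting $T$ into $\O(n/m)$ overlapping blocks of length $2m$ and running an FFT on each block against the (reversed) pattern, a single cross-correlation is computed in $\O((n/m) \cdot m \log m) = \O(n \log m)$ time. Summing over the $\O(\sigma)$ characters and the single wildcard-indicator correlation yields the claimed $\O(\sigma \cdot n \log m)$ total time. No step should pose any real obstacle; the only subtlety is handling wildcards cleanly, which is taken care of by the $\mathbf{1}_P, \mathbf{1}_T$ correction term above.
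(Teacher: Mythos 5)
Your proof is correct and takes essentially the same approach as the paper's one-sentence sketch: a per-character FFT cross-correlation (the Fischer--Paterson technique), combined with the standard blocking trick to get $\O(n\log m)$ per character. The one extra correlation using $\mathbf{1}_P$ and $\mathbf{1}_T$ is a clean way to make the wildcard handling explicit, which the paper leaves implicit.
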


\begin{corollary}[of~\cite{Karloff1993}]\label{cor:approx1d}
Let $\varepsilon > 0$ be a constant. Given one-dimensional strings $P$ of size $m$ and $T$ of size $n \ge m$ (both can contain wildcards), there is a $(1+\varepsilon)$-approximation algorithm that solves the text-to-pattern Hamming distances problem in $\O((n/\varepsilon^2)  \log^3 m)$ time.
\end{corollary}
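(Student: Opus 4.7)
The plan is to reduce to Karloff's algorithm via two standard reductions: slicing the text into overlapping blocks so that effectively $n = \O(m)$, and reducing wildcards to ordinary characters plus cheap bookkeeping via indicator polynomials.

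First, I would partition the text into $\O(n/m)$ overlapping blocks of length $2m$, aligned so that every valid shift $q$ with $\d{P+q} \subseteq \d{T}$ lies entirely inside one of the blocks. Solving the problem on each block separately and concatenating the answers reduces to the case $n \le 2m$, where $\log n = \Theta(\log m)$. The total time is the per-block time times $\O(n/m)$, so it suffices to prove the bound $\tO(m/\varepsilon^2)$ with $\log^3 m$ factors exposed, in the regime $n = \Theta(m)$.

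Second, I would run Karloff's randomized hashing on each block. Sample $r = \Theta(\varepsilon^{-2} \log^2 m)$ independent random hash functions $h_1, \dots, h_r : \Sigma \to \{0,1\}$ and, for each shift $q$, output $\frac{2}{r}\sum_{j=1}^r X_j(q)$, where $X_j(q)$ counts positions $i$ at which $P(i), T(i+q) \in \Sigma$ and $h_j(P(i)) \neq h_j(T(i+q))$. For any $a \neq b$ in $\Sigma$ we have $\Pr[h(a) \neq h(b)] = \tfrac{1}{2}$, so $\mathbb{E}[X_j(q)] = \tfrac{1}{2}\Ham(T, P+q)$; a Chernoff bound together with a union bound over the $\O(m)$ shifts in the block shows that all estimates are within a $(1 \pm \varepsilon)$ factor with high probability, provided the constant in $r$ is large enough.

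Third, I would compute each $X_j(q)$ for all $q$ simultaneously by FFT. Encode the pattern by $P_j(i) = h_j(P(i))$ if $P(i) \in \Sigma$ and $0$ otherwise, together with an indicator $\widehat{P}(i) = [P(i) \in \Sigma]$, and similarly $T_j, \widehat{T}$ for the text. Since $h_j$ takes values in $\{0,1\}$, one has $[h_j(a) \neq h_j(b)] = (h_j(a) - h_j(b))^2 = h_j(a) + h_j(b) - 2h_j(a)h_j(b)$, and therefore
\[
X_j(q) \;=\; \sum_i P_j(i)\,\widehat{T}(i+q) \;+\; \sum_i \widehat{P}(i)\,T_j(i+q) \;-\; 2\sum_i P_j(i)\,T_j(i+q),
\]
where the wildcard positions vanish automatically because $P_j, T_j$ are zero there. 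Each of the three sums, as a function of $q$, is a cross-correlation computable in $\O(m \log m)$ time via FFT. Doing this for all $r$ hashes and all $\O(n/m)$ blocks gives total running time $\O((n/m) \cdot r \cdot m \log m) = \O((n/\varepsilon^2)\log^3 m)$, as claimed.

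The only non-routine point is the wildcard handling, since Karloff's original statement is for alphabets without wildcards; the indicator-polynomial trick above handles this without affecting the asymptotic running time. The rest is a direct instantiation of Karloff's algorithm, with the blocking step ensuring that the base of the logarithms in the FFT and Chernoff bound can be taken as $m$ rather than $n$.
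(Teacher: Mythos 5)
Your proposal is correct and follows the same overall strategy as the paper: binary random hashing (the heart of Karloff's method) plus FFT-based correlation, with wildcards made harmless by zeroing them out. The differences are presentational rather than substantive. The paper treats Karloff's correctness guarantee (that summing $r=(\log m/\varepsilon)^2$ mapped distances gives a $(1+\varepsilon)$-approximation) as a black box, and handles wildcards by extending each mapping $\mu_i$ to pass $\wild$ through unchanged, then invokes \cref{fact:sigman1d} to compute the mapped Hamming distances (that fact already handles wildcards and small alphabets). You instead inline both black boxes: you re-derive the estimator's concentration via Hoeffding (fully random $h_j:\Sigma\to\{0,1\}$, unbiasedness from $\Pr[h(a)\neq h(b)]=1/2$, then a Chernoff/Hoeffding bound for $[0,M]$-bounded variables and a union bound over shifts), and you inline the standard Fischer--Paterson indicator-polynomial identity $[h_j(a)\neq h_j(b)]=h_j(a)+h_j(b)-2h_j(a)h_j(b)$ so that wildcard positions contribute zero automatically. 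Your explicit blocking of the text into $\O(n/m)$ pieces of length $\O(m)$ is also correct and is implicit in the cited running time. Two small remarks: (1) you use more hash functions than strictly needed — Hoeffding with $r=\Theta(\varepsilon^{-2}\log m)$ already gives the required high-probability bound, so your $r=\Theta(\varepsilon^{-2}\log^2 m)$ wastes a $\log m$ factor, though it still fits within the claimed $\log^3 m$ budget; (2) your union bound should in principle range over all $\O(n)$ valid shifts (not just the $\O(m)$ in a single block), which costs $\Theta(\log n)$ rather than $\Theta(\log m)$ in the exponent; this is harmless whenever $n$ is polynomial in $m$ (as in the paper's use of the corollary), but it is a point where the paper's reliance on Karloff's deterministic construction is cleaner than a from-scratch randomized derivation.
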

\begin{proof}
Karloff's algorithm~\cite{Karloff1993} solves the text-to-pattern Hamming distances problem for $P,T$ in $\O((n/\varepsilon^2)  \log^3 m)$ time assuming that neither $P$ nor $T$ contains a wildcard. We show how to modify it slightly to allow for wildcards. 

The idea of Karloff's algorithm is as follows. For each $1 \le i \le r = (\log m/ \varepsilon)^2$, the algorithm considers a mapping $\mu_i : \Sigma \rightarrow \{0,1\}$. It then solves the text-to-pattern Hamming distances problem for $\mu_i(T) := \mu_i(T(0))\mu_i(T(1)) \ldots \mu_i(T(n-1))$ and $\mu_i(P) := \mu_i(P(0))\mu_i(P(1)) \ldots \mu_i(P(m-1))$. For $1 \le j \le n-m$, define $d_i(j) = \Ham(\mu_i(T), \mu_i(P)+j-1)$. The algorithm returns the value $d(j) = \sum_i d_i(j)$. Karloff showed that $\Ham(T, P+j-1) \le d(j) \le (1+\varepsilon) \cdot \Ham(T, P+j-1)$.

We modify the mappings slightly. For $a \in \Sigma \cup \{\wild\}$, define
$$
\tilde{\mu}_i(a) = 
\begin{cases}
\mu_i(a) &\text{ if } a \in \Sigma\\
\wild & \text{ if } a = \wild.
\end{cases}
$$
We then construct $\tilde{\mu_i}(T)$ and $\tilde{\mu_i}(P)$ by applying the mapping to the characters of the strings similarly to above, and compute the values $\tilde{d}_i(j) = \Ham(\tilde{\mu_i}(T), \tilde{\mu_i}(P)+j-1)$ and $\tilde{d}(j) = \sum_i d_i(j)$ using \cref{fact:sigman1d} in $\O(r \cdot n \log m) = \O((n/\varepsilon^2) \log^3 m)$ time. It is not difficult to see that wildcards do not contribute to the distances neither between $T$ and $P$, nor between $\tilde{\mu_i}(T)$ and $\tilde{\mu_i}(P)$. Therefore, $\Ham(T, P+j-1) \le \tilde{d}(j) \le (1+\varepsilon) \cdot \Ham(T, P+j-1)$, and the algorithm is correct. 
\end{proof}

We now show a simple embedding from two-dimensional strings to one-dimensional, which is essential for generalising \cref{fact:sigman1d} and \cref{cor:approx1d} to two dimensions.

\begin{claim}
\label{claim:padding}
Let $P$ and $T$ be non-empty two-dimensional strings of respective widths $w_P, w_T$ and heights~$h_P, h_T$. There are one-dimensional strings $\lambda(P)$ of size $w_T \cdot h_P$ and $\lambda(T)$ of size $w_T \cdot h_T$ that we can compute in $\O((w_P+w_T)(h_P+h_T))$ time such that to solve the text-to-pattern Hamming distances problem for $P$ and $T$ it is enough to solve it for $\lambda(P)$ and $\lambda(T)$. 
\end{claim}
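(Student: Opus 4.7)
The plan is to embed the $2$D strings into $1$D by a row-by-row linearisation with a shared row-width of $w_T$, padding undefined positions with $\wild$. After translating so that $\d{P} \subseteq [w_P] \times [h_P]$ and $\d{T} \subseteq [w_T] \times [h_T]$, for each $(x,y)$ in the respective bounding box I would set $\lambda(P)(y w_T + x)$ to $P(x,y)$ if $(x,y) \in \d{P}$ and to $\wild$ otherwise, so that $\lambda(P)$ has size $w_T h_P$; the analogous construction yields $\lambda(T)$ of size $w_T h_T$. Both strings are produced in $\O(w_T(h_P + h_T)) \subseteq \O((w_P + w_T)(h_P + h_T))$ time by scanning the two bounding rectangles and filling in stored point-character pairs.

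For correctness, the key observation will be that the $2$D point $(x,y)$ is encoded at $1$D position $y w_T + x$, so a $2$D shift $q = (q_x, q_y)$ is realised by the $1$D shift $s := q_y w_T + q_x$. For a $2$D-valid shift (namely $0 \le q_x \le w_T - w_P$ and $0 \le q_y \le h_T - h_P$) and any non-wildcard position $(x,y) \in \d{P}$ one has $x + q_x \le w_T - 1$, hence $y w_T + x + s = (y + q_y) w_T + (x + q_x)$ remains in the same $1$D row and is precisely the $1$D index of $(x + q_x, y + q_y)$ in $\lambda(T)$. Since wildcards contribute no mismatches by definition, $\Ham(T, P + q) = \Ham(\lambda(T), \lambda(P) + s)$, and from the $1$D text-to-pattern Hamming distances between $\lambda(P)$ and $\lambda(T)$ the $2$D answer at every valid $q$ is read off by querying the $1$D solution at $s = q_y w_T + q_x$.

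The main subtlety I expect is a minor boundary effect: the $1$D problem as defined admits shifts only in $[0, w_T(h_T - h_P)]$, which fails to include the $w_T - w_P$ valid $2$D shifts $(q_x, h_T - h_P)$ with $q_x \ge 1$, as these would need $s$ up to $w_T(h_T - h_P) + (w_T - w_P)$. This will be handled without affecting the stated sizes asymptotically, for instance by appending $w_T - w_P$ trailing wildcards to $\lambda(T)$ or, equivalently, by dropping the final wildcard tail of $\lambda(P)$'s last row; either adjustment alters a length only by an additive $\O(w_T)$ term and preserves the correspondence between the valid $2$D and $1$D shifts, completing the reduction.
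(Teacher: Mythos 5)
Your construction is the same embedding the paper uses: pad the pattern out to the text's width with wildcard columns and flatten both two-dimensional arrays so that a two-dimensional shift $(q_x,q_y)$ becomes the one-dimensional shift $q_y w_T + q_x$; your row-major convention with stride $w_T$ is the one that actually produces $\lambda(P)$ of the stated size $w_T\cdot h_P$. Your boundary-effect observation is a genuine and correct catch the paper glosses over: a one-dimensional string of length exactly $w_T h_P$ inside one of length $w_T h_T$ only supports offsets up to $w_T(h_T-h_P)$, so the alignments $(q_x, h_T-h_P)$ with $q_x>0$ are lost, and either trimming the last $w_T-w_P$ wildcards of $\lambda(P)$ or appending them to $\lambda(T)$ is needed — which means the sizes in the claim can only be met up to an additive $\O(w_T)$ slack, harmless for all downstream uses (Corollaries~\ref{cor:sigman2d} and~\ref{cor:approx2d} only need $\O((w_P+w_T)(h_P+h_T))$).
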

\begin{proof}
We first construct a two-dimensional string $P'$ such that for all $\min \X(\d{P}) \le i \le \max \X(\d{P})+(w_T-w_P)$ and $\min \Y(\d{P}) \le j \le \max \Y(\d{P})$ there is
$$
P'(i,j) = 
\begin{cases}
P(i,j) & \text{ if } (i,j) \in \d{P}\\
\wild & \text{ otherwise.}
\end{cases}
$$
Intuitively, we inscribe $P$ into a rectangle, then add $w_T-w_P$ columns to it, and finally fill all non-defined character values with wildcards. Note that for all $q \in \Z^2$ there is $\Ham(T,P+q) = \Ham(T,P'+q)$. We linearise $T$ to obtain a one-dimensional string $\bar{T}$ of size $w_T \cdot h_T$ and $P'$ to obtain a one-dimensional string $\bar{P'}$ of length $w_T \cdot h_P$. The strings $\bar{T}$ and $\bar{P'}$ can be computed in $\O((w_P+w_T)(h_P+h_T))$ time. Furthermore, we have:
$$\Ham(T, P+q) = \Ham(T,P'+q) = \Ham(\bar{T}(\x{q} \cdot h_T + \y{q}) \ldots \bar{T}(\x{q} \cdot h_T + \y{q}+w_T \times h_P-1), \bar{P'})$$
The claim follows by defining $\lambda(P) = \bar{P'}$ and $\lambda(T) = \bar{T}$.  
\end{proof}

\begin{corollary}\label{cor:sigman2d}
Given two non-empty two-dimensional strings $P$ and $T$ of widths $w_P, w_T$ and heights~$h_P, h_T$, we can solve the text-to-pattern Hamming distances problem in time $\O(\sigma \cdot N \log N)$, where $\sigma$ is the number of different characters present both in $P$ and $T$ and $N = (w_P + w_T)(h_P + h_T)$.
\end{corollary}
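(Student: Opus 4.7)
The plan is to reduce the two-dimensional problem to a one-dimensional one via \Cref{claim:padding}, and then apply the one-dimensional algorithm of \Cref{fact:sigman1d}. Concretely, I would first compute, in $\O(N)$ time, the strings $\lambda(P)$ of size $w_T \cdot h_P$ and $\lambda(T)$ of size $w_T \cdot h_T$, as guaranteed by \Cref{claim:padding}. By the claim, every value $\Ham(T, P+q)$ that we need to output equals the Hamming distance of $\lambda(P)$ aligned at the corresponding position of $\lambda(T)$, so it suffices to solve a single one-dimensional text-to-pattern Hamming distances instance on $\lambda(P)$ and $\lambda(T)$.

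Next, I would invoke \Cref{fact:sigman1d} on $\lambda(P)$ and $\lambda(T)$. Both strings have size at most $w_T(h_P+h_T) \le (w_P+w_T)(h_P+h_T)=N$, so the fact yields total running time $\O(\sigma' \cdot N \log N)$, where $\sigma'$ is the number of distinct non-wildcard characters appearing in $\lambda(P)$ and $\lambda(T)$. Since the only characters introduced by the embedding $\lambda$ are wildcards, the non-wildcard alphabet of $\lambda(P)$ (resp.\ $\lambda(T)$) coincides with the set of characters actually present in $P$ (resp.\ $T$), and so $\sigma' = \sigma$. Reading off the answers in the right positions of the one-dimensional output, as dictated by \Cref{claim:padding}, costs an additional $\O(N)$ time. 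Adding the $\O(N)$ preprocessing for the padding, the total running time is $\O(\sigma \cdot N \log N)$, as claimed.

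The only point requiring a small sanity check is that \Cref{fact:sigman1d} is stated to allow wildcards and that wildcards contribute zero to the Hamming distance; this is consistent with the definition used in \Cref{claim:padding}, where the padded positions of $P'$ are wildcards precisely so that $\Ham(T,P+q)=\Ham(T,P'+q)$. Thus no obstacle arises from the reduction: the proof is essentially a direct composition of \Cref{claim:padding} and \Cref{fact:sigman1d}, with the bound $w_T h_T,\, w_T h_P \le N$ absorbed into the $\log N$ factor.
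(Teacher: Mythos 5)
Your proposal follows the same route as the paper — compose \Cref{claim:padding} with \Cref{fact:sigman1d} — but it omits one step the paper performs first: replacing every character of $T$ that does not occur in $P$ with a single fresh character, in $\O(w_T h_T)$ time, before linearising. The reason this step is there is a mismatch between the two $\sigma$'s. The corollary's $\sigma$ is explicitly the number of characters present \emph{in both} $P$ and $T$ (i.e.\ the intersection of their alphabets), whereas \Cref{fact:sigman1d} is stated in terms of the number of distinct characters present in $P$ and $T$ — read literally, the union. Your sentence asserting $\sigma' = \sigma$ silently identifies these two quantities, but they need not coincide: if $T$ carries many characters that $P$ lacks (or vice versa), the union can vastly exceed the intersection, and the naive application of \Cref{fact:sigman1d} would charge one FFT per character in the union. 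The paper's replacement collapses all of $T$'s characters foreign to $P$ into a single one, so that (up to one extra symbol) every surviving character of the modified $T$ also lies in $P$, bringing the alphabet-size parameter fed into \Cref{fact:sigman1d} down to $\O(\sigma + 1)$.

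It is true that the underlying FFT-per-character algorithm can be made to iterate only over characters appearing in both strings (a character present in only one contributes nothing to the match count), and under that reading of \Cref{fact:sigman1d} your proof and the paper's both go through without the replacement. But since you invoke \Cref{fact:sigman1d} as a black box with its stated $\sigma$, you should either add the paper's replacement step or explicitly justify that the FFT can be restricted to the common alphabet; as written, the step "$\sigma' = \sigma$" is a gap.
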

\begin{proof}
We replace all characters of $T$ non-present in $P$ with a new character in $\O(w_T\times h_T)$ time. The Hamming distances do not change. The claim then follows immediately from \cref{fact:sigman1d} and \cref{claim:padding}. 
\end{proof}

\begin{corollary}\label{cor:approx2d}
Let $\varepsilon > 0$ be a constant. Given two non-empty two-dimensional strings $P$ and $T$ of widths $w_P, w_T$ and heights $h_P, h_T$, there is a $(1+\varepsilon)$-approximation algorithm that solves the text-to-pattern Hamming distances problem in time $\O((N/\varepsilon^2)  \log^3 N)$, where $N = (w_P + w_T)(h_P + h_T)$.
\end{corollary}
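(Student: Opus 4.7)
The plan is to mirror the proof of \cref{cor:sigman2d}: reduce the two-dimensional text-to-pattern Hamming distances problem to a one-dimensional instance via \cref{claim:padding}, and then invoke the one-dimensional $(1+\varepsilon)$-approximation algorithm of \cref{cor:approx1d} instead of the exact FFT-based algorithm of \cref{fact:sigman1d}.

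First I would apply \cref{claim:padding} to obtain, in $\O((w_P+w_T)(h_P+h_T)) = \O(N)$ time, one-dimensional strings $\lambda(P)$ and $\lambda(T)$ of sizes $w_T \cdot h_P$ and $w_T \cdot h_T$, respectively. By the claim, solving the 1D text-to-pattern Hamming distances problem on $\lambda(T)$ and $\lambda(P)$ yields all the required 2D distances $\Ham(T, P+q)$ as values at specific 1D alignments. Since the reduction preserves exact distances, it also preserves $(1+\varepsilon)$-approximations: any value in $[d_q,(1+\varepsilon) d_q]$ returned by the 1D subroutine at the appropriate position is a valid $(1+\varepsilon)$-approximation of $d_q = \Ham(T, P+q)$.

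Next I would run \cref{cor:approx1d} on $\lambda(T)$ and $\lambda(P)$. This is legitimate because that corollary is stated for inputs that may contain wildcards, which is exactly what the padding step produces. Its running time on these inputs is $\O((|\lambda(T)|/\varepsilon^2) \log^3 |\lambda(P)|)$, and since both $|\lambda(T)| = w_T h_T$ and $|\lambda(P)| = w_T h_P$ are at most $N = (w_P+w_T)(h_P+h_T)$, this bound is $\O((N/\varepsilon^2) \log^3 N)$, which also subsumes the $\O(N)$ cost of computing $\lambda(P)$ and $\lambda(T)$.

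There is no real obstacle here: the result is essentially an application of \cref{claim:padding} composed with \cref{cor:approx1d}, completely analogous to how \cref{cor:sigman2d} is obtained from \cref{claim:padding} and \cref{fact:sigman1d}. The only points worth checking in the writeup are that wildcards in $\lambda(P), \lambda(T)$ are handled correctly by \cref{cor:approx1d} (they are, by construction of that corollary) and that the logarithmic factor can be taken with respect to $N$ rather than just the pattern length, which follows immediately from $w_T h_P \le N$.
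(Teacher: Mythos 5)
Your proposal is correct and matches the paper's proof exactly: the paper simply states that the corollary ``follows from \cref{cor:approx1d} and \cref{claim:padding},'' which is precisely the reduction-then-approximate composition you spell out. Your extra remarks on wildcard handling and on bounding $\log^3 |\lambda(P)|$ by $\log^3 N$ are the right sanity checks, but nothing in them departs from the paper's route.
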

\begin{proof}
Follows from \cref{cor:approx1d} and \cref{claim:padding}. 
\end{proof}


\section{Technical overview}
In this section, we present the technical ideas behind our main result:

\def\mainThmTitle{Formal version}
\def\mainThmContent{%
Given two-dimensional strings $P$ with $\d{P} = [m] \times [m]$ and $T$ with $\d{T} = [n] \times [n]$, where $m,n\in \Z^+$ and $m \le n$. There is an algorithm that solves the $k$-mismatch problem for $P,T$ in $\tO((m^2 + mk^{5/4})n^2 / m^2)$ time.%
}

\restateThmMain*

Assume w.l.o.g.~that $2|n$ and $n \le \frac{3}{2}m$. If this is not the case, one can cover $T$ with $\O(n^2/m^2)$ strings $T_i$ such that $\d{T_i} = [n'] \times [n'] + q$ for $n' \le \frac{3}{2}m$, $2|n'$, and $q \in \Z^2$ and solve the problem for $P$ and each $T_i$ independently, which will result in $\O(n^2/m^2)$ extra multiplicative factor in the time complexity. Hence, below we assume that the two conditions are satisfied and aim to design an $\tO(m^2 + mk^{5/4})$-time algorithm. 

We start by applying \cref{cor:approx2d} to $P$, $T$, and $\varepsilon = 1$ and obtain in $\O(n^2 \log^3 n)$ time a set $Q \subseteq \Z^2$ such that both of the following hold:
\begin{enumerate}
\item For all $q \in \Z^2$ such that $\Ham(T, P+q) \le k$, we have $q \in Q$;
\item For all $q \in Q$, we have $\Ham(T, P+q) \le 2 k$.
\end{enumerate}

If $\absolute{Q} \le 8m + m^2/k$, we apply \cref{kangaroos} to compute the true value of the Hamming distance for all $q \in Q$, which takes $\O(mk+m^2)$ total time, completing the proof of \cref{th:main}. Below we assume $\absolute{Q} > 8m + m^2/k$. 

The idea of the algorithm is to partition the pattern $P$ and the text $T$ into strings of regular form each containing a single character. Intuitively, computing the Hamming distance between two such strings is easy: if the characters are equal, the distance is zero, and otherwise, it is the size of their intersection.

\subsection{Two-dimensional periodicity.}
We start by defining two-dimensional periods.


\begin{restatable*}[Two-dimensional approximate period]{definition}{restateDefTwoDApproxPeriod}
We say that a string $S$ has an \emph{$\ell$-period} $\delta \in \Z^2$ when $\Ham(S + \delta, S) \le \ell$.
\end{restatable*}

\begin{restatable*}{claim}{restateClaimPeriodicity} \label{periodicity_lemma}
For all $u, v \in Q$, we have that $u - v \in \Z^2$ is a $\O(k)$-period of $P$.
\end{restatable*}	
\begin{proof}
$\Ham(P + u - v, P) = \Ham(P + u, P + v) \le \Ham(P + u, T) + \Ham(T,P + v) = \O(k)$.
\end{proof}

As $Q$ is big, it generates two $\O(k)$-periods with a big angle between them and such that they and their negations belong to distinct quadrants of $\Z^2$: 

\begin{restatable*}{@theorem}{restateThmGetPeriods}\label{get_periods}
Given $\ell \in \Z^+$ and $U \subseteq [\ell + 1]^2$ of size $> 16\ell$. There is an $\tO(\absolute{U})$-time algorithm that finds $u, v, u', v' \in U$, such that the following conditions hold for $\psi = u - v$ and $\phi = u' - v'$:
	\begin{itemize}
		\item $\psi \in (0, +\infty) \times [0, +\infty)$, $\phi \in [0, +\infty) \times (-\infty, 0)$;
		\item $0 < \absolute{\psi}\absolute{\phi} = \O(\ell^2 / \absolute{U})$,
		\item $\sin \alpha \ge \frac{1}{2}$, where $\alpha$ is the angle between $\psi$ and $\phi$.
	\end{itemize}
\end{restatable*}

Let $\ell = n - m \le m / 2$. We apply \Cref{get_periods} on $\ell$ and the set $Q$ (the conditions of the claim are satisfied as $\absolute{Q} > 8m + m^2/k \ge 16\ell$) and compute $\phi, \psi \in \Z^2$ in $\tO(\absolute{Q})$ time. By \Cref{periodicity_lemma}, $\phi$ and $\psi$ are $\O(k)$-periods of $P$, and $0 \le \phi \times \psi \le \absolute{\phi}\absolute{\psi} = \O(\ell^2 / \absolute{Q}) =  \O(\min\set{m, k})$. We fix $\phi$ and $\psi$ for the rest of the paper.

\subsection{Partitioning of the pattern and the text.}
Next, we define tiles, which will be used to partition $P$ and $T$ into one-character strings of regular form.

\begin{restatable*}[Lattice congruency]{definition}{restateDefLatticeCongruency}\label{lattice_congruency}
We define \emph{a lattice} $\L = \set{s\phi + t\psi : s, t \in \Z}$. We say that $u, v \in \Z^2$ are \emph{congruent}, denoted $u \equiv v$, if $u - v \in \L$. 
\end{restatable*}

\begin{figure}
	\begin{center}
		\includegraphics[width=0.8\textwidth]{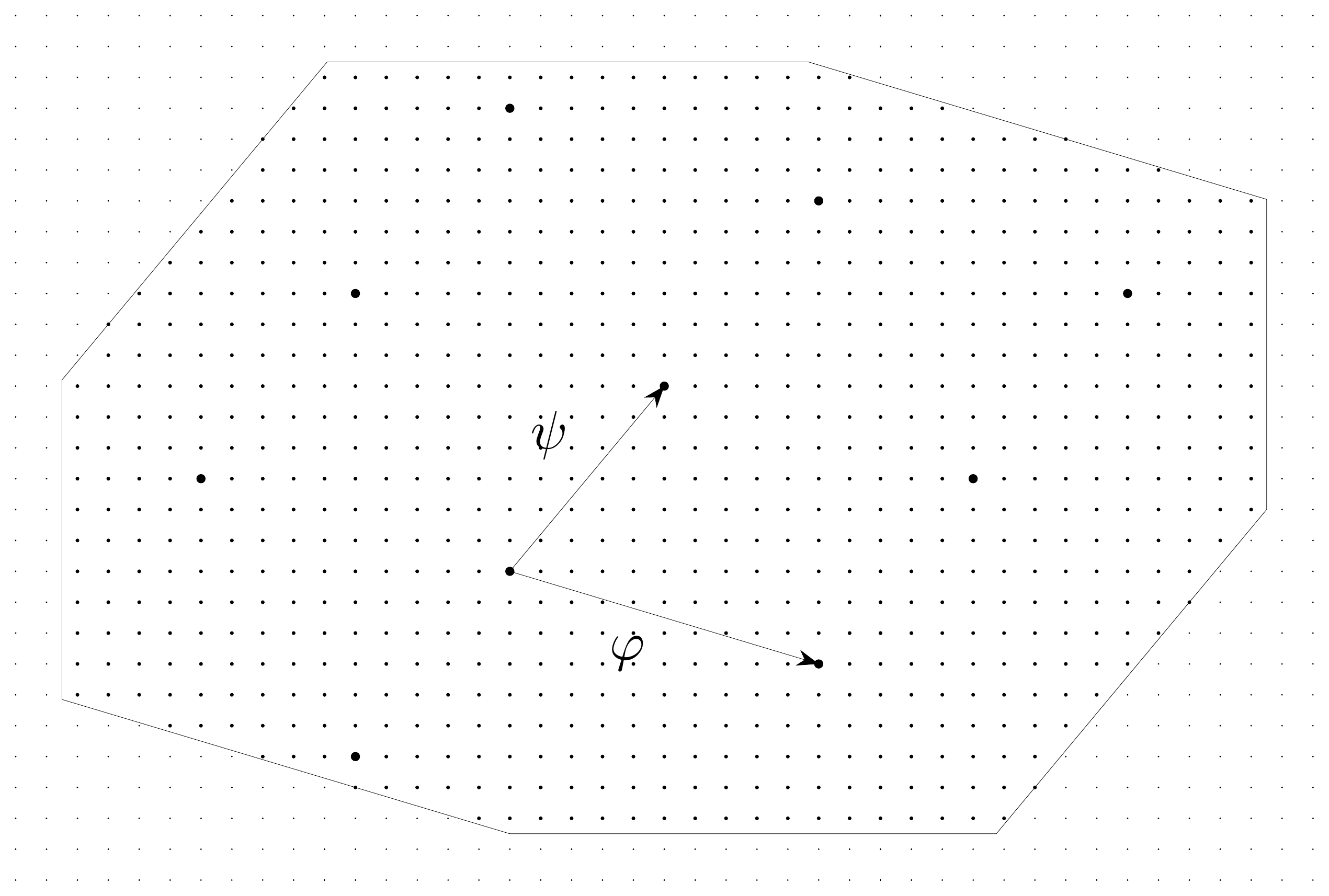}
	\end{center}
	\caption{All the points in the polygon form a truncated tile and the thicker points form a truncated subtile.}
	\label{figure:tile}
\end{figure}

\begin{restatable*}[Tile]{definition}{restateDefTile}\label{tile_definition}
We call $U \subseteq \Z^2$ a \emph{tile} if there exist $\phi_0, \phi_1, \psi_0, \psi_1 \in \Z$, such that
	\[ U = \set{u : u \in \Z^2, \h{u} \in [\phi_0, \phi_1], \s{u} \in [\psi_0, \psi_1]}. \]
We call the values $\phi_0, \phi_1, \psi_0, \psi_1$ defining $U$ its \emph{signature}. 
\end{restatable*}

Geometrically, a tile $U$ can be viewed as a set of integer points inside a parallelogram with sides parallel to $\phi$ and $\psi$. Truncated tiles are a generalisation of simple tiles. 

\begin{restatable*}[Truncated tile]{definition}{restateDefTruncatedTile}
We call $U \subseteq \Z^2$ a \emph{truncated tile} if there exist values $x_0, x_1, y_0, y_1 \in \Z \cup \{-\infty, +\infty\}$ and $\phi_0, \phi_1, \psi_0, \psi_1 \in \Z$ such that 
	\begin{enumerate}
		\item $U = [x_0, x_1] \times [y_0, y_1] \cap \set{u : u \in \Z^2, \h{u} \in [\phi_0, \phi_1], \s{u} \in [\psi_0, \psi_1]}$. 
		\item $x_1 - x_0 + 1 \ge \absolute{\x{\phi}} + \absolute{\x{\psi}}$ and $y_1 - y_0 + 1 \ge \absolute{\y{\phi}} + \absolute{\y{\psi}}$. 
	\end{enumerate}
The values $x_0, x_1, y_0, y_1, \phi_0, \phi_1, \psi_0, \psi_1$ defining $U$ are called its \emph{signature}. 
\end{restatable*}

Geometrically, the first property of a truncated tile $U$ means that it is a set of integer points in the intersection of an axis-parallel rectangle (possibly infinite) and a parallelogram with sides parallel to $\phi$ and $\psi$. The second property implies that the axis-parallel rectangle defined by $x_0, x_1, y_0, y_1$ contains a parallelogram spanned by $\phi$ and $\psi$. It will become clear later why this property is important. 
Note that every tile is a truncated tile.

\begin{restatable*}[Subtile]{definition}{restateDefSubtile}\label{subtile_definition}
We call a set $V \subseteq \Z^2$ a \emph{(truncated) subtile} if there exists a (truncated) tile $U$ and $\gamma \in \Z^2$ such that $V = \set{u : u \in U, u \equiv \gamma}$. A signature of $V$ consists of the signature of $U$ and $\gamma$. Abusing notation, we write $V \equiv v$ for $v \in \Z^2$ if $v \equiv \gamma$ and for two (truncated) subtiles $U,V$ we write $U \equiv V$ if there exists $v \in \Z^2$ such that $U \equiv v$ and $V \equiv v$.
\end{restatable*}

See \Cref{figure:tile} for an illustration. Note that every subtile is a truncated subtile.

\begin{restatable*}[Tile string]{definition}{restateDefTileString}\label{tile_string_definition}
We call a string $S$ a \emph{(truncated) (sub-)tile string} if $\d{S}$ is a (truncated) (sub-)tile.
\end{restatable*}

\begin{restatable*}[Monochromatic string]{definition}{restateDefMonochromaticString}
A string $S$ is called \emph{monochromatic} if there is a character $\getchar{S} \in \Sigma$ such that for every $u \in \d{S}$ there is $S(u) = \getchar{S}$.  
\end{restatable*}

\begin{restatable*}{@theorem}{restateThmTileDecomposition}\label{tile_decomposition}
Assume to be given a (truncated) tile string $R$ such that $\phi, \psi$ are its $\O(k)$-periods. After $\O(m^2)$-time preprocessing of $\phi, \psi$, the string $R$ can be partitioned in time $\tO(\absolute{\d{R}} + k)$ into $\O(k)$ monochromatic (truncated) subtile strings.
\end{restatable*}

Since $\absolute{\x{\phi}}, \absolute{\y{\phi}}, \absolute{\x{\psi}}, \absolute{\y{\psi}} \le n - m \le m / 2$, the string $P$ is a truncated tile string, and we can apply \Cref{tile_decomposition} to partition it in time $\tO(m^2 + k)$ into a set of monochromatic truncated subtile strings $\V$. We then group the strings in $\V$ based on the single character they contain. Specifically, for every character $a \in \Sigma$ present in $P$, we construct a set $\V_a = \set{V : V \in \V, \getchar{V} = a}$.

Because $T$ is not necessarily periodic, we need to use a similar but more sophisticated partitioning.

\begin{restatable*}[Active text]{definition}{restateDefActiveText}
We define the \emph{active text} $\Ta$ as the restriction of $T$ to $\bigcup_{q \in Q} \d{P + q}$. 
\end{restatable*}

\begin{restatable*}{observation}{restateObsActiveText}\label{obs:active_text}
$\Ham(P + q, T) = \Ham(P + q, \Ta)$ for every $q \in Q$.
\end{restatable*}

\begin{restatable*}[Peripherality]{definition}{restateDefPeripherality}
For every point $u \in \Z^2$, we define its \emph{border distance} as $\min\set{\absolute{u - v} : v \in \Z^2 \setminus \d{\Ta}}$. A set $U \subseteq \Z^2$ is $d$-\emph{peripheral} for some $d \ge 0$, if the border distance of every $u \in U$ is at most $d$. We say that a string $S$ is $d$-peripheral when $\d{S}$ is $d$-peripheral.
\end{restatable*}

\begin{restatable*}{@theorem}{restateThmTextDecomposition}\label{text_decomposition}
Given $\ell \in \Z^+$, one can partition $\Ta$ in time $\tO(m^2 + \ell^2 + \ell k)$ into $\O(\ell k)$ monochromatic subtile strings and an $\O(m / \ell)$-peripheral string.
\end{restatable*}

\subsection{Algorithm.}
In \cref{sec:algorithm}, we show that the Hamming distances between $P$ and a string that can be partitioned into a set of monochromatic subtile strings can be computed efficiently: 

\begin{restatable*}{@theorem}{restateThmSparseAlgo}
\label{th:sparse_algo}
Let $\S$ be a set of monochromatic subtile strings $\S$ with a property that the domains $\d{S}$ for $S \in \S$ are pairwise disjoint subsets of $\d{T}$. There is an algorithm that computes
$\sum_{S \in \S} \Ham(P + q, S)$ for every $q \in Q$ in total time $\tO(m^2 + \sum_{S \in \S} \absolute{\V_{\getchar{S}}})$.
\end{restatable*}

As an immediate corollary, we obtain an $\tO(m^2 + mk^2)$-time solution for the $k$-mismatch problem. First, we apply \Cref{text_decomposition} for a large enough value $\ell = \Theta(m)$ to partition the active text $\Ta$ into a set $\S$ of $\O(mk)$ monochromatic subtile strings and a $0$-peripheral string, empty by definition, in time $\tO(m^2 + mk)$. For every $q \in Q$ we then have $\Ham(P + q, \Ta) = \sum_{S \in \S} \Ham(P + q, S)$, and these values can be computed in $\tO(m^2 + mk^{2})$ total time by \cref{th:sparse_algo}, since $\sum_{S \in \S} \absolute{\V_{\getchar{S}}} \le \absolute{\S} \absolute{\V} = \O(mk \cdot k)$. 

To improve the complexity further, we partition the active text using the algorithm from \Cref{text_decomposition} with $\ell = mk^{-3/4}$.
We obtain a set~$\S$ of $\O(mk^{1/4})$ monochromatic subtile strings $\S$, and an $\O(k^{3 / 4})$-peripheral string $F$. For every $q \in Q$ we then have

\[ \Ham(P + q, \Ta) = \Ham(P + q, F) + \sum_{S \in \S} \Ham(P + q, S).\]

By \cref{th:sparse_algo}, we can compute $\sum_{S \in \S} \Ham(P + q, S)$ for every $q \in Q$ in time $\tO(m^2 + mk^{5/4})$, since similarly to above we have $\sum_{S \in \S} \absolute{\V_{\getchar{S}}} \le \absolute{\S} \absolute{\V} = \O(mk^{5/4})$. To compute the values $\Ham(P + q, F)$, we apply the following result:

\begin{restatable*}{@theorem}{restateThmDenseAlgo}
\label{th:dense_algo}
Given a $d$-peripheral string $F$, there is an algorithm that computes $\Ham(P + q, F)$ for every $q \in Q$ in total time $\tO(m^2 + mdk^{1/2})$.
\end{restatable*}

By substituting $d = \O(k^{3/4})$, we obtain the desired time complexity of $\tO(m^2 + mk^{5/4})$ and hence complete the proof of \cref{th:main}.
In \cref{sec:two_periods,sec:partition,sec:algorithm}, we explain the proof in full detail, including the proofs of the main technical results stated in \cref{get_periods,tile_decomposition,text_decomposition,th:sparse_algo,th:dense_algo}.

\section{Two-dimensional periodicity}
\label{sec:two_periods}

Now we provide a full proof of the main result, which is restated below. In this section, we show that we can easily identify a superset of the $k$-mismatch occurrences of the pattern, where every false positive is still a $2k$-mismatch occurrence. If this superset is small, then we can easily filter out the false positives and thus solve the problem. Otherwise, we show that the pattern is close to being highly periodic, and we can find two approximate periods that have additional useful properties.

\restateThmMain*

Assume w.l.o.g.~that $2|n$ and $n \le \frac{3}{2}m$. If this is not the case, one can cover $T$ with $\O(n^2/m^2)$ strings $T_i$ such that $\d{T_i} = [n'] \times [n'] + q$ for $n' \le \frac{3}{2}m$, $2|n'$, and $q \in \Z^2$ and solve the problem for $P$ and each $T_i$ independently, which will result in $\O(n^2/m^2)$ extra multiplicative factor in the time complexity. Hence, below we assume that the two conditions are satisfied and aim to design an $\tO(m^2 + mk^{5/4})$-time algorithm. 

We start by applying \cref{cor:approx2d} to $P$, $T$, and $\varepsilon = 1$ and obtain in $\O(n^2 \log^3 n)$ time a set $Q \subseteq \Z^2$ such that both of the following hold:
\begin{enumerate}
\item For all $q \in \Z^2$ such that $\Ham(T, P+q) \le k$, we have $q \in Q$;
\item For all $q \in Q$, we have $\Ham(T, P+q) \le 2 k$.
\end{enumerate}

If $\absolute{Q} \le 8m + m^2/k$, we apply \cref{kangaroos} to compute the true value of the Hamming distance for all $q \in Q$, which takes $\O(mk+m^2)$ total time, completing the proof of \cref{th:main}. Below we assume $\absolute{Q} > 8m + m^2/k$. 

The idea of the algorithm is to partition the pattern $P$ and the text $T$ into strings of regular form each containing a single character. Intuitively, computing the Hamming distance between two such strings is easy: if the characters are equal, the distance is zero, and otherwise, it is the size of their intersection.
In order to achieve the desired time complexity, it is crucial that the number of regular strings that we partition $P$ and $S$ into is sufficiently small.
We achieve this by showing that the pattern is approximately periodic according to the following definition.

\restateDefTwoDApproxPeriod
\restateClaimPeriodicity
\begin{proof}
$\Ham(P + u - v, P) = \Ham(P + u, P + v) \le \Ham(P + u, T) + \Ham(T,P + v) = \O(k)$.
\end{proof}

As $Q$ is big, it generates many $\O(k)$-periods of $P$. Below, we show that there must be two $\O(k)$-periods belonging to neighbouring quadrants of $\Z^2$, and such that the angle between them is between $30$ and $150$ degrees.

\restateThmGetPeriods
\begin{proof}
We start by finding the closest pair of points in $U$.
Specifically, we select any pair of different points $s, t \in U$ that minimizes $\absolute{t - s}$. 
Such a pair can be obtained in $\O(\absolute{U} \log \absolute{U})$ time~\cite{Clarkson1983}.
We construct $w = t - s$.
Denote the quadrants of the plane as follows:%
\begin{align*}
&\Q_1 = (0,+\infty) \times [0,+\infty), \quad & \Q_2 = (-\infty,0] \times (0,+\infty),\\
&\Q_3 = (-\infty,0) \times (-\infty,0], \quad & \Q_4 = [0,+\infty) \times (-\infty,0).
\end{align*}
We will compute a vector $w' = t' - s'$ with $s', t' \in U$ and the following properties. If $w \in \Q_i$, then $w' \in \Q_{(i \bmod 4) + 1}$, i.e., $w'$ is in the counterclockwise neighbouring quadrant of $w$. Further, it holds $0 < \absolute{w}\absolute{w'} = \O(\ell^2 / \absolute{U})$, and $\sin \alpha \ge \frac{1}{2}$, where $\alpha$ is the angle between $w$ and $w'$. Depending on $i$, we then report $\psi$ and $\phi$ as follows. If $i = 1$, then $\psi = w$ and $\phi = -w'$. If $i = 2$, then $\psi = -w'$ and $\phi = -w$. If $i = 3$, then $\psi = -w$ and $\phi = w'$. If $i = 4$, then $\psi = w'$ and $\phi = w$. Clearly, this satisfies the claimed properties, and we can easily output the corresponding nodes $u,v,u',v'$.

Assume w.l.o.g.\ that $w \in \Q_1$ (the other cases are entirely symmetric). 
We extend $\Q_1$ to $\Q_1^+$ by adding the neighbouring $30$ degrees of $\Q_2$ and $\Q_4$, and we restrict $\Q_2$ to $\Q_2^-$ by limiting it to points strictly contained in its central 30 degrees, formally

\begin{center}
\def\fontscaler{.8}
\def\imagescaler{.55}
\let\oldrotatebox\rotatebox
\renewcommand{\rotatebox}[2]{\oldrotatebox{#1}{\scalebox{\fontscaler}{#2}}}
\begin{tikzpicture}[x=\imagescaler em, y=\imagescaler em] 

\pgfmathparse{10/sqrt(3)}
\edef\line{\pgfmathresult}

\draw[thick] (-10, \line) to (10, -\line);
\draw[thick] (-\line, 10) to (\line, -10);

\fill[gray!30!white] (-10,-10) rectangle (10,10);
\fill[gray!10!white] (-10, \line) to (-10, 10) to (-\line, 10) to (0,0) to (-10, \line);
\fill[gray!10!white] (10, -\line) to (10, -10) to (\line, -10) to (0,0) to (10, -\line);

\path (10,10) node[below left] {$\Q_1^+$};
\path (-10,10) node[below right] {$\Q_2^-$};
\path (-10,-10) node[above right] {$\Q_3^+$};
\path (10,-10) node[above left] {$\Q_4^-$};

\draw[thick, blue] (-10, \line) to (0,0);
\draw[thick, blue] (-\line, 10) to (0,0);
\draw[thick, blue] (0,0) to (10, -\line);
\draw[thick, blue] (0,0) to (\line, -10);

\node[circle, fill=white, inner sep=1pt] at (0,0) {};
\draw[densely dotted, -Latex] (0,-10) to node[pos=1, below right] {$y$} (0,+10);
\draw[densely dotted, -Latex] (-10,0) to node[pos=1, above left] {$x$} (+10,0);

\path (10, -\line) to node[fill=gray!30!white, pos=1, above right=0 and .2em, rotate=-30, anchor=south west] {\scalebox{\fontscaler}{\enskip $y = -x/\sqrt{3}$}} (0,0);

\path (\line, -10) to node[fill=gray!30!white, pos=1, below right=.2em and 0, rotate=-60, anchor=north west] {\scalebox{\fontscaler}{\enskip $y = -x \cdot \sqrt{3}$}} (0,0);

\draw[gray] (-7,0) to[out=90, in=180] %
node[pos=.0] (t0) {}
node[pos=.03333] (t1) {}
node[pos=.33333] (t2) {} 
node[pos=.36667] (t3) {} 
node[pos=.66667] (t4) {}
node[pos=.7] (t5) {}
node[pos=.1666666] (l1) {}
node[pos=.5] (l2) {}
node[pos=.8333333] (l3) {}
(0,7);

\draw[gray, -Latex] (t1.center) to (t0.center);
\draw[gray, -Latex] (t3.center) to (t2.center);
\draw[gray, -Latex] (t5.center) to (t4.center);

\node[gray] at (l1) {\rotatebox{-15}{\rlap{\ $30^{\circ}$}}};
\node[gray] at (l2) {\rotatebox{-45}{\rlap{\ $30^{\circ}$}}};
\node[gray] at (l3) {\rotatebox{-75}{\rlap{\ $30^{\circ}$}}};

\draw (-10,0) node[above left=.25em and 1em] {%
$\Q_1^+ = \{ (x,y) \in \mathbb Z^2{\ \setminus \{(0,0)\}} \mid y \geq -x / \sqrt{3}\text{ and }y \geq -x \cdot \sqrt{3} \},$%
};

\draw (-10,0) node[below left=.25em and 1em] {%
$\Q_2^- = \{ (x,y) \in \mathbb Z^2\phantom{\ \setminus \{(0,0)\}} \mid y > -x / \sqrt{3}\text{ and }y < -x \cdot \sqrt{3}\}.$%
};

\node[inner sep=0pt] (w) at (2,5) {};
\draw[-Latex] (0,0) to (w) node[above right] {$w$};

\end{tikzpicture}
\end{center}

%

%
Analogously, we obtain the extended $\Q_3$ as $\Q_3^+ = \{ z \in \mathbb Z^2 \mid -z \in \Q_1^+ \}$, and the restricted $\Q_4$ as $\Q_4^- = \{ z \in \mathbb Z^2 \mid -z \in \Q_2^- \}$. The angle between a vector in $\Q_1$ (not extended) and a vector in $\Q_2^-$ (restricted) is clearly at least $30$ degrees and at most $150$ degrees. We will find $w'$ in $\Q_2^-$, and thus the angle $\alpha$ between $w$ and $w'$ will satisfy $\sin(\alpha) \geq \frac12$ by design.

We define two relations $<_1$ and $<_2$ over $U$. For $u, v \in U$, it holds $u <_1 v$ if and only if $(v - u) \in \Q_1^+$, and $u <_2 v$ if and only if $(v - u) \in \Q_2^-$. In a moment, we will show that both $<_1$ and $<_2$ are strict partial orders. 
Note that two distinct points $u,v \in U$ are comparable with $<_1$ if and only if $(v - u) \in \Q_1^+ \cup \Q_3^+$, and they are comparable with $<_2$ if and only if $(v - u) \in \Q_2^- \cup \Q_4^-$. 
By the definition of the sets, it is clear that $\Q_1^+ \cup \Q_3^+$ and $\Q_2^- \cup \Q_4^-$ form a bipartition of $\mathbb Z^2 \setminus \{(0,0)\}$.
Thus, it is clear that every antichain with respect to $<_1$ is a chain with respect to $<_2$ and vice versa.

Now we show that $<_1$ is a strict partial order. The proof for $<_2$ works analogously. First, note that $(0,0) \notin \Q_1^+$, and thus it cannot be that $u <_1 u$, i.e., the relation is \emph{irreflexive}. It is easy to see that the intersection of $\Q_1^+$ and $\Q_3^+$ is empty. This means that, if $u <_1 v$, then it cannot be that $v <_1 u$, i.e., the relation is \emph{asymmetric}. Finally, given $u <_1 v$ and $v <_1 v'$, it holds $(v.y - u.y) \geq -(v.x-u.x)/\sqrt{3}$ and $(v'.y - v.y) \geq -(v'.x-v.x)/\sqrt{3}$ by the definition of $\Q_1^+$. By summing the inequalities, we get $(v'.y - u.y) \geq -(v'.x-u.x)/\sqrt{3}$. The same reasoning applies to the second inequality of the definition. Therefore, it holds $(v' - u) \in \Q_1^+$ and hence $u <_1 v'$, i.e., the relation is \emph{transitive}.

We have shown that $<_1$ and $<_2$ are strict partial orders, with a one-to-one correspondence between antichains of $<_1$ and chains of $<_2$.
Let $C$ be the longest chain of $<_1$, and let $A$ be the longest antichain of $<_1$, or equivalently the longest chain of $<_2$. First, we show upper bounds for $\absolute{C}$ and $\absolute{A}$ using Dilworth's theorem:

\begin{fact}[Dilworth's theorem]\label{dilworth}
	$\absolute{U} \le \absolute{C} \absolute{A}$.
\end{fact}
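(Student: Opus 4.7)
The plan is to prove the bound by exhibiting a partition of $U$ into at most $\absolute{C}$ antichains of $<_1$; since each such antichain has size at most $\absolute{A}$ by definition, this immediately gives $\absolute{U} \le \absolute{C} \cdot \absolute{A}$. This is the classical Mirsky-style argument, and it applies since $<_1$ has just been shown to be a strict partial order on the finite set $U$.

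I would realise the partition via a height function $h : U \to \Z^+$, where $h(u)$ is the number of elements in a longest $<_1$-chain in $U$ whose maximum element is $u$. By the definition of $\absolute{C}$, every height satisfies $1 \le h(u) \le \absolute{C}$, so the level sets $A_i = \set{u \in U : h(u) = i}$ for $i \in \set{1, \ldots, \absolute{C}}$ cover $U$ by at most $\absolute{C}$ disjoint sets. The key step is verifying that every $A_i$ is an antichain of $<_1$: if $u, v \in A_i$ with $u <_1 v$ and $c_1 <_1 \cdots <_1 c_i = u$ witnesses $h(u) = i$, then appending $v$ produces the chain $c_1 <_1 \cdots <_1 c_i <_1 v$ of length $i + 1$ ending at $v$, contradicting $h(v) = i$. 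Hence $\absolute{A_i} \le \absolute{A}$ for every $i$, and summing over $i$ gives $\absolute{U} = \sum_{i=1}^{\absolute{C}} \absolute{A_i} \le \absolute{C} \cdot \absolute{A}$.

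There is no substantial obstacle here: the proof relies only on the irreflexivity of $<_1$ and on $U$ being finite, and the antichain property of each $A_i$ follows directly from the fact that comparable elements must have strictly different heights. The only minor convention worth fixing up front is that ``length of a chain'' refers to its cardinality, which is consistent with how $\absolute{C}$ is used in the surrounding argument.
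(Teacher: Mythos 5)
Your proof is correct. The paper states this product bound as a citation-style fact with no proof of its own, so there is no in-paper argument to compare against; you have supplied a self-contained justification where none was given. Your route is the classical Mirsky-style height partition: assign each $u \in U$ the length of a longest $<_1$-chain ending at $u$, observe that the level sets are antichains, and sum. This is indeed the cleanest way to obtain $\absolute{U} \le \absolute{C}\absolute{A}$ directly; the paper's label ``Dilworth's theorem'' is a slight misnomer, since the bound follows most naturally from Mirsky's dual (chain-cover) theorem rather than from Dilworth's antichain-cover statement, although either yields it.

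One small imprecision in your closing remark: the argument does not rely ``only on irreflexivity and finiteness.'' You also need transitivity, to guarantee that appending $v$ to the chain $c_1 <_1 \cdots <_1 c_i = u$ still yields a set that is pairwise comparable (i.e.\ a genuine chain), and asymmetry to rule out degenerate cycles. Since the paper establishes just before this point that $<_1$ is a strict partial order, all the needed properties are available, so this is a matter of phrasing rather than a gap.
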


\begin{claim}\label{C_ineq}
	$\absolute{C} \cdot \absolute{w} \leq 16\ell$. 
	\begin{proof}
		Let $f = \absolute{C} - 1$ and
		let $c_0, \dots, c_{f}$ denote the points in $C$, ordered so that $c_i <_1 c_{i + 1}$ for every $i \in [f]$.
		Consider any $u = c_i$ and $v = c_{i + 1}$ with $i \in [f]$. We will show that $v.x + v.y \geq u.x + u.y + \frac14\absolute{w}$. 		
		Let $w'' = (v - u)$, which is in $\Q_1^+$ due to $u <_1 v$. Assume $v.x + v.y < u.x + u.y + \frac14\absolute{w}$, or equivalently $(w''.x + w''.y) < \frac14\absolute{w}$. From now on, let $x = w''.x$ and $y = w''.y$. It then holds $$\absolute{w''} \geq \absolute{w} > 4(x + y).$$
		
		This is already a contradiction if both $x$ and $y$ are non-negative, since then it should hold $\sqrt{x^2 + y^2} \leq \absolute{x} + \absolute{y}$. If both $x$ and $y$ are negative, then they contradict $y \geq -x/\sqrt{3} \geq 0$ in the definition of $\Q_1^-$. Hence exactly one of them is negative. If $x \geq 0$ and $y < 0$, then $y \geq -x / \sqrt{3}$ in the definition of $\Q_1^-$ implies $\absolute{x} \geq \absolute{y}\cdot \sqrt{3}$ and thus $\frac34\absolute{x} > \frac54\absolute{y}$. Similarly, if $x < 0$ and $y \geq 0$, then $y \geq -x \cdot \sqrt{3}$ implies $\frac34\absolute{y} > \frac54\absolute{x}$. Hence we have
	\begin{alignat*}{1}	
		\absolute{w''} > 4(x + y) = 
		4(\absolute{x} - \absolute{y}) > 
		4(\tfrac14\absolute{x} + \tfrac14\absolute{y}) = 
		\absolute{x} + \absolute{y}&\text{\quad if $x \geq 0$ and $y < 0$, and}\\
		\absolute{w''} > 4(x + y) = 
		4(\absolute{y} - \absolute{x}) > 
		4(\tfrac14\absolute{y} + \tfrac14\absolute{x}) = 
		\absolute{x} + \absolute{y}&\text{\quad if $x < 0$ and $y \geq 0$,}
	\end{alignat*}
	which is a contradiction.
	We have shown that $c_{i + 1}.x + c_{i + 1}.y \geq c_i.x + c_i.y + \frac14\absolute{w}$ for all $i \in [f]$. By chaining the inequality for all $i$, we get $$2\ell \geq c_f.x + c_f.y \geq c_0.x + c_0.y + \sum\nolimits_{i = 0}^{f - 1} \tfrac14\absolute{w} \geq \tfrac14f\cdot{\absolute{w}},$$ and thus $\absolute{w} \cdot f \leq 8\ell$. Note that $w = (t - s) \in \Q_1^+$, and thus $t$ and $s$ belong to a chain, which implies $\absolute{C} > 1$. Thus $\absolute{w} \cdot \absolute{C} \leq 2\absolute{w} \cdot f \leq 16\ell$, which concludes the proof.
	\end{proof}%
\end{claim}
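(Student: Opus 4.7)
The plan is to linearize the chain $C$ according to $<_1$ and exploit the bounding box $U \subseteq [\ell + 1]^2$ to control how far the chain can stretch. Since consecutive elements of $C$ are ordered by $<_1$ their differences lie in $\Q_1^+$, and since $w$ is the shortest nonzero difference between points of $U$, each such gap has norm at least $\absolute{w}$. My aim is to show that each consecutive gap increases the linear functional $x + y$ by at least a constant fraction of $\absolute{w}$; combined with the trivial bound $x + y \leq 2\ell$ on any point of $U$, this will yield $\absolute{C} \cdot \absolute{w} = O(\ell)$ immediately.

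First I would enumerate $C$ as $c_0 <_1 c_1 <_1 \dots <_1 c_f$ with $f = \absolute{C} - 1$, and set $d_i = c_{i+1} - c_i \in \Q_1^+$. Minimality of $\absolute{w}$ gives $\absolute{d_i} \geq \absolute{w}$. The geometric heart of the argument is the pointwise estimate $d.x + d.y \geq \absolute{d}/4$ for every $d \in \Q_1^+$. This holds because $\Q_1^+$ is the $150$-degree cone bounded by the rays $y = -x/\sqrt{3}$ and $y = -x\cdot\sqrt{3}$, on whose boundary $x+y$ attains its minimum value $\tfrac{\sqrt{3}-1}{2}\absolute{d}$, comfortably above $\absolute{d}/4$. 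I would verify this by case analysis on the signs of $d.x, d.y$: the all-nonnegative case is immediate from $d.x + d.y \geq \absolute{d}/\sqrt{2}$; the case $d.x < 0$ and $d.y < 0$ is excluded by $\Q_1^+$'s defining inequalities; and the two mixed-sign cases reduce to lower-bounding the positive coordinate via one of the two cone-defining rays and then estimating the remaining coordinate from the constraint.

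With this estimate in hand, telescoping gives $f \cdot \absolute{w}/4 \leq \sum_{i=0}^{f-1} (d_i.x + d_i.y) = (c_f.x + c_f.y) - (c_0.x + c_0.y) \leq 2\ell$, hence $f \cdot \absolute{w} \leq 8\ell$. Finally, since $w = t - s$ lies in $\Q_1 \subseteq \Q_1^+$, the points $s, t \in U$ form a $<_1$-chain of length $2$, so any maximum chain satisfies $\absolute{C} \geq 2$ and hence $\absolute{C} \leq 2f$, which gives $\absolute{C} \cdot \absolute{w} \leq 16\ell$ as claimed.

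The only nonroutine step is the geometric lower bound on $d.x + d.y$ inside the extended quadrant $\Q_1^+$: I need the constant to be strictly above $1/4$ even though the cone extends $30^\circ$ into each of $\Q_2$ and $\Q_4$, which leaves essentially no slack in the mixed-sign cases. Everything else — the telescoping sum, the bounding box $[\ell+1]^2$, and the reduction from $f$ to $\absolute{C}$ — is automatic once that lemma is in place.
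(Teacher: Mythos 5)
Your proof is correct and follows essentially the same route as the paper's: enumerate the chain, lower-bound the growth of $x+y$ across each consecutive gap via the cone geometry of $\Q_1^+$ and the minimality of $\absolute{w}$, telescope against the bound $c_f.x + c_f.y \leq 2\ell$, and recover $\absolute{C}$ from $f$ using the fact that $s,t$ already form a chain of length two. The only cosmetic difference is that you package the key step as a clean pointwise lemma $d.x + d.y \geq \absolute{d}/4$ on $\Q_1^+$ (with the sharp boundary constant $\tfrac{\sqrt{3}-1}{2}$), whereas the paper runs the equivalent sign case analysis inside a proof by contradiction normalized to $\absolute{w}$.
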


By the assumption $\absolute{U} \ge 16\ell$ and \Cref{dilworth}, it holds
$16 \ell \le \absolute{U} \le \absolute{C} \absolute{A} < 16\ell\absolute{A}$,
thus $\absolute{A} \ge 2$.
We select any pair of different vectors $s', t' \in A$ that minimizes $\absolute{t' - s'}$. Since $s'$ and $t'$ are in a chain of $<_2$, it holds either $s' <_2 t'$ or $t' <_2 s'$. W.l.o.g., we assume $s' <_2 t'$ (otherwise swap $s'$ and~$t'$), and define $w' = t' - s'$, which is in $\Q_2^-$ as required.  We show how to compute $A$ and $w'$ later.

\begin{claim}\label{A_ineq}
	$\absolute{A}\cdot \absolute{w'} \leq 16\ell$. 
	\begin{proof}
	
		Let $f = \absolute{C} - 1$ and
		let $c_0, \dots, c_{f}$ denote the points in $A$, ordered so that $c_i <_2 c_{i + 1}$ for every $i \in [f]$ (recall that $A$ is a chain with respect to $<_2$).
		Consider any $u = c_i$ and $v = c_{i + 1}$ with $i \in [f]$. We will show that $v.x + v.y \geq u.x + u.y + \frac14\absolute{w'}$. 		
		Let $w'' = (v - u)$, which is in $\Q_2^-$ due to $u <_2 v$. From now on, let $x = w''.x$ and $y = w''.y$. 
		Note that $-x/\sqrt{3} < y < -x \sqrt{3}$ in the definition of $\Q_2^-$ implies that $x$ is negative and $y$ is positive.
		Therefore, $y > -x/\sqrt{3}$ implies $\sqrt{3}\absolute{y} < \absolute{x}$, and thus $\frac 5 4\absolute{y} < \frac 3 4\absolute{x}$.
		Assume $v.x + v.y < u.x + u.y + \frac14\absolute{w'}$, or equivalently $(\absolute{x} - \absolute{y}) = (x + y) < \frac14\absolute{w'}$. Then $$\absolute{w''} \geq \absolute{w'} > 4(x + y) = 4(\absolute{x} - \absolute{y}) > 4(\tfrac14\absolute{x} + \tfrac14\absolute{y}) = 
		\absolute{x} + \absolute{y},$$
		which contradicts $\sqrt{x^2 + y^2} \leq \absolute{x} + \absolute{y}$. We have shown that $c_{i + 1}.x + c_{i + 1}.y \geq c_i.x + c_i.y + \frac14\absolute{w}$ for all $i \in [f]$. Hence we can finish the proof exactly like in \cref{C_ineq} and obtain $\absolute{A}\cdot \absolute{w'} \leq 16\ell$.
	\end{proof}
\end{claim}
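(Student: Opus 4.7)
My approach would mirror the structure of Claim~\ref{C_ineq}, with the key adjustment that in $\Q_2^-$ the natural potential function $\x{u} + \y{u}$ used for $\Q_1^+$ no longer behaves monotonically, so a different linear functional is required.

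First, since $A$ is a chain under $<_2$, I would order its elements as $c_0 <_2 c_1 <_2 \cdots <_2 c_f$ where $f = \absolute{A} - 1$. By the definition of $<_2$, each consecutive difference $w_i = c_{i+1} - c_i$ lies in $\Q_2^-$. Because $w'$ was selected to minimise $\absolute{t' - s'}$ among pairs of distinct points in $A$, it follows that $\absolute{w_i} \ge \absolute{w'}$ for every $i \in [f]$.

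The geometric core is to identify a linear functional $L : \R^2 \to \R$ that is (i) positive and of magnitude $\Omega(\absolute{v})$ for every $v \in \Q_2^-$, and (ii) bounded by $\O(\ell)$ on every point of $U \subseteq [\ell+1]^2$. A convenient choice is simply $L(u) = \y{u}$. Since $\Q_2^-$ corresponds to angles strictly between $120^\circ$ and $150^\circ$ (viewed from the positive $x$-axis), we have $\sin \theta \ge \sin 150^\circ = \tfrac{1}{2}$ for every $v \in \Q_2^-$ at angle $\theta$, hence $\y{w_i} = \absolute{w_i} \sin\theta \ge \absolute{w'}/2$. Telescoping along the chain then gives
\begin{align*}
    \y{c_f} - \y{c_0} \;=\; \sum_{i=0}^{f-1} \y{w_i} \;\ge\; f \cdot \absolute{w'} / 2.
\end{align*}
Since all $y$-coordinates of points of $U$ lie in $[0,\ell]$, we have $\y{c_f} - \y{c_0} \le \ell$, and hence $f \cdot \absolute{w'} \le 2\ell$. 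Combining this with $\absolute{A} \ge 2$, which was established just before the claim via Dilworth's theorem and gives $f \ge 1$, we obtain $\absolute{A} = f + 1 \le 2f$, so $\absolute{A} \cdot \absolute{w'} \le 4\ell \le 16\ell$.

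The main obstacle is the choice of potential function: the straightforward imitation of Claim~\ref{C_ineq} using $\x{u} + \y{u}$ fails because this sum changes sign inside $\Q_2^-$ (it vanishes along the ray at $135^\circ$, which is strictly contained in $\Q_2^-$). Once one observes that the narrow angular range of $\Q_2^-$ forces $\sin\theta$ to stay bounded away from $0$, the telescoping argument proceeds essentially as in the proof of Claim~\ref{C_ineq}, and in fact yields a slightly stronger bound than the $16\ell$ claimed.
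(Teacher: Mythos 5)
Your proof is correct, and it actually repairs a genuine sign error in the paper's own argument rather than merely rephrasing it. The paper tries to reuse the potential $\x{u}+\y{u}$ from \cref{C_ineq}, asserting that $y>-x/\sqrt 3$ implies $\sqrt 3\absolute{y}<\absolute{x}$ and that $x+y=\absolute{x}-\absolute{y}$. Both signs are wrong: with $x<0$ and $y>0$ in $\Q_2^-$ we have $x+y=\absolute{y}-\absolute{x}$, and the constraint $y>-x/\sqrt{3}$ actually gives $\sqrt 3\absolute{y}>\absolute{x}$, not the reverse. More fundamentally, as you observe, $x+y$ vanishes on the $135^\circ$ ray strictly inside $\Q_2^-$, so it is not monotone along a $<_2$-chain and cannot serve as a potential there at all. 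Your choice of $L(u)=\y{u}$ fixes this cleanly: every vector in $\Q_2^-$ points at an angle in $(120^\circ,150^\circ)$, so its $y$-component exceeds half its norm, telescoping gives $f\cdot\absolute{w'}\le 2\ell$, and combined with $\absolute{A}\geq 2$ this yields $\absolute{A}\cdot\absolute{w'}\le 4\ell$ --- tighter than the claimed $16\ell$. One small nitpick: $\sin\theta>\tfrac12$ strictly on the open range $(120^\circ,150^\circ)$, so your nonstrict $\geq$ is harmless but slightly imprecise.
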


By multiplying the inequalities from \cref{A_ineq,C_ineq}, we obtain $\absolute{A}\absolute{C}\absolute{w}\absolute{w'} \leq 256\ell^2$. Finally, applying \cref{dilworth} and dividing both sides by $\absolute{U}$, we obtain the final statement $\absolute{w}\absolute{w'} \leq 256\ell^2/\absolute{U}$.

We have shown that $w$ and $w'$ satisfy the claimed properties. 
It remains to show how to compute~$w'$. 
If the longest antichain $A$ is given, then it is easy to sort the chain according to $<_2$ and select the desired $w'$ in $\tO(\absolute{A})$ time. To compute the antichain, 
we define two orders $\prec_a$ and $\prec_b$, where 
$u \prec_a v$ if and only if $\sqrt{3} \cdot u.y + u.x < \sqrt{3} \cdot v.y + v.x$, and $u \prec_b v$ if and only if $\sqrt{3} \cdot v.x + v.y < \sqrt{3} \cdot u.x + u.y$.
By the definition of $Q_2^-$, it holds $u <_2 v$ if and only if both $u \prec_a v$ and $u \prec_b v$.
We sort $U$ in non-decreasing order with respect to $\prec_a$. If there is a tie between $u$ and $v$, we break the tie such that $u$ precedes $v$ if and only if $v \prec_b u$. (There cannot be distinct $u$ and $v$ that are tied with respect to both orders.) 
Afterwards, a subsequence of the sorted $U$ is a chain of $<_2$ if and only if the subsequence (read from left to right) is increasing with respect to $\prec_b$.
We compute the longest increasing subsequence with respect to $\prec_b$ in $\tO(\absolute{U})$ time (see~\cite{CROCHEMORE20101054} and references therein) and hence obtain the longest chain of $<_2$, or equivalently the longest antichain of $<_1$.
\end{proof}

Let $\ell = n - m \le m / 2$. We apply \Cref{get_periods} on $\ell$ and the set $Q$ (the conditions of the claim are satisfied as $\absolute{Q} > 8m + m^2/k \ge 16\ell$) and compute $\phi, \psi \in \Z^2$ in $\tO(\absolute{Q})$ time. By \Cref{periodicity_lemma}, $\phi$ and $\psi$ are $\O(k)$-periods of $P$, and $0 \le \phi \times \psi \le \absolute{\phi}\absolute{\psi} = \O(\ell^2 / \absolute{Q}) =  \O(\min\set{m, k})$. We fix $\phi$ and $\psi$ for the rest of the paper.

\section{Partitioning of the pattern and the text}
\label{sec:partition}

As mentioned before, we will partition $P$ and $T$ into one-character strings of regular form. A one-character string is formally defined as follows.

\restateDefMonochromaticString

The regular form of the strings is obtained by cutting the domains of $P$ and $T$ into so-called \emph{(truncated) tiles}.

\restateDefTile

Geometrically, a tile $U$ can be viewed as a set of integer points inside a parallelogram with sides parallel to $\phi$ and $\psi$. Truncated tiles are a generalisation of simple tiles. 

\restateDefTruncatedTile

Geometrically, the first property of a truncated tile $U$ means that it is a set of integer points in the intersection of an axis-parallel rectangle (possibly infinite) and a parallelogram with sides parallel to $\phi$ and $\psi$. The second property implies that the axis-parallel rectangle defined by $x_0, x_1, y_0, y_1$ contains a parallelogram spanned by $\phi$ and $\psi$. It will become clear later why this property is important. 

\subsection*{Partitioning tiles into subtiles.}

Using the lattice defined by basis vectors $\phi$ and $\psi$, we can further partition a (truncated) tile into $\O(\min\{m, k\})$ (truncated) \emph{subtiles}. This is done by dividing the points into congruence classes depending on their relative position with respect to the lattice.

\restateDefLatticeCongruency

\begin{lemma} \label{lattice_base}
	There exists an algorithm that constructs in $\O(m^2)$ time a set $\Gamma \subseteq \Z^2$ such that $\absolute{\Gamma} = \O(\min\set{m, k})$ and $\forall u \in \Z^2 \; \exists \gamma \in \Gamma : u \equiv \gamma$.
\end{lemma}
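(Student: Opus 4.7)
\noindent My plan exploits a standard fact from the geometry of numbers: the index $[\Z^2 : \L]$ equals $\absolute{\phi \times \psi}$, which by \Cref{get_periods} is strictly positive (the angle condition $\sin \alpha \ge \tfrac{1}{2}$ rules out $\phi \times \psi = 0$) and bounded above by $\absolute{\phi}\absolute{\psi} = \O(\min\set{m, k})$. Hence the target size of $\Gamma$ is already forced by pure counting; what remains is to produce an explicit representative from each coset in the allotted time.

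First, I would compute a Hermite Normal Form basis of $\L$: integer vectors $v_1 = (a, 0)$ and $v_2 = (b, c)$ with $a, c > 0$ and $0 \le b < a$, such that $\set{s v_1 + t v_2 : s, t \in \Z} = \L$. Since $\phi \times \psi \ne 0$, the vectors $\phi, \psi$ are linearly independent over $\mathbb{Q}$, so this HNF exists. It is obtained by a constant number of extended-Euclidean-style unimodular row operations on the matrix with rows $\phi$ and $\psi$ (first reduce the $y$-coordinates to $\gcd(\y{\phi}, \y{\psi})$ and $0$, then reduce the $x$-coordinate of the second row modulo $a$), which takes $\O(\log m)$ time; and from $v_1 \times v_2 = ac$ together with the preservation of the determinant under unimodular operations we obtain $ac = \absolute{\phi \times \psi} = \O(\min\set{m, k})$.

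Second, I would simply output
\[ \Gamma \ =\ \set{(i, j) : 0 \le i < a,\ 0 \le j < c}, \]
enumerated in $\O(ac) = \O(\min\set{m, k})$ time, comfortably within the $\O(m^2)$ budget. For the coverage property, take an arbitrary $u = (x, y) \in \Z^2$: subtracting $\floor{y / c} \cdot v_2$ from $u$ produces a congruent vector whose $y$-coordinate lies in $[0, c)$, and then subtracting the appropriate integer multiple of $v_1$ reduces the $x$-coordinate into $[0, a)$, yielding an element of $\Gamma$ congruent to $u$.

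The only subtlety worth flagging is that the HNF basis must span the \emph{same} lattice $\L$ as $\phi, \psi$, not just a sublattice; this is ensured because each step of the reduction is a unimodular row operation and therefore preserves the integer row span. I do not anticipate any real obstacle beyond that bookkeeping. (As an aside, the $\O(m^2)$ time allowance in the statement is loose; the actual running time of this construction is $\O(\min\set{m,k}) = \O(m)$, which leaves ample room for any additional preprocessing of $\phi, \psi$ that \Cref{tile_decomposition} may later require.)
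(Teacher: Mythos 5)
Your proof is correct, and it takes a genuinely different route from the paper's. The paper defines $\Gamma$ as the integer points in the half-open fundamental parallelogram $\set{s\phi + t\psi : s,t \in [0,1)}$, bounds $\absolute{\Gamma}$ via Pick's theorem applied to the parallelogram of area $\phi \times \psi$, and constructs $\Gamma$ by brute-force testing every integer point in an $\O(m) \times \O(m)$ bounding box, which is where the $\O(m^2)$ in the statement comes from. You instead compute a Hermite Normal Form basis $v_1 = (a,0)$, $v_2 = (b,c)$ of $\L$ via extended Euclid, use the elementary identity $[\Z^2 : \L] = \absolute{\det} = \absolute{\phi \times \psi} = ac$ in place of Pick's theorem, and read off the rectangular set $\Gamma = [a] \times [c]$ of coset representatives directly. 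Your coverage argument (reduce $y$ modulo $c$ using $v_2$, then $x$ modulo $a$ using $v_1$) is the standard one and is sound; the non-degeneracy $\phi \times \psi \ne 0$ needed for $\L$ to have full rank is indeed guaranteed by the angle condition from \Cref{get_periods}, exactly as you say. The net effect is a cleaner proof that also happens to run in $\O(\min\set{m,k} + \log m) = \O(m)$ time rather than $\O(m^2)$; as you observe, this is strictly stronger than what the lemma claims, and the $\O(m^2)$ budget in the statement is only there to absorb the paper's cruder enumeration. One small wording nit: the HNF is obtained by $\O(\log m)$ unimodular row operations (the iterations of the Euclidean algorithm on the $y$-coordinates), not a ``constant number''; you do then correctly state the $\O(\log m)$ time bound, so this is only a slip in phrasing, not in substance.
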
 
	\begin{proof}
		Let $p = \set{s\phi + t\psi : s \in [0, 1), t \in [0, 1)}$ and $\Gamma = p \cap \Z^2$.
		By Pick's Theorem, a simple polygon with integer vertices contains $\O(A)$ integer points in the interior or on the boundary, where $A$ denotes its area.
		Observe that the elements of $\Gamma$ are contained in a parallelogram with vertices $(0, 0), \phi, \phi + \psi, \psi$.
		Since its area is $\phi \times \psi = \O(\min\set{m, k})$, we get $\absolute{\Gamma} = \O(\min\set{m, k})$.
		
		Now consider any $u \in \Z^2$.
		As $\phi, \psi$ are not collinear, there exist unique values $s, t \in [0, 1)$ and $s', t' \in \Z$, such that
		$u = (s + s') \phi + (t + t') \psi$.
		By definition, 
		$u \equiv s\phi + t\psi$ and $s\phi + t\psi \in \Gamma$.
		
		We now give a construction algorithm for $\Gamma$.
		Consider $\Gamma' = \Z^2 \cap [0,\x{\phi}+\x{\psi}] \times [\y{\phi},\y{\psi}]$.
		We have $\absolute{\Gamma'} = \O((\absolute{\x{\phi}}+\absolute{\x{\psi}}) \cdot (\absolute{\y{\phi}}+\absolute{\y{\psi}})) = \O(m^2)$ as $\absolute{\x{\phi}}, \absolute{\y{\phi}}, \absolute{\x{\psi}}, \absolute{\y{\psi}} \le n - m \le m / 2$.
		For $u \in \Gamma'$ we have $u \in \Gamma$ if and only if $\s{\phi} \le \s{u} \le 0$ and $0 \le \h{u} \le \h{\psi}$, which can be checked in constant time. 
	\end{proof}

\restateDefSubtile

See \Cref{figure:tile} for an illustration.

\restateDefTileString

\subsection{Pattern partitioning.}
\label{sec:partition:pattern}

In this section, we show how to partition the pattern $P$ into $\O(k)$ monochromatic truncated subtile strings. The procedure works not only for $P$, but for any truncated tile string with $\O(k)$-mismatch periods $\psi$ and $\phi$. Later, we use the same procedure to also partition a part of the text.

\begin{definition}[Lattice graph]
	For a set $U \subseteq \Z^2$ we define its \emph{lattice graph} $(U, \Edges(U))$, where
$$\Edges(U) = \bigset{\set{u, u + \delta} : \delta \in \set{\phi, \psi}, u \in U, u + \delta \in U}.$$
\end{definition}
Intuitively, we connect every $u \in U \cap \Z^2$ with its translations by $\phi, \psi, -\phi, -\psi$ if they are contained in $U$.

\begin{lemma}\label{lattice_graph_connectivity}
	If $U$ is a truncated subtile, then $(U, \Edges(U))$ is connected.
	\begin{proof}
		Assume the contrary.
		Consider $u \in U$ and $v = u + s \phi + t \psi \in U$ such that
		\begin{itemize}
			\item $u$ and $v$ belong to different connected components of $(U, \Edges(U))$,
			\item $\absolute{s} + \absolute{t}$ is minimized.
		\end{itemize}
		W.l.o.g.~$s \ge 0$, since in the other case $u$ and $v$ can be swapped.
		We now show that there exists $w \in U$, such that $\set{u, w} \in \Edges(U)$ and if we let $s', t' \in \Z$ be such that $v = w + s'\phi + t'\psi$, then $\absolute{s'} + \absolute{t'} < \absolute{s} + \absolute{t}$, which contradicts the minimality of $\absolute{s} + \absolute{t}$. 
	
		Let $x_0, x_1, y_0, y_1, \phi_0, \phi_1, \psi_0, \psi_1 \in \Z$ and $\gamma \in \Z^2$ be the signature of $U$, meaning that
		\begin{itemize}
			\item $ U = [x_0, x_1] \times [y_0, y_1] \cap \set{z : z \in \Z^2, \h{z} \in [\phi_0, \phi_1], \s{z} \in [\psi_0, \psi_1], \gamma \equiv u}$,
			\item $x_1 - x_0 + 1 \ge \absolute{\x{\phi}} + \absolute{\x{\psi}}$ and $y_1 - y_0 + 1 \ge \absolute{\y{\phi}} + \absolute{\y{\psi}}$.
		\end{itemize}
		
		Recall that $\x{\phi} \ge 0$, $\y{\phi} \le 0$, $\x{\psi} \ge 0$, $\y{\psi} \ge 0$. We have the following cases:
		\begin{enumerate}[(a)]
			\item \label{it:s0_tneg} If $s = 0$, then w.l.o.g.~(up to swapping $u$ and $v$) $t > 0$. Let $w := u + \psi$. Observe that
				\eq{
					\x{u} \le \x{w} \le \x{v}, \quad \y{u} \le \y{w} \le \y{v}, \quad \h{u} \le \h{w} \le \h{v}, \quad \s{w} = \s{u},
				}
				and since $w \equiv u$, we get $w \in U$. 
			
			\item If $s > 0$ and $t = 0$, let $w := u + \phi$. Analogously, 
				\eq{
					\x{u} \le \x{w} \le \x{v}, \quad \y{u} \le \y{w} \le \y{v}, \quad \h{u} \le \h{w} \le \h{v}, \quad \s{w} = \s{u},
				}
				and since $w \equiv u$, we get $w \in U$. 
			
			\item If $s > 0$ and $t > 0$, let $w' := u + \phi$. \label{case_pos}
				We have \eq{
					\x{u} \le \x{w'} \le \x{v}, \quad \h{w'} = \h{u}, \quad \s{v} \le \s{w'} \le \s{u}.
				}
				If $w' \in U$, let $w = w'$.
				If $w' \not \in U$, then since all other requirements are satisfied, we must have $\y{w'} \not \in [y_0, y_1]$.
				Since $\y{w'} = \y{u} + \y{\phi} \le \y{u}$, we have $\y{u} + \y{\phi} \le y_0 - 1$, and
				considering $y_1 - y_0 + 1 \ge \absolute{\y{\phi}} + \absolute{\y{\psi}}$, we get $y_1 \ge \y{u} + \y{\psi}$.
				Now let $w := u + \psi$.
				We have \eq{
					\x{u} \le \x{w} \le \x{v}, \quad \y{u} \le \y{w} = \y{u} + \y{\psi} \le y_1, \quad \h{u} \le \h{w} \le \h{v}, \quad \s{w} = \s{u},
				}
				thus $w \in U$.
			\item If $s > 0$ and $t < 0$, consider $w' = u + \phi$.
				We have \eq{
					\y{v} \le \y{w'} \le \y{u}, \quad \h{w'} = \h{u}, \quad \s{v} \le \s{w'} \le \s{u}.
				}
				If $w' \in U$, then $w = w'$.
				Otherwise we can (similarly to (\ref{case_pos})) show that $w = u - \psi \in U$, by the fact that $x_1 - x_0 + 1 \ge \absolute{\x{\phi}} + \absolute{\x{\psi}}$. 
		\end{enumerate}
		In all these cases, which are clearly exhaustive, either $\absolute{s} \ge \absolute{s'} = 0$ and $\absolute{t} > \absolute{t'} = 1$ or $\absolute{s} > \absolute{s'} = 1$ and $\absolute{t} \ge \absolute{t'} = 0$ and hence $\absolute{s'}+\absolute{t'} \le \absolute{s}+\absolute{t}$, a contradiction. 
	\end{proof}
\end{lemma}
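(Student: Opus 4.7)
The plan is to argue by contradiction. Suppose $U$ decomposes into at least two components, and pick $u, v \in U$ in different components minimising $|s| + |t|$, where $v - u = s\phi + t\psi$ (such unique integers exist because $\phi, \psi$ are linearly independent and $u \equiv v$ since both lie in the common congruence class of $U$). The goal is to exhibit a neighbour $w$ of $u$ in the lattice graph, i.e., $w \in \{u \pm \phi, u \pm \psi\} \cap U$, whose displacement $v - w = s'\phi + t'\psi$ satisfies $|s'|+|t'| < |s| + |t|$, contradicting minimality.

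After swapping $u$ and $v$ if necessary we may assume $s \ge 0$. The natural candidate step is the one that pushes $u$ strictly toward $v$ in the relevant lattice coordinate: $w = u + \phi$ if $s > 0$, otherwise $w = u + \psi$ or $u - \psi$ according to the sign of $t$. For such a $w$, membership in the congruence class is automatic, and the constraints $\h{w} \in [\phi_0,\phi_1]$ and $\s{w} \in [\psi_0,\psi_1]$ also follow readily, since stepping from $u$ toward $v$ by one lattice vector places $\h{w}$ (resp.\ $\s{w}$) strictly between $\h{u}$ and $\h{v}$ (resp.\ between $\s{u}$ and $\s{v}$), both of which are in the required interval.

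The main obstacle — and this is where the defining inequalities of a truncated tile come in — is the axis-parallel rectangle $[x_0,x_1] \times [y_0,y_1]$: the candidate $w$ above may fall out of it. The plan is to perform a careful case analysis on the signs of $s$ and $t$ using the fact that $\x{\phi}, \x{\psi} \ge 0$, $\y{\phi} \le 0$, $\y{\psi} \ge 0$. In the degenerate cases $s = 0$ or $t = 0$, the intermediate point $w$ lies componentwise between $u$ and $v$, so rectangle membership is immediate. In the mixed cases $s, t > 0$ or $s > 0, t < 0$, the first attempt $w = u + \phi$ may violate the $y$-range; then I would show that the complementary step $w = u \pm \psi$ must lie inside the rectangle, because $y_1 - y_0 + 1 \ge |\y{\phi}| + |\y{\psi}|$ forces the alternative $y$-coordinate to be admissible whenever the $\phi$-step fails. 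The symmetric argument swapping axes handles the case where the $\phi$-step leaves the $x$-range, using $x_1 - x_0 + 1 \ge |\x{\phi}| + |\x{\psi}|$. This exhausts all cases and produces the required neighbour $w$, giving the contradiction.

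The hard part is not any single case but recognising exactly how the truncated-tile dimension bounds $x_1-x_0+1 \ge |\x{\phi}|+|\x{\psi}|$ and $y_1-y_0+1 \ge |\y{\phi}|+|\y{\psi}|$ get used: they guarantee that at least one of the two lattice generators can always be applied without leaving the bounding rectangle, which is the whole reason these inequalities appear in the definition of a truncated tile. Once this geometric observation is made, the case analysis is routine.
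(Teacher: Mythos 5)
Your proposal follows the paper's proof essentially step for step: the same minimal-counterexample argument on $|s|+|t|$, the same case split on the signs of $s$ and $t$, the same observation that in the non-degenerate mixed cases the step $u+\phi$ can fail only on one of the two rectangle ranges (the $y$-range when $t>0$, the $x$-range when $t<0$), and the same use of the dimension inequalities of a truncated tile to guarantee that the complementary step $u+\psi$ or $u-\psi$ then stays inside the rectangle. One small imprecision: after a single lattice step, exactly one of $\h{\cdot}$, $\s{\cdot}$ stays equal rather than moving ``strictly between'' the values at $u$ and $v$ (e.g.\ $\s{(u+\psi)}=\s{u}$), but this does not affect the argument.
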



\begin{lemma}\label{monochromacy_condition}
	A truncated subtile string $S$ is monochromatic iff
	\[\Ham(S + \phi, S) + \Ham(S + \psi, S) = 0.\]
	\begin{proof}
		If $S$ is monochromatic, then clearly $\Ham(S + \phi, S) + \Ham(S + \psi, S) = 0$.
		Assume the contrary for the other implication.
		Let $u, v \in \d{S}$ be such that $S(u) \neq S(v)$.
		Since $\d{S}$ is a truncated subtile, the graph $(\d{S}, \Edges(\d{S})$ is connected by \Cref{lattice_graph_connectivity} and there must exist a path between $u$ and $v$.
		On that path there must exist a pair of neighbors $w, w'$, such that $S(w) \neq S(w')$ and $w' = w + \delta$ for some $\delta \in \set{\phi, \psi}$.
		If $\delta = \phi$, then $\Ham(S + \phi, S) \ge 1$ and if $\delta = \psi$, then $\Ham(S + \psi, S) \ge 1$ and we get a contradiction.
	\end{proof}
\end{lemma}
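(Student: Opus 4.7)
The plan is to prove both implications separately. The forward direction is essentially immediate: if $S$ is monochromatic with value $c = \getchar{S}$, then for every $v \in \d{S+\phi} \cap \d{S}$ we have $v - \phi \in \d{S}$, so $(S+\phi)(v) = S(v-\phi) = c = S(v)$; thus there are no mismatches and $\Ham(S+\phi, S) = 0$. The identical argument applies to $\psi$, so the sum is zero.

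For the converse I would argue by contradiction. Suppose the sum of Hamming distances is zero but $S$ is not monochromatic, so there exist $u, v \in \d{S}$ with $S(u) \neq S(v)$. The key point is to connect $u$ and $v$ by steps that correspond to the two shifts under consideration. Since $\d{S}$ is a truncated subtile, \Cref{lattice_graph_connectivity} guarantees that its lattice graph $(\d{S}, \Edges(\d{S}))$ is connected, so there is a path $u = w_0, w_1, \ldots, w_r = v$ in which consecutive vertices differ by $\pm\phi$ or $\pm\psi$ and all lie in $\d{S}$. Since $S(w_0) \neq S(w_r)$, a discrete intermediate value argument along the sequence $S(w_0), \ldots, S(w_r)$ yields an index $i$ such that $S(w_i) \neq S(w_{i+1})$.

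Write $\set{w_i, w_{i+1}} = \set{w, w+\delta}$ with $w \in \d{S}$ and $\delta \in \set{\phi, \psi}$. Then $w + \delta \in \d{S+\delta} \cap \d{S}$, and $(S+\delta)(w+\delta) = S(w) \neq S(w+\delta)$, so $w + \delta$ is a mismatch, giving $\Ham(S+\delta, S) \ge 1$ and contradicting the assumption. The crux of the argument—indeed the only non-trivial step—is the invocation of \Cref{lattice_graph_connectivity}; without the guarantee that any two points of the subtile can be connected by $\pm\phi, \pm\psi$ moves that stay inside the subtile, there would be no mechanism to transport the character equality between distant points, so that connectivity lemma is doing all the real work.
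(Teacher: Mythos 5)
Your proof is correct and follows the same approach as the paper: both directions argue exactly as the paper does, with the converse relying on \cref{lattice_graph_connectivity} to find a path in the lattice graph along which some consecutive pair differs by $\phi$ or $\psi$ and carries a mismatch. You merely spell out a few steps (the forward direction, and the identification of the offending edge) that the paper leaves as "clearly."
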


\begin{lemma}\label{cut_partitioning}
Given a (truncated) subtile string $S$, there is an algorithm that runs in time $\tO(\absolute{\d{S}} + 1)$ and constructs the following two partitionings.
	\begin{enumerate}[(a)]
		\item A partitioning of $S$ into a set of $\O(\Ham(S + \phi, S) + 1)$ (truncated) subtile strings $\U$, such that $\Ham(U + \phi, U) = 0$ for each $U \in \U$ and \label{partition_a}
		\item A partitioning of $S$ into a set of $\O(\Ham(S + \psi, S) + 1)$ (truncated) subtile strings $\V$, such that $\Ham(V + \psi, V) = 0$ for each $V \in \V$. \label{partition_b}
	\end{enumerate}
	\begin{proof}
		First, consider option (\ref{partition_a}). We construct the set
		\[ A = \set{\s{u} : u \in \d{S}, u + \phi \in \d{S}, S(u) \neq S(u + \phi)} \cup \set{-\infty, +\infty}\]
		and then sort its elements increasingly, creating a sequence $a_0, \dots, a_\ell$.
		Note that $\ell \le \Ham(S + \phi, S) + 2$.
		We then construct the strings $S_0, \dots, S_{\ell - 1}$, where $S_i$ is the restriction of $S$ to
		$\set{u : u \in \d{S}, \s{u} \in [a_i, a_{i + 1})}$ for every $i \in [\ell]$.
		Observe that $S_0, \dots, S_{\ell - 1}$ partition $S$ and that $\Ham(S_i + \phi, S_i) = 0$ for every $i \in [\ell]$. Finally, if $S$ is a (truncated) subtile string, then the created strings are (truncated) subtile strings as well.
	
		Second, consider option (\ref{partition_b}). Similarly to above, we construct
		\[ A = \set{\h{u} : u \in \d{S}, u + \psi \in \d{S}, S(u) \neq S(u + \psi)} \cup \set{-\infty, +\infty} \]
		and then sort it increasingly, creating $a_0, \dots, a_\ell$, where $\ell \le \Ham(S + \psi, S) + 2$.
		We then construct the strings $S_0, \dots, S_{\ell - 1}$, where $S_i$ is the restriction of $S$ to 
		$\set{u : u \in \d{S}, \h{u} \in (a_i, a_{i + 1}]}$. Similarly to above, if $S$ is a (truncated) subtile string, then the created strings are (truncated) subtile strings as well.
	\end{proof}
\end{lemma}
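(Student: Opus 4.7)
The plan for part (a) is to cut the domain of $S$ into strips by thresholding on the linear functional $\s{u}=\psi\times u$. Concretely, I would collect the finite set $A = \set{\s{u} : u \in \d{S},\ u+\phi \in \d{S},\ S(u) \ne S(u+\phi)}$, adjoin $\pm\infty$, sort the result as $a_0 < a_1 < \cdots < a_\ell$, and define the pieces $S_0,\dots,S_{\ell-1}$ as the restrictions of $S$ to the strips $\set{u \in \d{S} : \s{u} \in [a_i, a_{i+1})}$. By construction $\ell \le \Ham(S+\phi,S)+1$, and the whole procedure runs in $\tO(\absolute{\d{S}}+1)$ time, dominated by sorting $A$.

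Correctness rests on a single arithmetic identity, $\s{u+\phi} = \s{u} + \psi\times\phi$, combined with the strict inequality $\psi\times\phi < 0$. The latter follows from \cref{get_periods}: the angle $\alpha$ between $\phi$ and $\psi$ satisfies $\sin\alpha\ge \tfrac{1}{2}$, and the orientation fixed by that lemma (with $\psi$ in the first quadrant and $\phi$ in the fourth) forces $\phi\times\psi > 0$. Hence moving by $\phi$ strictly decreases $\s{}$. If some $S_i$ contained a pair $u,u+\phi$ with $S(u)\ne S(u+\phi)$, then $\s{u}\in A$, and combined with $\s{u}\in [a_i,a_{i+1})$ and the sortedness of $A$ this forces $\s{u}=a_i$; but then $\s{u+\phi} < a_i$, placing $u+\phi$ outside the strip, contradiction.

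The pieces are still (truncated) subtile strings because intersecting the original tile's signature $\s{u}\in[\psi_0,\psi_1]$ with a strip $[a_i,a_{i+1})$ only refines the $\psi$-bounds, while the bounding rectangle $[x_0,x_1]\times[y_0,y_1]$ is untouched, so the truncation inequalities $x_1-x_0+1\ge\absolute{\x{\phi}}+\absolute{\x{\psi}}$ and its $y$-analogue survive unchanged (the lattice class $\gamma$ is inherited). Part (b) is fully symmetric: I would stratify by $\h{u}=\phi\times u$ using the identity $\h{u+\psi}=\h{u}+\phi\times\psi$, where now $\phi\times\psi>0$ means moving by $\psi$ strictly \emph{increases} $\h{}$. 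The only adjustment is to use the half-open convention $(a_i,a_{i+1}]$ so that a mismatched $u+\psi$ is ejected through the top of its strip rather than remaining inside. That matching of the half-open orientation to the sign of $\psi\times\phi$ (resp.\ $\phi\times\psi$) is the one subtlety to get right; everything else is a routine sort-and-cut with no serious obstacle.
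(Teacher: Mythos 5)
Your proof is correct and follows essentially the same approach as the paper: the same set $A$, the same half-open strips $[a_i,a_{i+1})$ for part (a) and $(a_i,a_{i+1}]$ for part (b), and the same observation that the refined $\psi$-bounds (resp.\ $\phi$-bounds) preserve the (truncated) subtile structure. You have in fact made explicit the sign argument ($\psi\times\phi<0$, resp.\ $\phi\times\psi>0$) that the paper leaves as an unstated ``observe that,'' which is a welcome clarification rather than a deviation.
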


\restateThmTileDecomposition
\begin{proof}
We first construct in $\O(m^2)$ time a set $\Gamma \subseteq \Z^2$ such that $\absolute{\Gamma} = \O(\min\set{m, k})$ and $\forall u \in \Z^2 \; \exists \gamma \in \Gamma : u \equiv \gamma$ (\Cref{lattice_base}). Next, we use it to partition $R$ into a set of (truncated) subtile strings $\S$, such that $\absolute{\S} = \O(\min\set{m, k})$: Namely, for each $\gamma \in \Gamma$ (see \Cref{lattice_base}), we construct a (truncated) subtile string defined as the restriction of $R$ to $\d{R} \cap (\L + \gamma)$. For that, we consider each $u \in \d{R}$, construct its basis decomposition $u = s\cdot \phi + t\cdot \psi$, where $s,t\in \R$ by solving a system of linear equations in $\O(1)$ time, and add $u$ into the restriction corresponding to $\gamma = (s-\floor{s}) \cdot \phi + (t-\floor{t}) \cdot \psi \in \Gamma$. 

	We now apply \Cref{cut_partitioning} (\ref{partition_a}) to partition each $S \in \S$  into a set of (truncated) subtile strings $\S'$, such that $\S'$ partitions $R$ and $\Ham(S' + \phi, S') = 0$ for every $S' \in \S'$.
	Note that
	\[ \absolute{\S'} = \sum_{S \in \S} \O(\Ham(S + \phi, S) + 1) = \O(\Ham(R + \phi, R) + \absolute{\S}) = \O(k),\]
	since $\phi$ is a $\O(k)$-period of $R$.
	Finally, we apply  \Cref{cut_partitioning} (\ref{partition_b}) to partition each $S' \in \S'$ into a set of (truncated) subtile strings $\S''$, such that $\S''$ partitions $R$ and $\Ham(S'' + \phi, S'') = 0$ and $\Ham(S'' + \psi, S'') = 0$ for every $S'' \in \S''$.
	Again we have
	\[ \absolute{\S''} = \sum_{S' \in \S'} \O(\Ham(S' + \psi, S') + 1) = \O(\Ham(R + \psi, R) + \absolute{\S'}) = \O(k),\]
	since $R$ has a $\O(k)$-period $\psi$.
	By \Cref{monochromacy_condition}, the strings $S'' \in \S''$ are monochromatic.
	To construct~$\S''$, the algorithm uses $\tO(\absolute{\d{R}} + k)$ total time.
\end{proof}

Since $\absolute{\x{\phi}}, \absolute{\y{\phi}}, \absolute{\x{\psi}}, \absolute{\y{\psi}} \le n - m \le m / 2$, the string $P$ is a truncated tile string and we can apply \Cref{tile_decomposition} to partition it in time $\tO(m^2 + k)$ into a set of monochromatic truncated subtile strings $\V$. We then group the strings in $\V$ based on the single character they contain. Specifically, for every character $a \in \Sigma$ present in $P$, we construct a set $\V_a = \set{V : V \in \V, \getchar{V} = a}$.

\subsection{Text partitioning.}

While \cref{tile_decomposition} is sufficient to partition $P$ (which has $\O(k)$-periods $\phi$ and $\psi$) into a small number of monochromatic truncated subtile strings, $T$ is not necessarily periodic. Hence we are obliged to extend \cref{tile_decomposition} further to obtain a partitioning for $T$. We focus on the \emph{active text}, i.e., the positions of $T$ that actually participate in at least one of the $\O(k)$-mismatch occurrences of $P$.

\restateDefActiveText%
\restateObsActiveText

\restateDefPeripherality

In the remainder of the section, we show that $\Ta$ is close to being $\O(k)$-periodic with periods $\psi$ and $\phi$. 
This allows us to partition $\Ta$ into one peripheral string and a small number of monochromatic subtile strings.

\restateThmTextDecomposition

Now we prove the theorem. 
We will cut the text into parallelograms, and we assume that a parallelogram contains its border and its vertices.

\begin{definition}[Parallelogram grid]
	For $\ell \in \Z^+$, consider a sequence of $\ell+1$  distinct lines $h_0, \dots, h_\ell$ parallel to $\phi$ and a sequence of $\ell+1$ distinct lines $v_0, \dots, v_\ell$ parallel to $\psi$.
	For every $i, j \in [\ell + 1]$ let $w_{i, j}$ be the intersection of $h_i$ and $v_j$ (since $\phi$ and $\psi$ are not collinear, it is well-defined).
	For every $i, j \in [\ell]$ define $p_{i, j}$ to be the parallelogram (possibly trivial) with vertices $w_{i, j}, w_{i + 1, j}, w_{i + 1, j + 1}, w_{i, j + 1}$. We say that a set $\{p_{i,j} : i,j \in [\ell]\}$, is a \emph{parallelogram grid} of size $\ell$. 
\end{definition}

\begin{restatable}{lemma}{restateLemParallelogramGrid}\label{lm:parallelogram_grid}
For all $\ell \in \Z^2$, one can construct in $\O(m^2 + \ell^2)$ time a parallelogram grid $\{p_{i,j} : i,j \in [\ell]\}$, which satisfies each of the following:
\begin{enumerate}[(a)]
\item \label{it:unique} For all $u \in [n]^2$, there exist unique $i,j$ such that $u \in p_{i,j}$.

\item \label{it:small_parallelogram} For all $i, j \in [\ell]$ and every $u, v \in \X(p_{i, j}) \times \Y(p_{i, j})$, we have $\absolute{u - v} = \O(n / \ell)$.

\item \label{it:monotonicity} For all $i \in [\ell - 1]$ and $j \in [\ell]$ we have
	\eq{
		\min \X(p_{i, j}) &< \min \X(p_{i + 1, j}) & \min \Y(p_{i, j}) &\le \min \Y(p_{i + 1, j}) \\
		\max \X(p_{i, j}) &< \max \X(p_{i + 1, j}) & \max \Y(p_{i, j}) &\le \max \Y(p_{i + 1, j})
	}
	
For all $i \in [\ell]$ and $j \in [\ell - 1]$ we have
	\eq{
		\min \X(p_{i, j}) &\ge \min \X(p_{i, j + 1})  & \min \Y(p_{i, j}) &< \min \Y(p_{i, j + 1}) \\
		\max \X(p_{i, j}) &\ge \max \X(p_{i, j + 1})  & \max \Y(p_{i, j}) &< \max \Y(p_{i, j + 1})
	}
\end{enumerate}
\end{restatable}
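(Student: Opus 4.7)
The plan is to take $h_i$ and $v_j$ as suitably offset level sets of the bilinear forms $u \mapsto \phi \times u$ and $u \mapsto \psi \times u$. Since $\phi \times \phi = 0$, the set $\set{u \in \R^2 : \phi \times u = c}$ is exactly a line parallel to $\phi$, and likewise for $\psi$. For $u \in [n]^2$, both $\phi \times u = \x{\phi}\y{u} - \y{\phi}\x{u}$ and $\psi \times u$ are integers; using $\x{\phi}, \x{\psi} \ge 0$, $\y{\phi} < 0$, $\y{\psi} \ge 0$, one checks that $\phi \times u \in [0, (\x{\phi} - \y{\phi})(n-1)]$ and $\psi \times u \in [-\y{\psi}(n-1), \x{\psi}(n-1)]$. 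I will pick half-integer offsets $a_0 < a_1 < \cdots < a_\ell$ equally spaced with $a_0 < 0$ and $a_\ell > (\x{\phi} - \y{\phi})(n-1)$, and analogously half-integer $b_0 < \cdots < b_\ell$ covering the $\psi$-range, and set $h_i := \set{u : \phi \times u = a_i}$, $v_j := \set{u : \psi \times u = b_j}$. Setting these up costs $\O(\ell)$, and listing the $\ell^2$ parallelograms takes $\O(\ell^2)$.

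Property (a) is immediate from the half-integer choice: every $u \in [n]^2 \subseteq \Z^2$ has $\phi \times u, \psi \times u \in \Z$, so it lies strictly between two consecutive $a_i$'s and between two consecutive $b_j$'s. Since $p_{i,j}$ is precisely the closed region $\set{u \in \R^2 : a_i \le \phi \times u \le a_{i+1},\ b_j \le \psi \times u \le b_{j+1}}$, this yields a unique pair $(i, j)$.

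For (b), solving the $2 \times 2$ linear system for the intersection of $h_i$ and $v_j$ in the basis $\set{\phi, \psi}$ gives
\[
w_{i,j} \;=\; \frac{1}{\phi \times \psi}\bigl(-b_j\,\phi + a_i\,\psi\bigr),
\]
where we use $\phi \times \psi > 0$ from \Cref{get_periods}. Hence $p_{i,j}$ is a parallelogram with side vectors $\tfrac{a_{i+1}-a_i}{\phi \times \psi}\,\psi$ and $-\tfrac{b_{j+1}-b_j}{\phi \times \psi}\,\phi$, and the chosen spacings are $\O((\x{\phi} - \y{\phi})n/\ell) = \O(\absolute{\phi}\cdot n/\ell)$ and $\O(\absolute{\psi}\cdot n/\ell)$ respectively. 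Combined with $\phi \times \psi = \absolute{\phi}\absolute{\psi}\sin\alpha \ge \tfrac12\absolute{\phi}\absolute{\psi}$ (from $\sin\alpha \ge 1/2$ in \Cref{get_periods}), both side vectors have length $\O(n/\ell)$, so every pair of points in $\X(p_{i,j}) \times \Y(p_{i,j})$ lies within $\O(n/\ell)$.

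For (c), the same formula yields $w_{i+1,j} - w_{i,j} = \tfrac{a_{i+1}-a_i}{\phi \times \psi}\,\psi$, a positive multiple of $\psi \in (0,+\infty) \times [0, +\infty)$; translating $p_{i,j}$ by this vector produces $p_{i+1,j}$, which strictly increases every $x$-coordinate (since $\x{\psi} > 0$) and weakly increases every $y$-coordinate. Likewise $w_{i,j+1} - w_{i,j} = -\tfrac{b_{j+1}-b_j}{\phi \times \psi}\,\phi$ is a negative multiple of $\phi \in [0,+\infty) \times (-\infty, 0)$, which weakly decreases every $x$-coordinate and strictly increases every $y$-coordinate, matching both claimed chains of inequalities. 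The only step requiring some care is (b), where one must remember to invoke $\sin\alpha \ge 1/2$ to keep $\phi \times \psi$ a constant fraction of $\absolute{\phi}\absolute{\psi}$ and thereby prevent the parallelograms from degenerating into arbitrarily thin slivers.
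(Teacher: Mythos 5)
Your proof is correct and takes essentially the same approach as the paper's: grid lines as level sets of the bilinear forms $\phi \times \cdot$ and $\psi \times \cdot$, the bound $\sin\alpha \ge 1/2$ from \cref{get_periods} to control parallelogram diameter in property~(b), and the quadrant constraints on $\phi$ and $\psi$ to get the monotonicity chains in property~(c). The one worthwhile simplification in your version is the choice of half-integer offsets, exploiting that cross products of integer vectors are integers, so no grid line can pass through a point of $[n]^2$; the paper instead sorts all $\O(n^2)$ cross-product values and places offsets in the gaps, which is more elaborate (and, incidentally, costs an extra logarithmic factor over the stated $\O(m^2 + \ell^2)$).
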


\begin{figure}
        \centering
        \includegraphics[width=0.75\textwidth]{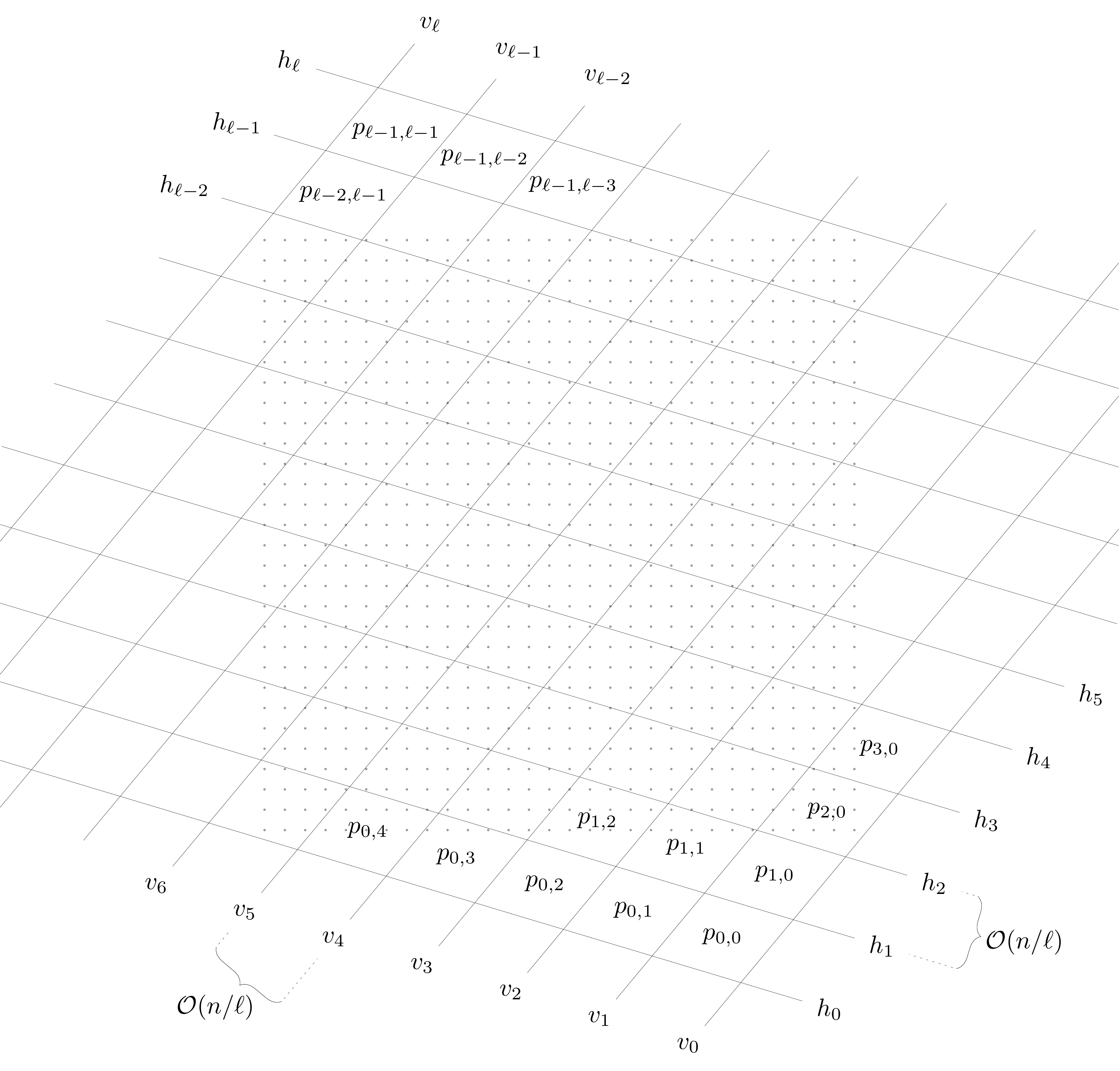}
        \caption{Parallelogram grid.}
        \label{figure:lemma5}
\end{figure}

Intuitively, it suffices to take a sequence of $\ell+1$ equispaced lines parallel to $\psi$ and a sequence of $\ell+1$ equispaced lines parallel to $\phi$ (see \cref{figure:lemma5}). The first property then follows from the fact that the angle between $\psi$ and $\varphi$ is in $[30^\circ,150^\circ]$ (hence, the parallelograms in the grid cannot be too ``stretched out''). The second property is implied by the fact that $\psi$ is in $(0,+\infty)\times [0,+\infty)$ and $\phi$ in $[0,\infty) \times (-\infty,0)$. A formal proof, which uses nothing but standard arithmetic, can be found in \cref{app:parallelogram_grid}.  

We apply \Cref{lm:parallelogram_grid} to construct a parallelogram grid $\{p_{i,j}\}$ of size $\ell$. If $\exists i, j \in [\ell] : \max \X(p_{i, j}) - \min \X(p_{i, j}) \ge m / 4$, then by \Cref{lm:parallelogram_grid} (\ref{it:small_parallelogram}) we have $m / 4 \le \max \X(p_{i, j}) - \min \X(p_{i, j}) = \O(n / \ell)$, and thus $\ell = \O(1)$. In this case, we can return a trivial partitioning where the set of monochromatic subtile strings is empty and $F = \Ta$, since $\Ta$ is $\O(m)$-peripheral. We can use an analogous argument if $\exists i,j \in \ell : \max \Y(p_{i, j}) - \min \Y(p_{i, j}) \ge m / 4$. Below we assume $\max \X(p_{i, j}) - \min \X(p_{i, j}) < m / 4$ and $\Y(p_{i, j}) - \min \Y(p_{i, j}) < m / 4$ for all $i, j \in [\ell]$. 

\begin{definition}
A set $U \subseteq \Z^2$ is \emph{coverable} if $U \subseteq \d{P + q}$ for some $q \in Q$.
\end{definition}

The following claim gives the intuition behind this definition:

\begin{claim}\label{coverable is periodic}
	The restriction of $T$ to a coverable set has $\O(k)$-periods $\phi$ and $\psi$.
\end{claim}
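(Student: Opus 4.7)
The plan is to apply a triangle-inequality argument for Hamming distance, reducing the claim to the near-periodicity of $P$ (\Cref{periodicity_lemma}) and the second defining property of $Q$, namely $\Ham(T,P+q)\le 2k$ for all $q\in Q$. Since $U$ is coverable, I would fix some $q\in Q$ with $U\subseteq \d{P+q}$, let $S$ denote the restriction of $T$ to $U$, and aim to prove $\Ham(S+\phi,S)=\O(k)$; the argument for $\psi$ will be entirely analogous.

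Next, I would classify each mismatch position of $S+\phi$ versus $S$, that is, each $v$ with $v,v-\phi\in U$ and $T(v)\neq T(v-\phi)$, according to whether $T$ agrees with $P+q$ at $v$ and at $v-\phi$. If $T(v)\neq (P+q)(v)$, then $v\in\MI(T,P+q)$, so there are at most $2k$ such positions. If instead $T(v)=(P+q)(v)$ but $T(v-\phi)\neq(P+q)(v-\phi)$, then $v-\phi\in\MI(T,P+q)$, contributing at most another $2k$. In the remaining case both values of $T$ agree with $P+q$, which forces $(P+q)(v)\neq (P+q)(v-\phi)$, i.e.\ $v\in \MI((P+q)+\phi,\,P+q)$; since shifting preserves Hamming distance, this set has size $\Ham(P+\phi,P)=\O(k)$ by \Cref{periodicity_lemma}, as $\phi$ is a difference of two elements of $Q$.

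Summing the three buckets yields $\Ham(S+\phi,S)=\O(k)$, and repeating the argument with $\psi$ in place of $\phi$ completes the proof. I do not foresee a serious obstacle; the only minor bookkeeping is verifying in the last case that $v$ actually lies in the intersection $\d{P+q}\cap\d{(P+q)+\phi}$, which is immediate because $v,v-\phi\in U\subseteq \d{P+q}$.
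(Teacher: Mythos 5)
Your proof is correct and is essentially the same argument as the paper's: the paper applies the triangle inequality $\Ham(R+\phi,R)\le\Ham(R+\phi,P+q+\phi)+\Ham(P+q+\phi,P+q)+\Ham(P+q,R)$ and bounds each term, while your three buckets are exactly the pointwise unfolding of that chain (bucket 2 corresponds to the first term, bucket 3 to the middle term, bucket 1 to the last). No substantive difference.
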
	
	\begin{proof}
		Let $R$ denote the restriction. For $q \in Q$, such that $\d{R} \subseteq \d{P + q}$ we have
\begin{align*}
&\Ham(R + \phi, R) \le \\
&\Ham(R + \phi, P + q + \phi) + \Ham(P + q + \phi, P + q) + \Ham(P + q, R) \le \\
&\Ham(T, P + q) + \Ham(P + \phi, P) + \Ham(P + q, T) = \O(k)
\end{align*}
and identically $\Ham(R + \psi, R) = \O(k)$.
\end{proof}

Let $z = \frac{n - 1}{2}$. Split the 2D plane with two lines $x = z$ and $y = z$ into four quarters:
\begin{align}
\label{eq:quarters}
\begin{split}
&K_1 = (z, +\infty) \times (z, +\infty) \quad\quad K_2 = (-\infty, z) \times (z, +\infty)\\
&K_3 = (-\infty, z) \times (-\infty, z) \quad\quad K_4 = (z, +\infty) \times (-\infty, z)
\end{split}
\end{align}

\begin{claim}\label{coverable}
Let $P$ be a union of parallelograms~$p_{i,j}$ which have a non-empty intersection with $K_1$. If $v = (\max \X(P), \max \Y(P)) \in \d{\Ta}$, then $U = P \cap \Z^2$ is coverable. (For $K_2$, an analogous claim is true with $v: = (\min \X(P), \max \Y(P))$, for $K_3$ with $v : = (\min \X(P), \min \Y(P))$, and for $K_4$ with $v : = (\max \X(P), \min \Y(P))$.)
\end{claim}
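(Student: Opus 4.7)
Plan. I would prove the $K_1$ case in detail; the other three quadrants follow by identical arguments after reflecting the coordinate axes. The geometric starting point is that each parallelogram $p_{i,j}$ making up $P$ intersects $K_1 = (z, +\infty) \times (z, +\infty)$, so it contains some point with $x > z$ and $y > z$. Combined with the standing bounds $\max \X(p_{i,j}) - \min \X(p_{i,j}) < m/4$ and $\max \Y(p_{i,j}) - \min \Y(p_{i,j}) < m/4$, every point of $p_{i,j}$ must satisfy $x > z - m/4$ and $y > z - m/4$. Taking the union over all $p_{i,j}$ in $P$ gives $\min \X(P), \min \Y(P) > z - m/4$, and the same strict lower bound transfers to the integer set $U = P \cap \Z^2$.

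The next step is to convert this real-valued bound into the integer inequality $\min \X(U), \min \Y(U) \geq n - m$. Using $z = (n-1)/2$ and the standing assumption $n \leq \tfrac{3}{2}m$, a short algebraic manipulation shows that $(n-1)/2 - m/4 \geq n - m - 1$, and since $\min \X(U) \in \Z$ is strictly greater than $(n-1)/2 - m/4 \geq n - m - 1$, I conclude $\min \X(U) \geq n - m$, and analogously for $\Y$.

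The decisive step is to exhibit the required $q \in Q$. Because $v \in \d{\Ta}$, by definition there exists some $q \in Q$ with $v \in \d{P + q}$ (where now $P$ denotes the pattern), i.e., $q \leq v \leq q + (m-1, m-1)$ componentwise. I will argue that this very $q$ already covers the whole of $U$. On one hand, $q \in Q \subseteq [n - m + 1]^2$ forces $q \leq (n - m, n - m)$, which combined with the previous paragraph gives $q \leq (\min \X(U), \min \Y(U))$. On the other hand, $q + (m-1, m-1) \geq v = (\max \X(P), \max \Y(P)) \geq (\max \X(U), \max \Y(U))$ directly from the definition of $v$. Hence $U \subseteq \d{P + q}$, establishing coverability.

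The remaining quadrants follow by symmetry. For $K_3$, the dual bound $\max \X(P), \max \Y(P) < z + m/4$ combined with the analogous integer rounding gives $\max \X(U), \max \Y(U) \leq m - 1$; then any $q \in Q$ covering $v = (\min \X(P), \min \Y(P))$ automatically satisfies $q + (m-1, m-1) \geq (m-1, m-1) \geq (\max \X(U), \max \Y(U))$ since $q \geq (0, 0)$, while the lower-corner inclusion holds because $q \leq v = (\min \X(P), \min \Y(P)) \leq (\min \X(U), \min \Y(U))$. The mixed cases $K_2$ and $K_4$ combine one bound of each type. I expect the main obstacle to be the tightness of the integer rounding step in the second paragraph: at the boundary $n = \tfrac{3}{2}m$ the inequality $n - m \leq \min \X(U)$ becomes an equality, so the proof genuinely consumes the full slack of the standing assumption $n \leq \tfrac{3}{2}m$ and also relies on the fact that the parallelogram width bound was specifically chosen as $m/4$.
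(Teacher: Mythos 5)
Your proof is correct and takes essentially the same approach as the paper: both exhibit the covering $q$ as the one guaranteed by $v \in \d{\Ta}$, then bound $\X(U)$ and $\Y(U)$ between $q$ and $q + (m-1,m-1)$ using the parallelogram-size constraint $< m/4$ together with $n \le \tfrac{3}{2}m$. Your version is if anything slightly more careful with the integer rounding step (the paper's chain $n/2 - m/4 \le \x{u}$ is stated without the explicit appeal to integrality that your $\min\X(U) > n-m-1 \Rightarrow \min\X(U) \ge n-m$ step supplies), but the underlying argument is identical.
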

\begin{proof}
As $v \in \d{\Ta}$, there exists $q \in Q$ such that $v \in [m]^2 + q$. As $P$ is a union of parallelograms $p_{i,j}$ which have a non-empty intersection with $K_1$, we have $\x{v} \ge z$ and $\y{v} \ge z$. Recall that we assume $n \le \frac{3}{2} m$ and $\max \X(p_{i, j}) - \min \X(p_{i, j}) < m / 4$ and $\Y(p_{i, j}) - \min \Y(p_{i, j}) < m / 4$ for all $i, j \in [\ell]$. Therefore, for all $u \in P$ we have
\begin{align*}
&\x{q} \le n-m \le n/2-m/4 \le \x{u} \le \x{v} \le \x{q} + m-1\\
&\y{q} \le n-m \le n/2-m/4 \le \y{u} \le \y{v} \le \y{q} + m-1 
\end{align*}
\noindent Consequently, $P \subseteq [m]^2+q$ and hence coverable.
\end{proof}

\begin{claim}\label{peripheral}
For all $i,j \in [\ell]$, the set $U = p_{i,j} \cap \Z^2$ is $\O(n / \ell)$-peripheral if one of the points $(\min \X(U), \min \Y(U))$, $(\min \X(U), \max \Y(U))$, $(\max \X(U), \min \Y(U))$, $(\max \X(U), \max \Y(U))$ is not in $\d{\Ta}$.
\end{claim}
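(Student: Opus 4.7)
The plan is to derive the claim directly from \Cref{lm:parallelogram_grid}(\ref{it:small_parallelogram}), which bounds the diameter of the Cartesian product $\X(p_{i,j}) \times \Y(p_{i,j})$ by $\O(n/\ell)$. The key observation is that every candidate ``corner'' point of the bounding box of $U$ lies in this product, and so does every $u \in U$, so once we locate a corner outside $\d{\Ta}$ we obtain a witness for the border distance of every $u \in U$.

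More concretely, I would proceed as follows. Let $w$ be the corner among $(\min \X(U), \min \Y(U))$, $(\min \X(U), \max \Y(U))$, $(\max \X(U), \min \Y(U))$, $(\max \X(U), \max \Y(U))$ which, by assumption, is not in $\d{\Ta}$. Its coordinates are extreme values of $\X(U), \Y(U) \subseteq \Z$, so $w \in \Z^2$, and hence $w \in \Z^2 \setminus \d{\Ta}$, which is exactly the kind of witness used in the definition of border distance.

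Now fix any $u \in U = p_{i,j} \cap \Z^2$. Trivially $\x{u} \in \X(p_{i,j})$ and $\y{u} \in \Y(p_{i,j})$, so $u \in \X(p_{i,j}) \times \Y(p_{i,j})$. Since $U \subseteq p_{i,j}$, we have $\X(U) \subseteq \X(p_{i,j})$ and $\Y(U) \subseteq \Y(p_{i,j})$, so the coordinates of $w$ (which are extrema of $\X(U)$ and $\Y(U)$) also lie in $\X(p_{i,j})$ and $\Y(p_{i,j})$ respectively; thus $w \in \X(p_{i,j}) \times \Y(p_{i,j})$ as well. Applying \Cref{lm:parallelogram_grid}(\ref{it:small_parallelogram}) to the pair $u, w$ yields $\absolute{u - w} = \O(n/\ell)$, so the border distance of $u$ is at most $\O(n/\ell)$. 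As this holds for every $u \in U$, the set $U$ is $\O(n/\ell)$-peripheral.

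There is no real obstacle here; the proof is essentially an unpacking of definitions once one notices that the corners of the axis-parallel bounding box of $U$ live in the Cartesian product $\X(p_{i,j}) \times \Y(p_{i,j})$ even though they are usually not elements of $p_{i,j}$ itself. The only point requiring (minor) care is to verify that each such corner is an integer point so that it is a legitimate witness in the definition of border distance, which is immediate since extrema of subsets of $\Z$ are integers.
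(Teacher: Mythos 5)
Your proof is correct and takes essentially the same route as the paper's: pick the corner not in $\d{\Ta}$ as the witness, observe that both it and any $u \in U$ lie in $\X(p_{i,j}) \times \Y(p_{i,j})$, and invoke Lemma~\ref{lm:parallelogram_grid}(\ref{it:small_parallelogram}) to bound their distance by $\O(n/\ell)$. The paper's version is terser but identical in substance; you merely spell out the (correct) justification that the corner is an integer point and that both points land in the Cartesian product to which the lemma applies.
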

\begin{proof}
W.l.o.g., $v = (\max \X(U), \max \Y(U)) \notin \d{\Ta}$. For all $u \in U$, we have $\absolute{u-v} = \O(n / \ell)$ by \cref{lm:parallelogram_grid} (\ref{it:small_parallelogram}), and hence $U$ is $\O(n / \ell)$-peripheral.
\end{proof}

\begin{lemma}\label{lm:parallelogram_merge}
Given $\ell \in \Z^+$, $\d{\Ta}$ can be covered with a set of $\O(\ell)$ parallelograms $\mathcal{P}$ with sides collinear to $\phi$ and $\psi$ such that for every $p \in \mathcal{P}$ the set $p \cap \Z^2$ is either coverable or $\O(n / \ell)$-peripheral. 
\end{lemma}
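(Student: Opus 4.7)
The plan is to handle the four quarters $K_1, K_2, K_3, K_4$ separately and produce $\O(\ell)$ output parallelograms per quarter. By the symmetric roles of the quarters (established by the $K_2, K_3, K_4$ analogues of Claims \ref{coverable} and \ref{peripheral}, which are obtained by reflecting the argument of Claim \ref{coverable}), I focus on $K_1$.

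The strategy is to sweep column by column in the grid. For each column index $j \in [\ell]$, let $R_j$ denote the set of row indices $i$ such that $p_{i,j}$ intersects $K_1$; equivalently, $\max \X(p_{i,j}) > z$ and $\max \Y(p_{i,j}) > z$. By the monotonicity property of Lemma \ref{lm:parallelogram_grid}, both of these quantities are non-decreasing in $i$, so $R_j$ is a contiguous suffix $[i_j^{\min}, \ell - 1]$ of $[\ell]$ (possibly empty). Within each non-empty $R_j$, I define $a_j^*$ as the largest $i \in R_j$ such that the axis-aligned max-max corner of $p_{i, j} \cap \Z^2$ lies in $\d{\Ta}$ (undefined if no such $i$ exists). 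The output for column $j$ then consists of at most two parallelograms: a coverable piece $U_j = \bigcup_{i \in [i_j^{\min}, a_j^*]} p_{i, j}$ (omitted if $a_j^*$ is undefined) and a peripheral piece $V_j = \bigcup_{i \in (a_j^*, \ell - 1] \cap R_j} p_{i, j}$ (omitted if $a_j^* = \ell - 1$).

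Both $U_j$ and $V_j$ are unions of contiguous column-ranges, hence parallelograms with sides collinear to $\phi$ and $\psi$ (their sides lie on the grid lines $v_j, v_{j+1}$ and two of the $h_i$'s). For $U_j$, the monotonicity of $\max \X$ and $\max \Y$ in $i$ implies that the max-max corner of $U_j \cap \Z^2$ coincides with that of $p_{a_j^*, j} \cap \Z^2$, which lies in $\d{\Ta}$ by construction; Claim \ref{coverable} then certifies $U_j \cap \Z^2$ as coverable. For $V_j$, by the maximality of $a_j^*$, every $p_{i, j}$ forming $V_j$ has its max-max corner outside $\d{\Ta}$, so Claim \ref{peripheral} makes each $\O(n/\ell)$-peripheral, and the property is inherited by the union point-by-point. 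Coverage of $\d{\Ta}$ follows because each $u \in \d{\Ta}$ lies in some quarter $K$ and in some grid parallelogram $p_{i, j}$ intersecting $K$, hence is contained in either the $U_j$ or the $V_j$ produced for that quarter.

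The count is immediate---two parallelograms per column, $\ell$ columns, four quarters---giving $\O(\ell)$ in total. The only step requiring care is verifying that a contiguous column-union forms a single parallelogram with the expected max-max corner at its topmost constituent; this reduces to the geometry of the grid lines $v_j$ and $h_i$ combined with the fact that $\phi$ and $\psi$ lie in neighbouring quadrants. I do not anticipate a genuine obstacle, because Claim \ref{coverable} is already stated for arbitrary unions of $K_1$-intersecting grid parallelograms (not just individual ones), and peripherality in Claim \ref{peripheral} automatically transfers from single parallelograms to finite unions.
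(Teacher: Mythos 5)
Your sweep for $K_1$ is sound and mirrors the paper's construction of the per-column families $P_j^1,\dots,P_j^4$: by the monotonicity property of Lemma~\ref{lm:parallelogram_grid}, the max-max corner of a contiguous column-union coincides with that of its topmost constituent $p_{a_j^*,j}$, so Claim~\ref{coverable} certifies the low part and Claim~\ref{peripheral} the high part. By central point reflection this also works for $K_3$, now tracking the min-min corner and swapping the roles of prefix and suffix.

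The gap is the appeal to symmetry for $K_2$ and $K_4$. For $K_2$, the corner relevant to Claim~\ref{coverable} is the min-max corner $(\min\X(\cdot),\max\Y(\cdot))$. Under a column sweep (fixed $j$, increasing $i$), Lemma~\ref{lm:parallelogram_grid} gives $\min\X$ strictly increasing \emph{and} $\max\Y$ non-decreasing, so for a contiguous union $U_j=\bigcup_{i\in I}p_{i,j}$ the min-max corner is $\bigl(\min\X(p_{\min I,j}),\,\max\Y(p_{\max I,j})\bigr)$ --- a mixture of the coordinates of two \emph{different} parallelograms, not the corner of any single one. Your key step, namely that the corner of $U_j$ coincides with that of $p_{a_j^*,j}$ and hence lies in $\d{\Ta}$ by choice of $a_j^*$, therefore does not transfer. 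The ``reflection'' that handles $K_2$ and $K_4$ is not a reflection of the plane but a transposition of the two grid directions: you must sweep over $j$ at fixed $i$ for these quarters (then $\min\X$ weakly decreases while $\max\Y$ strictly increases as $j$ grows, so the min-max corner of a row-union is again the corner of its topmost constituent). This is exactly why the paper's proof builds both the column families $P_j^x$ and the row families $R_i^x$ rather than only one of the two. The rest of your argument --- coverage, the $\O(\ell)$ count, and working with intersections rather than full containment (which lets you dispense with the paper's explicit middle belt $\I$) --- is fine, so the fix is local: column sweeps for $K_1,K_3$ and row sweeps for $K_2,K_4$, as the paper does.
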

\begin{proof}
By the definition of $\Ta$, there does not exist $u \in \Z^2 \cap K_1 \setminus \d{\Ta}$ and $v \in \d{\Ta}$, such that $\x{u} \le \x{v}$ and $\y{u} \le \y{v}$. From this, \Cref{lm:parallelogram_grid}(\ref{it:monotonicity}), \cref{coverable}, and \cref{peripheral} for every $j \in [\ell]$ there exist $s,s', t',t \in [\ell]$  such that each of the following is satisfied:
\begin{enumerate}
	\item A parallelogram $P_j^1 = \bigcup_{i=0}^{s-1} p_{i, j} \cup \Z^2$ is $\O(n/\ell)$-peripheral,
	\item A parallelogram $P_j^2 = \bigcup_{i=s}^{s'-1} p_{i, j} \cup \Z^2$ is coverable, 
	\item Parallelograms $p_{s, j}, \dots, p_{t - 1, j}$ are not fully contained neither in $K_1$ nor in $K_3$,
	\item A parallelogram $P_j^3 = \bigcup_{i=t}^{t'-1} p_{i, j} \cup \Z^2$ is coverable,
	\item A parallelogram $P_j^4 = \bigcup_{i=t'}^{\ell - 1} p_{i, j} \cup \Z^2$ is $\O(n/\ell)$-peripheral.
\end{enumerate}

\noindent Similarly, for every $i \in [\ell]$ there exist $s,s', t',t \in [\ell]$  such that each of the following is satisfied:
\begin{enumerate}
	\item A parallelogram $R_i^1 = \bigcup_{j=0}^{s-1} p_{i, j} \cup \Z^2$ is $\O(n/\ell)$-peripheral,
	\item A parallelogram $R_i^2 = \bigcup_{j=s}^{s'-1} p_{i, j} \cup \Z^2$ is coverable, 
	\item Parallelograms $p_{i, s}, \dots, p_{i, t-1}$ are not fully contained neither in $K_2$ nor in $K_4$,
	\item A parallelogram $R_i^3 = \bigcup_{j=t}^{t'-1} p_{i, j} \cup \Z^2$ is coverable,
	\item A parallelogram $R_i^4 = \bigcup_{j=t'}^{\ell - 1} p_{i, j} \cup \Z^2$ is $\O(n/\ell)$-peripheral. 
\end{enumerate}

\begin{figure}
        \centering
        \includegraphics[width=0.75\textwidth]{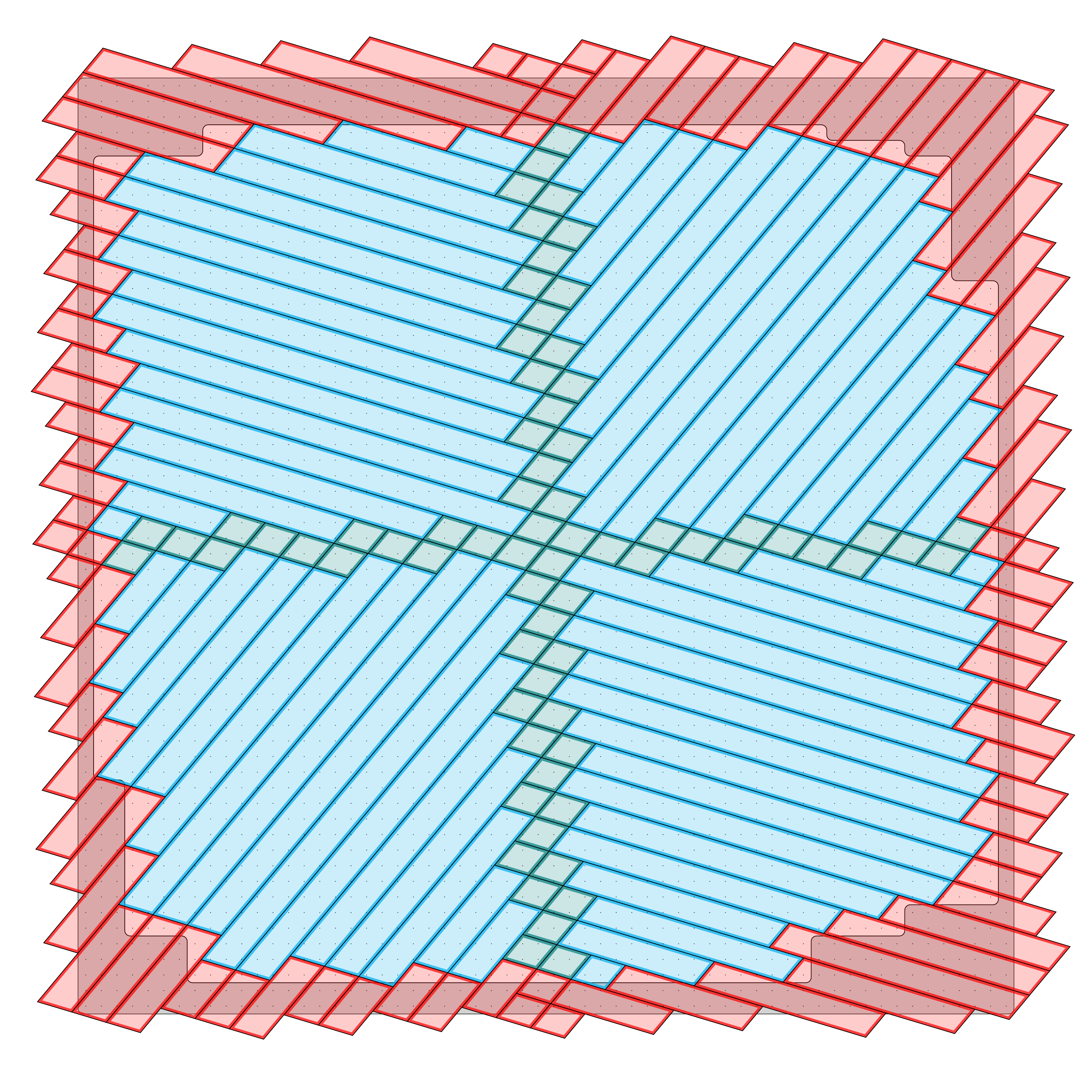}
        \caption{Parallelogram cover of $\Ta$. The polygon in the center corresponds to $\Ta$. Red parallelograms are $\O(n/\ell)$-peripheral, blue coverable, and green belong to $\I$.}
        \label{figure:text_decomposition}
\end{figure}

Note that there are $\O(\ell)$ parallelograms $P_j^x$ and $R_i^x$ for $i,j \in [\ell]$ and $x \in \set{1, \dots, 4}$ and that they cover all parallelograms $p_{i,j}$ but those that are not fully contained in any of the quarters, in other words, the parallelograms $p_{i, j}$ intersecting  at least one of the lines $x = z$ and $y = z$. Denote the set of such parallelograms by $\I$. 

By \cref{coverable} and \cref{peripheral}, for each parallelogram $p_{i,j} \in \I$ the set $p_{i,j} \cap \Z^2$ is either coverable or $\O(n/\ell)$-peripheral. Furthermore, $\absolute{\I} = \O(\ell)$. Consider a set $U_x$ of intersections of grid lines different from $x=z$ with the line and a set $U_y$ of intersections of grid lines different from $y=z$. Every parallelogram in $\I$ contains $u \in U_x \cup U_y$, and every $u \in \Z^2$ can be contained in at most four parallelograms, implying the bound. 

The set $\mathcal{P} = \I \cup \{P_j^x : j \in [\ell], x \in \{1, \dots, 4\}\} \cup \{R_j^x : j \in [\ell], x \in \{1, \dots, 4\}\}$ satisfies the claim of the lemma. (See \cref{figure:text_decomposition}.) It remains to explain how to construct it efficiently. For all $i,j \in [\ell]$, the values $\min \X(p_{i,j}), \max \X(p_{i,j}), \min \Y(p_{i,j}), \max \X(p_{i,j})$ can be inferred in $\O(1)$ time. As $p_{i,j}$ is a connected figure, it intersects with $x = z$ if and only if $\min \X(p_{i,j}) \le \max \X(p_{i,j})$ and with $y = z$ if and only if $\min \Y(p_{i,j}) \le \max \Y(p_{i,j})$. Both of the conditions can be checked in $\O(1)$ time as well. Plugging this bounds into our construction, we obtain that $\mathcal{P}$ can be built in $\O(\ell^2) = \O(m^2)$ time. 
\end{proof}

We apply \cref{lm:parallelogram_merge} to $\Ta$. Let $\mathcal{P}_1 = \{p \in \mathcal{P} : p \text { is coverable}\}$ and $\mathcal{P}_2 = \{p \in \mathcal{P} : p \text { is }\O(n/\ell)\text{-peripheral}\}$. For each $p \in \mathcal{P}_1$, consider a restriction $R$ of $\Ta$ to $p$. By \Cref{coverable is periodic}, it has $\O(k)$-periods $\phi$ and $\psi$, thus, by \Cref{tile_decomposition}, it can be partitioned in time $\tO(\absolute{\d{R}} + k)$ (after $\O(m^2)$-time shared preprocessing of $\phi$ and $\psi$) into $\O(k)$ monochromatic subtile strings. Finally, construct in $\O(m^2)$ time the restriction of $\Ta$ to $\cup_{p \in \mathcal{P}_2} p$, which is a $\O(n / \ell)$-peripheral string. The total time of the partitioning algorithm is therefore $\O(m^2+\ell k)$.

This concludes the proof of \cref{text_decomposition}.

\section{Algorithm}
\label{sec:algorithm}

Now we show how to algorithmically exploit the partitioning of the pattern and the text. In \cref{sec:algorithm:center}, we show that the Hamming distances between $P$ and a string that can be partitioned into a set of monochromatic subtile strings can be computed efficiently:

\restateThmSparseAlgo

As an immediate corollary, we obtain an $\tO(m^2 + mk^2)$-time solution for the $k$-mismatch problem. First, we apply \Cref{text_decomposition} for a large enough value $\ell = \Theta(m)$ to partition the active text $\Ta$ into a set $\S$ of $\O(mk)$ monochromatic subtile strings and a $0$-peripheral string, empty by definition, in time $\tO(m^2 + mk)$. For every $q \in Q$ we then have $\Ham(P + q, \Ta) = \sum_{S \in \S} \Ham(P + q, S)$, and these values can be computed in $\tO(m^2 + mk^{2})$ total time by \cref{th:sparse_algo}, since $\sum_{S \in \S} \absolute{\V_{\getchar{S}}} \le \absolute{\S} \absolute{\V} = \O(mk \cdot k)$. 

To improve the complexity further, we partition the active text using the algorithm from \Cref{text_decomposition} with $\ell = mk^{-3/4}$.
We obtain a set~$\S$ of $\O(mk^{1/4})$ monochromatic subtile strings $\S$, and an $\O(k^{3 / 4})$-peripheral string $F$. For every $q \in Q$ we then have

\[ \Ham(P + q, \Ta) = \Ham(P + q, F) + \sum_{S \in \S} \Ham(P + q, S).\]

By \cref{th:sparse_algo}, we can compute $\sum_{S \in \S} \Ham(P + q, S)$ for every $q \in Q$ in time $\tO(m^2 + mk^{5/4})$, since similarly to above we have $\sum_{S \in \S} \absolute{\V_{\getchar{S}}} \le \absolute{\S} \absolute{\V} = \O(mk^{5/4})$. It remains to compute the values $\Ham(P + q, F)$. In \cref{sec:algorithm:periph}, we show how to efficiently compute the Hamming distances between $P$ and a $d$-peripheral string:

\restateThmDenseAlgo

By substituting $d = \O(k^{3/4})$, we obtain the desired time complexity of $\tO(m^2 + mk^{5/4})$ and hence complete the proof of \cref{th:main}.

\subsection[Counting mismatches between P and monochromatic subtile strings.]{Proof of \cref{th:sparse_algo}.}\label{sec:algorithm:center}
\restateThmSparseAlgo*
Let $U = \bigcup_{S \in \S} \d{S}$. Recall that $\V$ is a set of monochromatic truncated subtile strings that partition $P$, and $\V_a = \set{V : V \in \V, \getchar{V} = a}$ (see \cref{sec:partition:pattern}).
 Observe that
\begin{align}
\label{eq:sparse}
\sum_{S \in \S} \Ham(P + q, S) = \absolute{\d{P + q} \cap U} - \sum_{S \in \S}\sum_{V \in \V_{\getchar{S}}} \absolute{\d{S} \cap \d{V + q}}.
\end{align}

To compute $\absolute{\d{P + q} \cap U}$ for all $q \in Q$, we create strings $P'$ and $T'$ with respective domains $\dom(P)$ and $\dom(T)$. Both strings are filled with character $0$. 
In $T'$, we then replace all positions from $\dom(U)$ with character $1$. 
We observe that $\absolute{\d{P + q} \cap U} = \Ham(P' + q, T')$, which can be computed for all $q \in Q$ in $\tO(m^2)$ time with \cref{cor:sigman2d}.
It remains to show how to compute the second term, let it be $\tau_q$. 

Denote $D = \set{u : u \in \L, \h{u} \ge 0, \s{u} \le 0} = \{ s\phi + t\psi : s,t \in \mathbb N_{\geq 0}\}$, where $\L = \{ s\phi + t\psi : s,t \in \mathbb Z\}$ is the set from \Cref{lattice_congruency} (geometrically, $D$ is the part of the lattice spanned by the angle with the sides collinear to $\phi$ and~$\psi$).

\begin{lemma}\label{primitive}
Let $S$ be a subtile with signature $\phi_0, \phi_1, \psi_0, \psi_1, \gamma$. In constant time, we can compute points $w_{0,0},w_{0,1},w_{1,0},w_{1,1} \in \mathbb Z^2$ such that for every $X \in \mathbb Z^2$ it holds
\[ \absolute{S \cap X} = \sum_{i,j \in \{0,1\}} (-1)^{(i+j)} \cdot \absolute{(D + w_{i,j}) \cap X}\]
and $\forall u \in (D + w_{i,j}) : u \equiv \gamma$. If $\max_{u \in S}(\absolute{u}) = \O(m)$, then $\absolute{w_{i,j}} = \O(m)$.

\begin{proof}
By definition of a subtile, every point $u \in S$ satisfies $u - \gamma = s\phi + t\psi$, or equivalently $u = s\phi + t\psi + \gamma$ for some $s,t\in\mathbb Z$. 
By definition of a tile, it also holds $\phi \times (s\phi + t\psi + \gamma) \in [\phi_0, \phi_1]$ and $\psi \times (s\phi + t\psi + \gamma) \in [\psi_0, \psi_1]$. Since the cross-product has the distributive property, and the cross-product of collinear vectors is $0$, we can equivalently write $t \cdot (\phi \times \psi) + (\phi \times \gamma) \in [\phi_0, \phi_1]$ and $s \cdot (\psi \times \phi) + (\psi \times \gamma) \in [\psi_0, \psi_1]$. 
Due to $\psi \in (0, +\infty) \times [0, +\infty)$ and $\phi \in [0, +\infty) \times (-\infty, 0)$, it holds $(\phi \times \psi) \geq 0$ and $(\psi \times \phi) \leq 0$. Hence the former term increases with $t$, while the latter one decreases as $s$ increases.

By solving the respective univariate linear equations in constant time, we determine the respectively minimal $t_0, t_1 \in \mathbb Z$ and maximal $s_0, s_1 \in \mathbb Z$ such that $\psi \times (s_0\phi + \gamma) \leq \psi_1$, $\psi \times (s_1\phi + \gamma) < \psi_0$, $\phi \times (t_0\psi + \gamma) \geq \phi_0$ and $\phi \times (t_1\psi + \gamma) > \phi_1$.
For $i,j \in \{0, 1\}$, we define $w_{i, j} = (s_i\phi + t_j\psi + \gamma) \equiv \gamma$ and $D_{i,j} = D + w_{i,j}$. 
Then, a point $s\phi + t\psi + \gamma$ with $s,t \in \mathbb Z$ is in 
\begin{itemize}
	\item $S$ if and only if $s_0 \leq s < s_1$ and $t_0 \leq t < t$, 
	\item $D_{0, 0}$ if and only if $s_0 \leq s$ and $t_0 \leq t$, 
	\item $D_{0, 1}$ if and only if $s_0 \leq s$ and $t_1 \leq t$, 
	\item $D_{1, 0}$ if and only if $s_1 \leq s$ and $t_0 \leq t$, 
	\item $D_{1, 1}$ if and only if $s_1 \leq s$ and $t_1 \leq t$.
\end{itemize}

\vspace{.5\baselineskip}

\parbox{.55\textwidth}{\begin{tikzpicture}[x=.6em, y=.6em]

\foreach \s in {1,...,10} {
	\foreach[evaluate=\s as \x using (\s+.7*\t),
			 evaluate=\s as \y using (0.8*\s-.5*\t + 5)] \t in {1,...,10} {
		\node[black!20!white, draw, circle, fill=black!20!white, inner sep=.5pt] (\s-\t) at (\x, \y) {};	
	}
}

\foreach \s in {3,...,8} {
	\foreach \t in {3,...,8} {
		\node[draw, circle, fill, inner sep=.5pt] at (\s-\t) {};	
	}
}

\draw ($0.5*(2-1)+0.5*(3-1)$) to node[pos=0, above left] {\small $\phi_0$} ($0.5*(2-10)+0.5*(3-10)$);
\draw ($0.5*(8-1)+0.5*(9-1)$) to node[pos=0, above left] {\small $\phi_1$} ($0.5*(8-10)+0.5*(9-10)$);
\draw ($0.5*(1-2)+0.5*(1-3)$) to node[pos=0, below left] {\small $\psi_1$} ($0.5*(10-2)+0.5*(10-3)$);
\draw ($0.5*(1-8)+0.5*(1-9)$) to node[pos=0, below left] {\small $\psi_0$} ($0.5*(10-8)+0.5*(10-9)$);

\node[draw, circle, fill=red, inner sep=1.5pt] (3-3) at (3-3) {};
\node[draw, circle, fill=red, inner sep=1.5pt] (3-9) at (3-9) {};
\node[draw, circle, fill=red, inner sep=1.5pt] (9-3) at (9-3) {};
\node[draw, circle, fill=red, inner sep=1.5pt] (9-9) at (9-9) {};

\draw[densely dotted, thick] (3-3) ++(-1.5, 3.5) node[black, inner sep=1pt] (l) {\small $w_{0,0}$} (3-3) to[bend right] (l);
\draw[densely dotted, thick] (3-9) ++(4, 0) node[black, inner sep=1pt] (l) {\small $w_{1,0}$} (3-9) to[bend right] (l);
\draw[densely dotted, thick] (9-3) ++(4.5, 0) node[black, inner sep=1pt] (l) {\small $w_{0,1}$} (9-3) to[bend left] (l);
\draw[densely dotted, thick] (9-9) ++(1, -3) node[black, inner sep=1pt] (l) {\small $w_{1,1}$} (9-9) to[bend left] (l);

\foreach \s in {1,...,20} {
	\foreach[evaluate=\s as \x using (0.5*\s+.35*\t+0.425 + 19),
			 evaluate=\s as \y using (0.4*\s-.25*\t+0.325 + 5)] \t in {1,...,20} {
		\node[circle, inner sep=.5pt] (\s-\t) at (\x, \y) {};	
	}
}

\foreach \s in {1,...,20} {
		\ifnum\s=8\else
			\draw[gray, thin] (\s-8.center) to (\s-20.center);
			\draw[gray, thin] (8-\s.center) to (20-\s.center);
		\fi
}

\draw[very thick]
(1-1.center) to (20-1.center)
(1-1.center) to (1-20.center)
(1-8.center) to (20-8.center)
(8-1.center) to (8-20.center);
\path (1-8.center) to node[midway] {$S$} (8-1.center);

\node[draw, circle, fill=red, inner sep=1.5pt] (1-1) at ($0.5*(1-1)+0.5*(2-2)$) {};
\node[draw, circle, fill=red, inner sep=1.5pt] (1-8) at ($0.5*(1-8)+0.5*(2-9)$) {};
\node[draw, circle, fill=red, inner sep=1.5pt] (8-1) at ($0.5*(8-1)+0.5*(9-2)$) {};
\node[draw, circle, fill=red, inner sep=1.5pt] (8-8) at ($0.5*(8-8)+0.5*(9-9)$) {};

\end{tikzpicture}}%
\hfill%
\parbox{.38\textwidth}{
$D_{0,0}\ = \enskip\rlap{\smash{\scalebox{.75}{\begin{tikzpicture}[baseline=0em]
	\node[draw=black, pattern=north east lines, pattern color=gray, rotate=-45, inner sep=0, minimum width=1em, minimum height=2em, anchor=south east] at (0, 0) {};
	\node[draw=black, pattern=north west lines, pattern color=gray, rotate=-45, inner sep=0, minimum width=1.5em, minimum height=1.2em, anchor=north west] at (0, 0) {};
	\node[draw=black, pattern=crosshatch, pattern color=gray, rotate=-45, inner sep=0, minimum width=1.5em, minimum height=2em, anchor=south west] at (0, 0) {};
	\node[draw=black, rotate=-45, inner sep=0, minimum width=1em, minimum height=1.2em, anchor=north east] at (0, 0) {};
\end{tikzpicture}}}}$%
~~~~~~~~~~~~~~~
$D_{0,1}\ = \enskip\rlap{\smash{\scalebox{.75}{\begin{tikzpicture}[baseline=0em]
	\node[draw=black, pattern=north east lines, pattern color=gray, rotate=-45, inner sep=0, minimum width=1em, minimum height=2em, anchor=south east] at (0, 0) {};
	\node[draw=black, pattern=crosshatch, pattern color=gray, rotate=-45, inner sep=0, minimum width=1.5em, minimum height=2em, anchor=south west] at (0, 0) {};
\end{tikzpicture}}}}$

\vspace{2\baselineskip}

$D_{1,0}\ = \enskip\rlap{\smash{\scalebox{.75}{\begin{tikzpicture}[baseline=0em]
	\node[draw=black, pattern=north west lines, pattern color=gray, rotate=-45, inner sep=0, minimum width=1.5em, minimum height=1.2em, anchor=north west] at (0, 0) {};
	\node[draw=black, pattern=crosshatch, pattern color=gray, rotate=-45, inner sep=0, minimum width=1.5em, minimum height=2em, anchor=south west] at (0, 0) {};
\end{tikzpicture}}}}$%
~~~~~~~~~~~~~~~
$D_{1,1}\ = \enskip\rlap{\smash{\scalebox{.75}{\begin{tikzpicture}[baseline=0em]
	\node[draw=black, pattern=crosshatch, pattern color=gray, rotate=-45, inner sep=0, minimum width=1.5em, minimum height=2em, anchor=south west] at (0, 0) {};
\end{tikzpicture}}}}$}

\vspace{.5\baselineskip}

All sets contain only points that are in the same class as $\gamma$. 
Since $D_{0,0}$ is partitioned by $S \cup (D_{0, 1}\setminus D_{1, 1}) \cup (D_{1, 0}\setminus D_{1, 1}) \cup D_{1, 1}$, the statement of the lemma follows. 
Any $w_{i, j}$ is at distance at most $\absolute{\phi + \psi} = \O(m)$ from a point in $S$, and $\max \{\absolute{u} : u \in S\} = \O(m)$ implies $\absolute{w_{i,j}} = \O(m)$. 
(If $S$ is empty, i.e., $s_0 = s_1$ or $t_0 = t_1$, then we define $\forall i,j\in\{0, 1\} : w_{i,j} = \gamma$ instead.)
\end{proof}

\end{lemma}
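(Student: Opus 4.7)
The plan is to perform a change of basis to ``lattice coordinates'' in which the subtile $S$ becomes a half-open axis-aligned rectangle and $D$ becomes the non-negative quadrant; the claimed identity is then the standard two-dimensional inclusion-exclusion expressing a rectangle as a signed sum of quadrants.

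I would first observe that every point $u\in\Z^2$ with $u\equiv\gamma$ admits a unique representation $u=s\phi+t\psi+\gamma$ with $(s,t)\in\Z^2$, because $\phi$ and $\psi$ are linearly independent: their cross product is non-zero (in fact strictly positive) under the quadrant conditions $\phi\in[0,+\infty)\times(-\infty,0)$ and $\psi\in(0,+\infty)\times[0,+\infty)$. Using bilinearity of the cross product together with $\phi\times\phi=\psi\times\psi=0$, the two tile inequalities defining $S$ rewrite as
\[\phi\times u=t\cdot(\phi\times\psi)+\phi\times\gamma\in[\phi_0,\phi_1],\qquad \psi\times u=-s\cdot(\phi\times\psi)+\psi\times\gamma\in[\psi_0,\psi_1].\]
Since $\phi\times\psi>0$, these decouple into univariate integer interval constraints on $t$ and on $s$ individually, which can be solved in constant time to yield integers $s_0\le s_1$ and $t_0\le t_1$ such that, for $u\equiv\gamma$, we have $u\in S$ iff $s_0\le s<s_1$ and $t_0\le t<t_1$ (taking ceilings and floors appropriately for the closed signature intervals).

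Next, I set $w_{i,j}:=s_i\phi+t_j\psi+\gamma$ for $i,j\in\{0,1\}$. Since $D=\{s\phi+t\psi:s,t\ge 0\}$, one has $D+w_{i,j}=\{s\phi+t\psi+\gamma:s\ge s_i,\,t\ge t_j\}$, and in particular every point of $D+w_{i,j}$ is congruent to $\gamma$. Because the map $(s,t)\mapsto s\phi+t\psi+\gamma$ is a bijection from $\Z^2$ onto the congruence class $[\gamma]$, membership in $S$ (viewed inside $[\gamma]$) decomposes as the classical rectangle identity
\[\mathbf{1}[s_0\le s<s_1,\,t_0\le t<t_1]=\sum_{i,j\in\{0,1\}}(-1)^{i+j}\,\mathbf{1}[s\ge s_i,\,t\ge t_j].\]
Intersecting each side with an arbitrary $X\subseteq\Z^2$ and summing over $\Z^2$ (points outside $[\gamma]$ contribute zero to every indicator on both sides) yields the claimed equality $\absolute{S\cap X}=\sum_{i,j\in\{0,1\}}(-1)^{i+j}\absolute{(D+w_{i,j})\cap X}$.

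For the size bound, if $S$ is non-empty then each $w_{i,j}$ lies within $\absolute{\phi}+\absolute{\psi}=\O(m)$ of a point of $S$, so $\max_{u\in S}\absolute{u}=\O(m)$ gives $\absolute{w_{i,j}}=\O(m)$; if $S$ is empty, set $w_{i,j}:=\gamma$ so the identity holds trivially with both sides zero. The only delicate point, and the closest thing to an obstacle, is orientation bookkeeping: verifying $\phi\times\psi>0$ under the stated quadrant constraints, and keeping track of the fact that the coefficient of $s$ in the second rewritten inequality is \emph{negative}, so that the interval $[s_0,s_1)$ is obtained by inverting the direction of the ceiling/floor computation.
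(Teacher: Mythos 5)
Your proof is correct and follows essentially the same approach as the paper: the same change of variables to lattice coordinates via the cross product, the same decoupling of the tile constraints into univariate integer intervals on $s$ and $t$ yielding $s_0,s_1,t_0,t_1$, the same four corner vectors $w_{i,j}=s_i\phi+t_j\psi+\gamma$, and the same fallback $w_{i,j}=\gamma$ when $S$ is empty. The only cosmetic difference is that you phrase the final step as the classical rectangle-indicator inclusion-exclusion identity, whereas the paper states it as the partition $D_{0,0}=S\cup(D_{0,1}\setminus D_{1,1})\cup(D_{1,0}\setminus D_{1,1})\cup D_{1,1}$; these are equivalent.
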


We apply \Cref{primitive} to the domain of each $S\in \S$ to obtain $w_{i, j}(S)$ for $i,j \in \{0,1\}$ in $\O(\absolute{\S})$ total time. We then have:

\begin{align}
\label{eq:S_replaced_by_shifted_angles}
\begin{split}
\tau_q &= \sum_{S \in \S} \sum_{V \in \V_{\getchar{S}}} \sum_{i,j \in \{0,1\}} (-1)^{i+j} \absolute{\d{D+w_{i, j}(S)} \cap \d{V + q}} \\
& = \sum_{i,j \in \{0,1\}} (-1)^{(i+j)} \sum_{S \in \S} \sum_{V \in \V_{\getchar{S}}} \absolute{\d{D-q} \cap \d{V-w_{i, j}(S)}}
\end{split}
\end{align}
Now, fix $i,j \in \{0,1\}$, and let 
\begin{equation}
\label{eq:fixed_ij}
\tau_q^{i,j} = \sum_{S \in \S} \sum_{V \in \V_{\getchar{S}}} \absolute{\d{D-q} \cap \d{V-w_{i, j}(S)}}
\end{equation}

Now we explain how to compute the values $\tau_q^{i,j}$ for all $q \in Q$ and all $i,j \in \{0,1\}$ in time $\tO(m^2 + \sum_{S \in \S} \absolute{\V_{\getchar{S}}})$. Then, the values $\tau_q$ can be computed in $\O(\absolute{Q}) = \O(m^2)$ time using $\tau_q = \tau^{0,0}_q - \tau^{0,1}_q - \tau^{1,0}_q + \tau^{1,1}_q$ due to \cref{eq:S_replaced_by_shifted_angles,eq:fixed_ij}. 
Define a set of truncated subtiles $\S \odot \V := \{V-w_{i, j}(S) : S \in \S, V \in \V_{\getchar{S}}\}$.  Note that $\absolute{\S \odot \V} = \O(\sum_{S \in \S} \absolute{\V_{\getchar{S}}})$, and for all $u \in V$ with $V \in \S \odot \V$, we have $\absolute{u} = \O(m)$. We can simplify \cref{eq:fixed_ij} as follows:

\begin{align}
\label{eq:group_by_gamma}
\tau_q^{i,j} = \sum_{V \in \S \odot \V} \absolute{\d{D-q} \cap \d{V}}
\end{align}

\begin{lemma}\label{primitive_conv}
The values $\tau_q^{i,j}$ for all $q \in Q$ can be computed in total time $\tO(m^2 + \sum_{S \in \S} \absolute{\V_{\getchar{S}}})$. 
\end{lemma}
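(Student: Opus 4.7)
My plan is to compute $\tau_q^{i,j}$ as the value at each $q \in Q$ of a 2D convolution. By expanding
\[
\tau_q^{i,j} \;=\; \sum_u F(u)\cdot\mathbf{1}[u+q \in D]
\qquad\text{with}\qquad F(u)=\sum_{V\in\S\odot\V}\mathbf{1}[u\in V],
\]
the entire vector $(\tau_q^{i,j})_{q\in Q}$ becomes the correlation of $F$ with the indicator of $D$. Since every $V\in\S\odot\V$ lies in an $\O(m)\times\O(m)$ window around the origin, $F$ and the relevant restriction of $\mathbf{1}_D$ have support of size $\O(m^2)$, and a 2D FFT would compute the whole correlation in $\tO(m^2)$ time once $F$ is available. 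The obstacle is that building $F$ point-by-point would cost $\Theta(\sum_{V}\absolute{V})$, potentially as large as $\Theta(m^2\sum_S\absolute{\V_{\getchar S}})$, which is too expensive.

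Instead I would build $F$ through a sparse signed-corner representation, applying a truncated-subtile extension of \cref{primitive} to every $V\in\S\odot\V$. This expresses $\mathbf{1}_V$ as a signed sum of $\O(1)$ indicators of shifted lattice quadrants $\mathbf{1}_{D+w}$, with each $w$ in the congruence class of $V$ modulo $\L$. I would bucket the resulting corners by their class $\gamma\in\Gamma$ and, within each class, aggregate them into a sparse signed delta function on $\gamma+\L$, parameterised in $(s,t)$ coordinates via $u = s\phi + t\psi + \gamma$. In these coordinates every shifted lattice quadrant $D+w$ becomes a standard upper-right quadrant, so a 2D prefix sum along $s$ and $t$ materialises the restriction $F_\gamma$ of $F$ to class $\gamma$.

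Because $\absolute{\Gamma}=\O(\phi\times\psi)$ by \cref{lattice_base} and each class contributes $\O(m^2/(\phi\times\psi))$ relevant lattice points inside the $\O(m)\times\O(m)$ window, the prefix-sum passes cost $\O(m^2)$ overall, while generating and bucketing the corners costs $\O(\absolute{\S\odot\V}) = \O(\sum_S\absolute{\V_{\getchar S}})$. Once every $F_\gamma$ is built, I would convolve it with the indicator of $D$ on the class-$\gamma$ grid using 2D FFT. Since a query $q\in Q$ contributes only through the class $\gamma$ satisfying $\gamma\equiv-q$ modulo $\L$, reading off the FFT outputs at the $(s,t)$ position corresponding to $q$ and summing across classes recovers $\tau_q^{i,j}$ for every $q\in Q$. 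The FFT cost aggregates to $\tO(m^2)$.

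The main obstacle is that $V\in\S\odot\V$ is a \emph{truncated} subtile: the pattern's outer rectangle $[0,m-1]^2$ (shifted by $-w_{i,j}(S)$) contributes axis-aligned half-plane constraints in $(x,y)$ that become \emph{slanted} half-planes in $(s,t)$ coordinates, and these cannot be captured by signed sums of upper-right $(s,t)$-quadrants. \cref{primitive} as stated therefore does not apply directly. I expect to handle this by extending \cref{primitive} to truncated subtiles using inclusion--exclusion on the four rectangle half-planes, which generates $\O(1)$ additional signed corners per $V$, each corresponding to the intersection of an upper-right $(s,t)$-quadrant with a slanted half-plane. These slanted contributions can be folded into the FFT step by an auxiliary pass that performs one-dimensional prefix sums along the rectangle's two axis directions, so the overall running time remains $\tO(m^2 + \sum_S\absolute{\V_{\getchar S}})$.
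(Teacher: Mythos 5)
Your reduction of $\tau_q^{i,j}$ to evaluating the correlation of $F(u) := \score(u) = \absolute{\{V \in \S\odot\V : u \in \dom(V)\}}$ with the indicator of $D$ is the same starting point as the paper, and once the scores on the relevant $\O(m)\times\O(m)$ window are in hand, either FFT per congruence class or (as the paper does, more simply) the dynamic-programming recursion $\sum_{u\in D-q}\score(u) = \score(-q) + \sum_{u\in D-q+\phi}\score(u) + \sum_{u\in D-q+\psi}\score(u) - \sum_{u\in D-q+\phi+\psi}\score(u)$ finishes the job in $\tO(m^2)$. The genuine gap is in how you propose to compute $F$ itself.

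You try to express $\mathbf 1_V$ for each \emph{truncated} subtile $V \in \S\odot\V$ as a signed combination of $\O(1)$ lattice-quadrant indicators and then materialise $F_\gamma$ by a 2D prefix sum in $(s,t)$ coordinates. That cannot work as described: within a fixed class $\gamma$, the two pairs of parallelogram constraints become axis-aligned intervals on $s$ and on $t$, but the two pairs of rectangle constraints $x_0\le\x u\le x_1$, $y_0\le\y u\le y_1$ become slanted slabs in $(s,t)$. So each $V$ restricted to class $\gamma$ is an orthogonal box in \emph{four} independent directions (two aligned with $(s,t)$, two aligned with $(x,y)$), not in two. Counting how many of $\O(\absolute{\S\odot\V})$ such boxes contain each of $\O(m^2)$ query points is an offline 4-dimensional orthogonal stabbing problem; a single pass of 2D prefix sums, nor two auxiliary 1D passes ``along the rectangle's two axis directions,'' can implement a 4D dominance count, because after prefix-summing in $(s,t)$ you have already lost the ability to distinguish which $(x,y)$ slab a given contribution belongs to. You cannot chain prefix sums across coordinate systems that do not commute. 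Note also that the paper deliberately does \emph{not} apply \cref{primitive} to the $V$'s for precisely this reason: \cref{primitive} is stated for (non-truncated) subtiles, and in the paper it is only applied to $\dom(S)$ for $S\in\S$; the truncated $V$'s are handled directly by building, per class $\gamma$, an 8-dimensional range tree over the signatures $(x_0,x_1,y_0,y_1,\phi_0,\phi_1,\psi_0,\psi_1)$ of the sets in $(\S\odot\V)|_\gamma$ and answering a polylogarithmic-time counting query for each grid point $w\in\W_\gamma$. That multi-level range search (or an equivalent offline multi-dimensional sweep) is exactly the ingredient your proposal is missing. If you replace the prefix-sum step with such a range-counting structure, the rest of your plan (bucketing by class, using the FFT or the recursion to integrate over $D-q$) is sound and the stated time bound follows.
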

\begin{proof}
For $u \in \Z^2$, let $\score(u)$ be the number of truncated subtiles in $\S \odot \V$ containing $u$, i.e., $\score(u) = \absolute{\{V : V \in \S \odot \V, u \in \dom(V)\}}$. Observe that
\begin{equation}
\begin{split}
\tau_q^{i,j} =&\sum_{V \in \S \odot \V} \absolute{\{u \in \Z^2 : u \in \dom(D-q), u \in \dom(V)\}}\\
=&\sum_{V \in \S \odot \V}\ \sum_{u \in \dom(D-q)}  \absolute{\{u : u \in \dom(V)\}}\\
=&\sum_{u \in \dom(D-q)}\ \sum_{V \in \S \odot \V} \absolute{\{u : u \in \dom(V)\}} = \sum_{u \in \dom(D-q)} \score(u).
\end{split}
\end{equation}
We now explain how to compute $\tau_q^{i,j}$. 
We start by computing an upper bound $\ell = \O(m)$ such that $\score(u) = 0$ for any $u$ with $\absolute{u} > \ell$. Note that $\score(u)>0$ requires that $u \in \dom(V - w_{i,j}(S))$ for some $V \in \V$ and $S \in \S$. Since $\dom(V) \subseteq \dom(P) = [m]^2$, we can use $\ell = \absolute{(m,m)} + \max_{S \in \S}(\absolute{w_{i, j}(S)})$, computable in $\O(\absolute{S})$ time, where $\ell = \O(m)$ due to $\absolute{w_{i,j}(S)} = \O(m)$.
%
Therefore, the set $\W' = \{(x,y) \in \Z^2 : \absolute{x} \le \ell, \absolute{y} \le \ell \}$ is of size $\O(m^2)$. 
We pad this square-shaped set such that it is a parallelogram with sides collinear to $\phi$ and $\psi$, i.e., 
$$\W = \{ w \in \mathbb Z^2 : \absolute{\phi \times w} \leq \absolute{\phi \times (\ell, \ell)}, \absolute{\psi \times w} \leq \absolute{\psi \times (-\ell, \ell)}$$

\begin{center}
\begin{tikzpicture}[x=.15em, y=.15em, remember picture]

\clip (-20,-20) rectangle (30,30);

\node (bl) at (0,0) {};
\node (tr) at (10,10) {};
\node[draw, densely dotted, thick, inner sep=-0.5pt, fit=(bl.center)(tr.center)] {};

\draw[very thick] (0,10) ++(-100,-80) to ++(200,160);
\draw[very thick] (10,0) ++(-100,-80) to ++(200,160);
\draw[very thick] (0,0) ++(-100,30) to ++(200,-60);
\draw[very thick] (10,10) ++(-100,30) to ++(200,-60);

\fill[white] (-100, 300) to (-100,-70) to ++(200,160) to (-100,300);
\fill[white] (-100, 300) to (-90,40) to ++(200,-60) to (-100,300);
\fill[white] (-100, -300) to (-90,-80) to ++(200,160) to (-100,-300);
\fill[white] (-100, -300) to (-100,30) to ++(200,-60) to (-100,-300);

\draw[thin, black!20!white] (0,10) ++(-100,-80) to ++(200,160);
\draw[thin, black!20!white] (10,0) ++(-100,-80) to ++(200,160);
\draw[thin, black!20!white] (0,0) ++(-100,30) to ++(200,-60);
\draw[thin, black!20!white] (10,10) ++(-100,30) to ++(200,-60);

\draw[densely dashed] (-20,-20) rectangle (30,30);

\node (root) at (5,5) {};

\end{tikzpicture}~\begin{tikzpicture}[x=.15em, y=.15em]
	\tikzset{every node/.style={inner sep=0}}
	\node (top) at (-10,5) {};
	\node (bot) at (0,-45) {};
	\node (l1) at (0,0) {};
	\node (l2) at (8,-8) {};
	\node (l3) at (0,-16) {};
	\node (l4) at (8,-24) {};
	\node (l5) at (0,-32) {};
	\node (l6) at (8,-40) {};
	
	\node[fit=(l1)(l2)] (L1) {};
	\node[draw, densely dotted, thick, fit=(l3)(l4)] (L2) {};
	\node[draw, densely dashed, fit=(l5)(l6)] (L3) {};
	
	\node[right=2 of L1] {$\strut = \W$};
	\node[right=2 of L2] {$\strut = \W'$};
	\node[right=2 of L3] {$\strut = \W''$};
	
	\node[fit=(L1.25)(L1.205), inner sep=2] (L1) {};	
	\draw (L1.10) to (L1.100) to (L1.190) to (L1.280) to (L1.10);
\end{tikzpicture}
\end{center}

Since the angle between $\phi$ and $\psi$ is between $30^\circ$ and $150^\circ$, it can be readily verified that $\W$ is a subset of the larger square $\W'' = \{(x,y) \in \Z^2 : \absolute{x} \le 3\ell, \absolute{y} \le 3\ell \}$, which is still of size $\O(m^2)$. Hence, by checking each point in $\W''$ separately, we can construct $\W$ in $\O(m^2)$ time. By design, it holds $\score(w)=0$ for $w \notin \W$. 

We then construct in $\O(m^2)$ time the set $\Gamma$ defined in \Cref{lattice_base} and for each $\gamma \in \Gamma$ define $\W_\gamma = \{w \in \W : w \equiv \gamma\}$. For $\gamma \in \Gamma$, let $(\S \odot \V) |_\gamma = \{V \in \S \odot \V : V \equiv \gamma\}$. 
By \cref{subtile_definition}, if $w \in \W_\gamma$ is contained in a truncated subtile $V \in \S \odot \V$, then $V \equiv \gamma$, and consequently $V \in (\S \odot \V) |_\gamma$. Therefore, for $w \in \W_\gamma$, we can refine the definition of score:
$$\score(w) = \sum_{V \in (\S \odot \V) |_\gamma} \absolute{\{u : u \in \dom(V)\}}$$
Consider a truncated subtile $V \in (\S \odot \V) |_\gamma$ with signature $x_0, x_1, y_0, y_1, \phi_0, \phi_1, \psi_0, \psi_1, \gamma$. For any $w = (\x{w}, \y{w}) \in \W_\gamma$, it holds $w \in V$ if and only if
\begin{equation}
\label{eq:range}
x_0 \le x \le x_1, y_0 \le y \le y_1, \phi_0 \le \h{w} \le \phi_1, \psi_0 \le \s{w} \le \psi_1
\end{equation}
(where we do not need to check whether $w \equiv \gamma$ as $w \in \W_\gamma$). We build in $\tO(\absolute{(\S \odot \V) |_\gamma})$ time an $8$-dimensional range tree~\cite{BENTLEY1979244} for the set of points 
$$\{(x_0, x_1, y_0, y_1, \phi_0, \phi_1, \psi_0, \psi_1) : \exists V \in (\S \odot \V) |_\gamma\text{\ with signature\ } x_0, x_1, y_0, y_1, \phi_0, \phi_1, \psi_0, \psi_1, \gamma\}.$$
For $w = (w.x, w.y) \in \W_\gamma$, let $w.\psi = \psi \times w$ and $w.\phi = \phi \times w$. For every such $w$, we perform in $\tO(1)$ time a counting query for the range 
$$
(-\infty, w.x] \times [w.x, \infty) \times 
(-\infty, w.y] \times [w.y, \infty) \times
(-\infty, w.\phi] \times [w.\phi, \infty) \times
(-\infty, w.\psi] \times [w.\psi, \infty).
$$
The result is exactly the number of sets $\V \in (\S \odot \V) |_\gamma$ that contain $w$, i.e., $\score(w)$. The total time for all $\W_\gamma$ (including the construction of the range tree) is $\tO(\absolute{\W_\gamma} + \absolute{(\S \odot \V) |_\gamma})$. Summing over all $\gamma$, the time for computing $\score(w)$ for all $w \in \W$ is $\O(m^2 + \absolute{(\S \odot \V)})$.

Now, note that the following recursive formula holds for all $q \in [m]^2 \supseteq Q$:
$$\sum_{u \in D-q} \score(u) = \score(-q) + \sum_{u \in D-q+\phi} \score(u) + \sum_{u \in D-q+\psi} \score(u) -\sum_{u \in D-q+\phi+\psi} \score(u).$$
Note that, if $-q \in \W$ but $-q + \psi \notin \W$, then $(\sum_{u \in D-q+\psi} \score(u)) = 0$. This is because $\W$ and $D -q + \psi$ cannot overlap if $-q + \psi \notin \W$, and all scores outside $\W$ are $0$. The same holds for $-q + \phi$ and $-q + \phi + \psi$. Hence we can stop the recursion as soon as we leave $\W$. By using dynamic programming, we can therefore compute $\sum_{u \in D-q} \score(u)$ for every $q \in\W \subseteq \Q$ in $\O(\absolute{W}) = \O(m^2)$ time. 
We can then extract the values $\tau^{i,j}_q$ for all $q \in Q$ in $\absolute{Q} = \O(m^2)$ time, concluding the proof. 
\end{proof}

\subsection[Counting mismatches between P and a d-peripheral string.]{Proof of \cref{th:dense_algo}.}
\label{sec:algorithm:periph}

In this section, we show how to compute the Hamming distances between $P$ and a $d$-peripheral string efficiently:

\restateThmDenseAlgo*

We partition $F$ into four roughly equal-sized $d$-peripheral strings $F_1, \dots, F_4$, where $F_i$ is the restriction of $F$ to $K_i$ defined as in \cref{eq:quarters}. (See \cref{figure:quarter_split}.) For all $q \in Q$, we have 
$$\Ham(P + q, F) = \sum_{i = 1}^4 \Ham(P + q, F_i).$$
Below, we focus on computing $\Ham(P + q, F_1)$ for all $q \in Q$. The other cases are symmetric. 

For a string $V$, define a value $h(V)$ that is an upper bound on $V$'s height as follows: If $V$ is empty, $h(V): = 0$, and otherwise $h(V) := \min \{h \in \Z^+: \forall u,v \in V, (\x{v}, \y{u} + h) \notin \d{\Ta}\}$.
We start with the following technical claim.

\begin{figure}
    \centering
    \begin{subfigure}[t]{0.45\textwidth}
        \centering
        
        \includegraphics[width=0.9\textwidth]{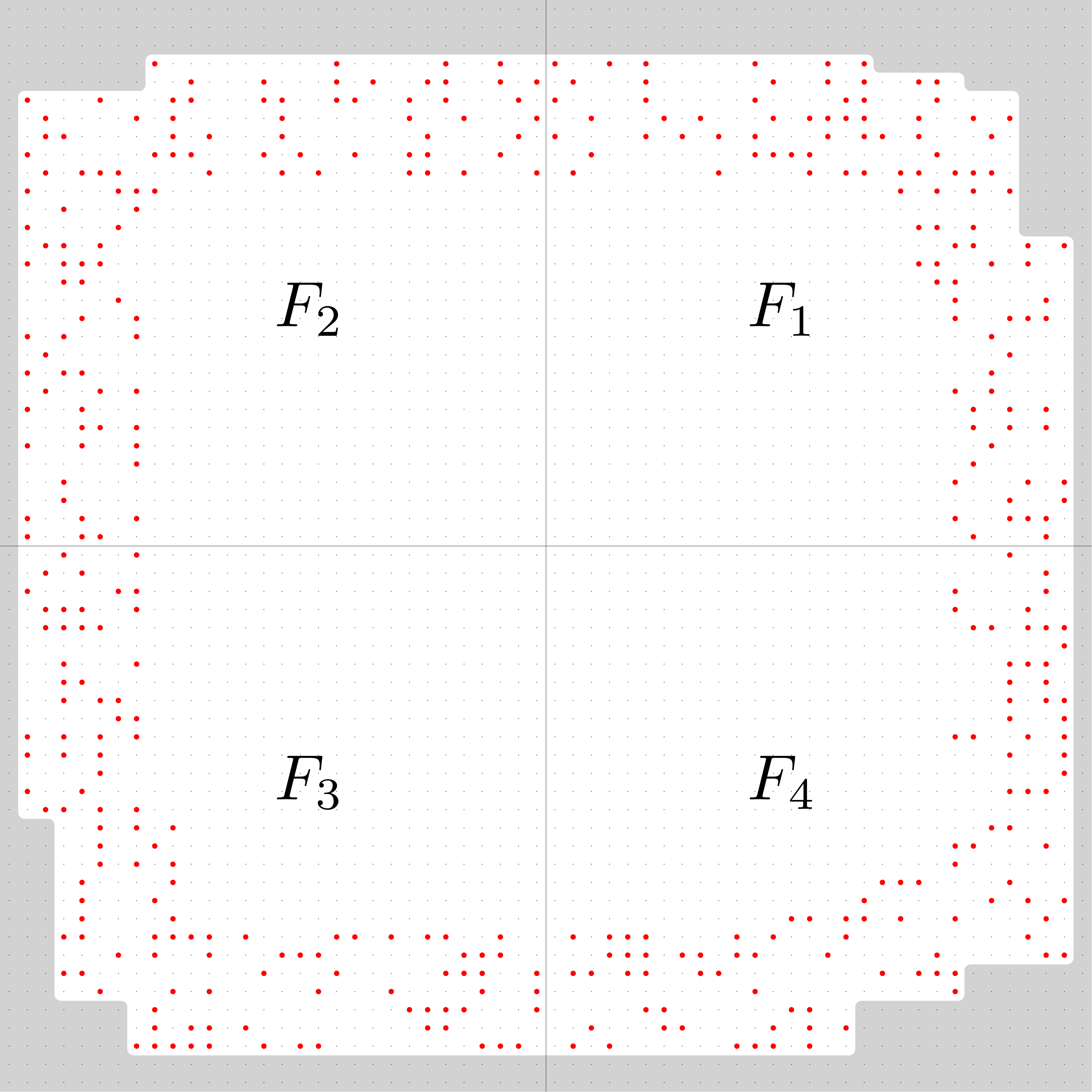}
        \caption{Strings $F_i$ in the partitioning of $F$. The red points constitute $\d{F}$.}
        \label{figure:quarter_split}
    \end{subfigure}%
    \hfill
    \begin{subfigure}[t]{0.45\textwidth}
        \centering
        \includegraphics[width=0.9\textwidth]{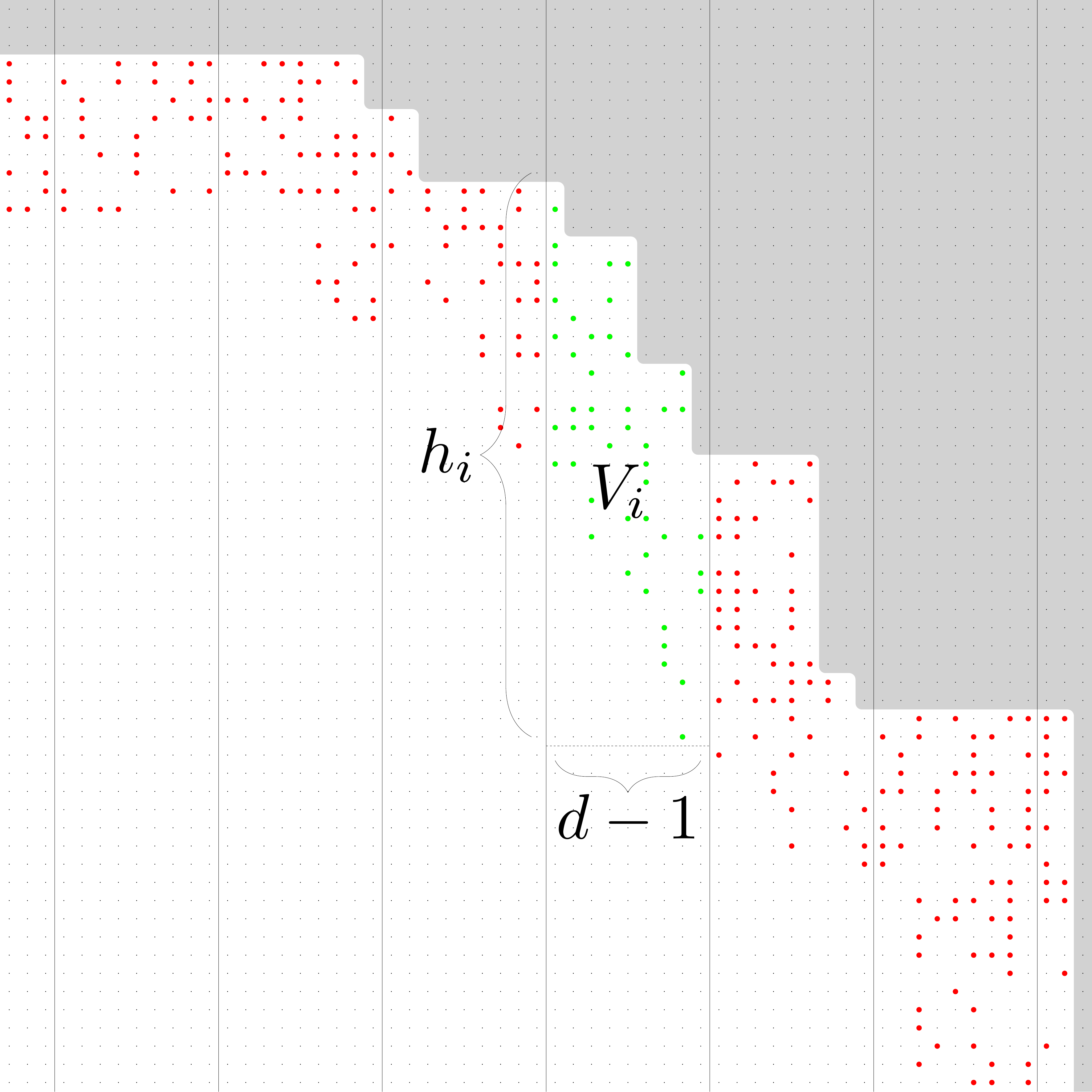}
        \caption{Strings $V_i$ in the partitioning of $F_1$. The thicker points constitute $\d{F_1}$.}
        \label{figure:periphery_decomposition}
    \end{subfigure}
    \caption{Partitioning of $F$. }
\end{figure}

\begin{claim}\label{border_lemma}
Let $F_1$ be a $d$-peripheral string with $\d{F_1} \subseteq K_1$. For all $u \in \d{F_1}$ and ${v \in \d{\Ta}}$, either $\x{v} - \x{u} < d$ or $\y{v} - \y{u} < d$.  
\end{claim}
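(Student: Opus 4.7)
The plan is to argue the contrapositive: assuming some $v \in \d{\Ta}$ satisfies both $\x{v} - \x{u} \ge d$ and $\y{v} - \y{u} \ge d$, I aim to show that every integer $w$ with $\absolute{u - w} \le d$ lies in $\d{\Ta}$, contradicting the $d$-peripherality of $u$.

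The key step is to exhibit a single rectangle of $\d{\Ta}$ that already contains the entire $d$-neighbourhood of $u$. Since $u \in K_1$, the inequalities $\x{v} \ge \x{u} + d > z$ and $\y{v} \ge \y{u} + d > z$ place $v$ in $K_1$ as well. I would pick $q \in Q$ with $v \in \d{P + q}$. Because $q \in [n - m + 1]^2$ and $n \le \tfrac{3}{2}m$, both $\x{q}$ and $\y{q}$ are at most $n - m \le z$, so $\x{q} \le z < \x{u}$ and $\y{q} \le z < \y{u}$. Together with $\x{q} + m - 1 \ge \x{v} \ge \x{u} + d$ and the analogous $y$-inequality, the $m \times m$ rectangle $\d{P + q}$ strictly contains $u$, extends at least $d$ units past $u$ in the rightward and upward directions, and also reaches into $\set{x \le z}$ and $\set{y \le z}$ in the other two directions.

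Using the Euclidean bound $\absolute{u - w} \le d$, it follows that $\x{w} \le \x{u} + d \le \x{q} + m - 1$ and $\y{w} \le \y{q} + m - 1$, so $w$ falls in $\d{P + q}$ provided that also $\x{w} \ge \x{q}$ and $\y{w} \ge \y{q}$. The easy sub-case is when $w \in K_1$: then $\x{w} > z \ge \x{q}$ and likewise for $y$, yielding $w \in \d{\Ta}$ immediately via this staircase monotonicity. The main obstacle I expect is the remaining sub-case when $w \notin K_1$, which can only happen when $\x{u}$ or $\y{u}$ is within $d$ of the central line $z$. For those witnesses I would compare the lower bound $\x{w} \ge \x{u} - d$ to $\x{q} \ge \x{v} - m + 1 \ge \x{u} + d - m + 1$, and symmetrically in $y$, and use that $d$ is small compared to $m$ (ensured by $n \le \tfrac{3}{2}m$) to conclude $\x{w} \ge \x{q}$ and $\y{w} \ge \y{q}$. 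This places $w$ in $\d{P + q} \subseteq \d{\Ta}$, completing the contradiction.
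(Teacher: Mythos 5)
Your plan follows the paper's argument: unpack the $d$-peripherality of $F_1$ to get a $w$ with $\absolute{u-w}\le d$ and $w\notin\d{\Ta}$, pick $q\in Q$ with $v\in\d{P+q}=[m]^2+q$, and show $w\in[m]^2+q\subseteq\d{\Ta}$, a contradiction. The upper-bound half of the sandwich ($\x{w}\le\x{u}+d\le\x{v}\le\x{q}+m-1$, and likewise in $y$) is correct and matches the paper.

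The slip is in the sub-case $w\notin K_1$. To conclude $\x{w}\ge\x{q}$ you compare the lower bound $\x{w}\ge\x{u}-d$ against a \emph{lower} bound on $\x{q}$, namely $\x{q}\ge\x{v}-m+1\ge\x{u}+d-m+1$. Two lower bounds cannot yield $\x{w}\ge\x{q}$: knowing $\x{u}-d$ exceeds the smallest possible value of $\x{q}$ does not mean it exceeds $\x{q}$ itself. What you need (and in fact already noted when you wrote $\x{q}\le n-m\le z$) is the \emph{upper} bound on $\x{q}$. Using $\x{u}\ge n/2$ (from $u\in K_1$ and $2\mid n$), the working assumption $d\le m/4$, and $n\le\tfrac32 m$, the correct chain is
\[
\x{w}\ \ge\ \x{u}-d\ \ge\ \tfrac{n}{2}-\tfrac{m}{4}\ \ge\ n-m\ \ge\ \x{q},
\]
and symmetrically for $y$. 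Note this chain holds regardless of whether $w\in K_1$, so the case split is unnecessary; once you drop it and use the upper bound on $\x{q}$ throughout, your proof coincides with the paper's.
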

\begin{proof}
Assume the contrary. Since $u \in \d{F_1}$, there exists $w \in \Z^2 \setminus \d{\Ta}$ such that $\absolute{\x{u} - \x{w}} \le \absolute{u-w} \le d$ and $\absolute{\y{u} - \y{w}} \le \absolute{u-w} \le d$. Since $v \in \d{\Ta}$, there exists $q \in Q$ such that $v \in [m]^2 + q$. We have
\[ \x{w} \ge \x{u} - d \ge n/2 - m/4 \ge n - m > \x{q}\]
and
\[ \x{w} \le \x{u} + d \le \x{v} \le \x{q} + m - 1. \]
Similarly, we can show $\y{q} \le \y{w} \le \y{q} + m - 1$, and thus $w \in [m]^2 + q \subseteq \d{\Ta}$, a contradiction.
\end{proof}

\begin{lemma}\label{lm:partitioning}
A $d$-peripheral string $F_1$ with $\d{F_1} \subseteq K_1$ can be partitioned into $\ell = \O(m/d)$ strings $V_i$ with width at most $d$ and $\sum_{i \in [\ell]} h(V_i) = \O(m)$. (See \cref{figure:periphery_decomposition}.)
\end{lemma}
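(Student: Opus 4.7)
The plan is to partition $\d{F_1}$ by $x$-coordinate into vertical strips of width at most $d$. Setting $a_i = \min \X(\d{F_1}) + id$, I define $V_i$ as the restriction of $F_1$ to $\set{u : a_i \le \x{u} < a_{i+1}}$ for $i \in [\ell]$, where $\ell$ is the smallest integer with $\d{F_1}$ covered. Each strip has width at most $d$ by construction, and since $F_1 \subseteq K_1 \subseteq (z, n) \times (z, n)$ with $n - z = \O(m)$, we have $\ell = \O(m/d)$.

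For each non-empty $V_i$, let $b_i = \min \Y(V_i)$ and $g_i = \max\set{y : (x, y) \in \d{\Ta}, x \in \X(V_i)}$. Unpacking the definition of $h$, one checks that $h(V_i) \le g_i - b_i + 1$: for this choice of $h$, every point of $\X(V_i) \times (\Y(V_i) + h)$ has $y$-coordinate at least $b_i + h = g_i + 1$, so the shifted set lies strictly above $g(x)$ at each $x \in \X(V_i)$ and is therefore disjoint from $\d{\Ta}$.

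The key geometric input is to bound $g_{i+2}$ in terms of $b_i$ via Claim~\ref{border_lemma}. Picking $u \in V_i$ with $\y{u} = b_i$ and any $v \in \d{\Ta}$ with $\x{v} \in \X(V_{i+2})$, by construction $\x{v} - \x{u} \ge a_{i+2} - (a_{i+1} - 1) = d + 1 > d$, so Claim~\ref{border_lemma} forces $\y{v} - \y{u} < d$. Taking the maximum over $v$ yields $g_{i+2} \le b_i + d - 1$, and hence $h(V_{i+2}) \le (b_i - b_{i+2}) + d$.

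To finish, I sum this inequality over $i \in \set{0, 1, \dots, \ell - 3}$: the additive $d$ contributes $(\ell - 2) d = \O(m)$, and the differences $(b_i - b_{i+2})$ telescope when grouped by the parity of $i$, collapsing into two chains of the form $(b_0 - b_{\ell-2}) + (b_1 - b_{\ell-1})$ (with minor adjustment if $\ell$ is odd), which is $\O(m)$ since every $b_i$ lies in the interval $(z, n)$ of length $\O(m)$. The two remaining strips satisfy $h(V_0), h(V_1) = \O(m)$ trivially, as each $V_j$ is contained in a bounding box of height $\O(m)$. The main subtlety is recognising that Claim~\ref{border_lemma} demands $x$-separation at least $d$, which forces the key inequality to relate $V_i$ with $V_{i+2}$ rather than $V_{i+1}$ and produces the two-parity telescope; empty strips can be handled with a minor bookkeeping tweak (e.g., defining $b_i = +\infty$ for empty $V_i$ so that the inequality is vacuous).
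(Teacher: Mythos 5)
Your proof is correct and follows essentially the same route as the paper's: partition $F_1$ into $\O(m/d)$ width-$d$ vertical strips, apply \cref{border_lemma} to a point in strip $i$ and one in strip $i+2$ (the jump by two being exactly what guarantees the $x$-separation exceeds $d$), obtain a telescoping bound on the heights, and handle the first two strips and the $\O(\ell d)$ additive term separately. The only cosmetic difference is that you phrase the per-strip height bound via $b_i = \min \Y(V_i)$ and $g_i = \max\{y : (x,y) \in \d{\Ta}, x \in \X(V_i)\}$ rather than via the extremal pair $(u_i,v_i)$ produced by minimality of $h_i$ as in the paper, but this amounts to the same inequality.
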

\begin{proof}
Let $\ell = \lceil n / d \rceil$ and partition $F_1$ into strings $V_i$, $i \in [\ell]$, where $V_i$ is the restriction of $F_1$ to $\set{i \cdot d, \dots, i \cdot d + d - 1} \times [n] \cap \d{F_1}$. (See \cref{figure:periphery_decomposition}.) 
For each $i$, the width of $V_i$ is at most $d$. 

We will denote height bounds by using $h_i = h(V_i)$. We now show that $\sum_{i \in [\ell]} h_i = \O(m)$. For $\ell < 2$ the claim is trivial. Assume $\ell \ge 2$. For every $i \in [\ell]$ (since $h_i$ is minimal) there exists a pair of points $u_i, v_i \in \d{V_i}$ such that $(\x{v_i}, \y{u_i} + h_i - 1) \in \d{\Ta}$. For all $i \ge 2$, we have
\[ h_i \le \y{u_{i - 2}} - \y{u_{i}} + d.\]
since if that were not the case for some $i$, the points $u_{i - 2}$ and $(\x{v_{i}}, \y{u_{i}} + h_i - 1)$ would have contradicted \cref{border_lemma}. We conclude that the sum $h_i$ can be upper-bounded by a telescoping series:
$$\sum_{i = 0}^{\ell - 1} h_i \le h_0 + h_1 + \sum_{i = 2}^{\ell - 1} (\y{u_{i - 2}} - \y{u_i} + d) = \O(m)$$
\end{proof}

As an immediate corollary, we obtain

\begin{corollary}\label{cor:area_F}
$\absolute{\d{F_1}} = \O(dm)$. 
\end{corollary}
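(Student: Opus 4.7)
The plan is to read the corollary as a direct size estimate following from \cref{lm:partitioning}. Apply the partition supplied by that lemma to obtain strings $V_0,\dots,V_{\ell-1}$ with $\ell = O(m/d)$, each of width at most $d$, and with $\sum_{i} h(V_i) = O(m)$. The size of $F_1$ is then the sum of $|\dom(V_i)|$ over all $i$, so it suffices to bound $|\dom(V_i)|$ in terms of $d$ and $h(V_i)$.

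The key observation is that $h(V_i)$, despite being defined via properties of $\dom(\Ta)$, really is an upper bound on the height of $V_i$. Indeed, for any $u, v \in \dom(V_i)$, since $V_i$ is a restriction of $\Ta$ we have $v \in \dom(\Ta)$, whereas by definition of $h(V_i)$ the point $(\x{v}, \y{u} + h(V_i))$ is not in $\dom(\Ta)$; hence $\y{v} < \y{u} + h(V_i)$. Taking $u$ with minimum and $v$ with maximum $y$-coordinate in $\dom(V_i)$ yields that the height of $V_i$ is at most $h(V_i)$. Combined with the width bound $d$, this gives $|\dom(V_i)| \le d \cdot h(V_i)$.

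Summing over $i \in [\ell]$ and using the telescoping bound from \cref{lm:partitioning}, we conclude
\[
|\dom(F_1)| \;=\; \sum_{i \in [\ell]} |\dom(V_i)| \;\le\; d \cdot \sum_{i \in [\ell]} h(V_i) \;=\; d \cdot O(m) \;=\; O(dm),
\]
as required. There is no substantial obstacle here: the argument is purely bookkeeping on top of \cref{lm:partitioning}, and the only point meriting a sentence of justification is the interpretation of $h(V_i)$ as a genuine height upper bound.
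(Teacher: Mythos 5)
Your plan and the final bookkeeping are exactly what the paper has in mind when it calls this an ``immediate corollary'' of \cref{lm:partitioning}: each $V_i$ occupies at most $d \cdot h(V_i)$ cells, and $\sum_{i} h(V_i) = \O(m)$. However, there is a genuine gap in your ``key observation''. From $v \in \d{\Ta}$ and $(\x{v}, \y{u}+h(V_i)) \notin \d{\Ta}$ you can only conclude that these two points differ, i.e.\ that $\y{v} \neq \y{u}+h(V_i)$, not that $\y{v} < \y{u}+h(V_i)$. Nothing you have written rules out $\y{v} > \y{u}+h(V_i)$, which would occur if $\d{\Ta}$ had a hole in the column $x=\x{v}$ strictly below $v$.

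What closes the gap is a monotonicity of $\d{\Ta}$ inside $K_1$: if $(x,y') \in \d{\Ta}$ and $z < y \le y'$, then $(x,y) \in \d{\Ta}$. Indeed $(x,y') \in [m]^2 + q$ for some $q \in Q$, and since $\y{q} \le n-m$ while $n \le \tfrac{3}{2}m$ forces $n-m \le z$, we get $\y{q} \le z < y \le y' \le \y{q}+m-1$, hence $(x,y) \in [m]^2 + q \subseteq \d{\Ta}$. With this in hand, suppose $\y{u}+h(V_i) \le \y{v}$; since $\y{u} > z$ and $h(V_i) \ge 1$ give $\y{u}+h(V_i) > z$, the monotonicity applied to $v \in \d{\Ta}$ yields $(\x{v}, \y{u}+h(V_i)) \in \d{\Ta}$, contradicting the defining property of $h(V_i)$. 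This is exactly the step that justifies the paper's parenthetical claim that $h(V)$ ``is an upper bound on $V$'s height''; once you supply it, the remainder of your argument is correct and matches the intended proof.
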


Recall the construction of the sets $\V_a$ described in \Cref{sec:partition:pattern}. We call a character $a \in \Sigma$ \emph{frequent} if $\absolute{\V_a} \ge \sqrt{k}$, and otherwise \emph{infrequent}. We partition $F_1$ into two strings $F_1'$ and $F_1''$, where $F_1'$ contains only infrequent characters and $F_1''$ frequent ones. For every $q \in Q$, we then have 

\[\Ham(P + q, S) = \Ham(P + q, F_1') + \Ham(P + q, F_1'').\]

For all $u \in \d{F_1'}$, let $I_u$ be a monochromatic subtile string defined as the restriction of $F_1'$ to $\set{u}$. 
We have $\Ham(P + q, F_1') = \sum_{u \in \d{F_1'}} \Ham(P + q, I_u)$ for every $q \in Q$. 
Consequently, we can compute these values in $\tO(m^2 + \sum_{u \in \d{F'}} \absolute{\V_{\getchar{I_u}}})$ time by \cref{th:sparse_algo}. 
For all $u \in \d{F_1'}$, $\getchar{I_u}$ is an infrequent character, and hence $\absolute{\V_{\getchar{I_u}}} < k^{1/2}$. 
Furthermore, by \cref{cor:area_F}, we have $\absolute{\d{F_1'}} = \O(md)$, and the total complexity of this step is $\tO(m^2 + mdk^{1/2})$. 
To compute the values $\Ham(P + q, F_1'')$, we apply the following lemma:

\begin{restatable}{lemma}{restateLemSigmaBorder}\label{lm:sigma_border}
Let $F_1''$ be a $d$-peripheral string such that $\d{F_1''} \subseteq K_1$. There is an algorithm that computes, for all $q \in Q$, the distance $\Ham(P + q, F_1'')$ in total time $\tO(m^2 + (\sigma+1) \cdot md)$, where $\sigma$ is the total number of distinct characters in $P$ and $F_1''$.
\end{restatable}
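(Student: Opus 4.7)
The plan is to reduce the Hamming distance computation against $F_1''$ to a collection of small 2D convolutions, one per character and per strip of the partition provided by \cref{lm:partitioning}.

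First, I would apply \cref{lm:partitioning} to partition $F_1''$ into $\ell = \O(m/d)$ vertical strips $V_0, \ldots, V_{\ell-1}$, each of width at most $d$ and with $\sum_{i} h(V_i) = \O(m)$. Since the strips partition the domain of $F_1''$, for every $q \in Q$ we have $\Ham(P+q, F_1'') = \sum_{i} \Ham(P+q, V_i)$, so it suffices to compute the summands and aggregate. In parallel, I would compute the overlap counts $|\d{P+q} \cap \d{F_1''}|$ for every $q$ via a single 2D convolution of the indicators of $\d{P}$ and $\d{F_1''}$ in $\tO(m^2)$ time using \cref{cor:sigman2d}; this accounts for the additive $\tO(m^2)$ term. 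Each $\Ham(P+q, V_i)$ can then be decomposed as $|\d{V_i} \cap \d{P+q}|$ minus the per-character match count $\sum_a (V_i^a \star P^a)(q)$, where $V_i^a$ and $P^a$ denote the indicators of character $a$ in $V_i$ and $P$, respectively.

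Second, for every character $a$ shared by $P$ and $F_1''$ and every strip $V_i$, I would compute $(V_i^a \star P^a)(q)$ at the relevant shifts via \cref{cor:sigman2d}, treating $V_i^a$ (which lies in a bounding box of size at most $d \times h(V_i)$) as the smaller input and the corresponding slab of $P^a$ as the larger one. A naive application that convolves each $V_i^a$ against the full pattern $P^a$ would cost $\tO(\sigma \cdot m \cdot (m + h(V_i)))$ per strip and sum to $\tO(\sigma m^3 / d)$, which is too slow. To reach $\tO(\sigma m d)$, I would chunk the relevant shifts for each strip into blocks of size roughly $d \times h(V_i)$ and, on each chunk, convolve $V_i^a$ only against the $\O(d) \times \O(d + h(V_i))$ slab of $P^a$ that can actually be aligned with $V_i$ under those shifts.

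The main obstacle is arguing that this chunked scheme truly achieves an amortized cost of $\tO(d \cdot (d + h(V_i)))$ per strip per character. Granting that, the per-character aggregate becomes $\tO(\sum_{i} d (d + h(V_i))) = \tO(d^2 \cdot \ell + d \cdot \sum_i h(V_i)) = \tO(md)$ by \cref{lm:partitioning}, so summing over the $\sigma$ shared characters yields the claimed $\tO(m^2 + (\sigma+1) m d)$ total. The key structural input making the chunked FFT sizes sufficient is the $d$-peripherality of $F_1''$ in $K_1$, which combined with \cref{border_lemma} controls how much of $P^a$ can genuinely interact with each strip $V_i$ for shifts in a given chunk.
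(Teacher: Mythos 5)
Your high-level plan correctly identifies the ingredients (the strip partition from \cref{lm:partitioning}, \cref{cor:sigman2d} for per-character FFTs, and \cref{border_lemma} as the source of the geometric constraint), and your target bound of $\tO(d\cdot(d+h(V_i)))$ work per strip per character is exactly right. But the step that you explicitly defer --- ``arguing that this chunked scheme truly achieves an amortized cost of $\tO(d\cdot(d+h(V_i)))$ per strip per character'' --- is the actual content of the lemma, and the ``chunk the relevant shifts'' framing does not close it. If you partition all shifts into $d\times h_i$ blocks and convolve $V_i^a$ against the $\O(d)\times\O(d+h_i)$ slab of $P^a$ that each block can touch, you have $\Theta(m^2/(d h_i))$ blocks, each costing $\tO(d h_i)$, which is $\tO(m^2)$ per strip per character and $\tO(\sigma m^3/d)$ overall --- no better than the naive bound you set out to beat. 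The extra savings must come from showing that the overwhelming majority of those blocks contribute nothing, i.e.\ that only a \emph{single, fixed} $d\times\O(h_i)$ slab of $P$ ever overlaps $V_i$ for shifts $q\in Q$; simply saying that peripherality ``controls how much of $P^a$ can genuinely interact'' does not establish this.

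The paper makes this precise via a two-stage global reduction on the pattern, for all $q\in Q$ at once rather than block by block. First (\cref{border_hamming_reduction}), $d$-peripherality in $K_1$ together with \cref{border_lemma} forces $\d{P_0+q}\cap\d{V_i}=\emptyset$, where $P_0=P|_{[m-d]^2}$ is the interior of $P$; so only the $L$-shaped boundary $P_1=P\setminus P_0$ of thickness $d$ can ever meet $F_1''$. Second, $P_1$ is split into a vertical piece $P_2$ of width $d$ and a horizontal piece $P_3$ of height $d$, handled symmetrically: for $P_2$ and each vertical strip $V_i$, \cref{pattern_height_reduction} (using the definition of $h_i=h(V_i)$) shows $\d{L_i+q}\cap\d{V_i}=\emptyset$, where $L_i$ is the bottom $m-h_i$ rows of $P_2$. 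This leaves exactly one slab $H_i$ of $P_2$, of size $d\times h_i$, and a single application of \cref{cor:sigman2d} to the pair $(H_i,V_i)$ costs $\tO((\sigma+1)\,d\,h_i)$, summing to $\tO((\sigma+1)\,dm)$ by $\sum_i h_i=\O(m)$. Your proposal also omits the case split on $d>m/4$ (where \cref{cor:sigman2d} applied directly already suffices) and the separate symmetric treatment of the horizontal piece $P_3$, both of which are needed to make the accounting work. In short, you have the right target and the right tools, but the core geometric reduction --- from all of $P$ down to the single slab $H_i$ that can intersect $V_i$ at shifts in $Q$ --- is asserted rather than proved.
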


Note that $F_1''$ contains $\O(\sqrt{k})$ distinct characters. In $\O(k^2 \log k + m^2) = \tO(m^2)$ time, we replace all characters in $P$ not present in $F_1''$ with a new character. This operation does not change the Hamming distances, and we now have that the total number of distinct characters in $P$ and $F$ is $\O(\sqrt{k})$. We then apply \cref{lm:sigma_border} to compute $\Ham(P + q, F_1'')$ for all $q \in Q$ in $\tO(m^2 + \sqrt k \cdot md)$ time, which completes the proof of \cref{th:dense_algo}. 
\begin{proof}[{Proof of \cref{lm:sigma_border}}]
If $d > m/4$, we can apply \cref{cor:sigman2d} to compute the distances in $\tO(m^2 + (\sigma+1) \cdot m^2) = \tO(m^2 + (\sigma+1) \cdot dm)$ time. From now on, we assume $d \le m / 4$. 

We start by partitioning $P$ into a string $P_0$, defined as the restriction of $P$ to $[m - d]^2$ and a string $P_1$, defined as the restriction of $P$ to $\d{P} \setminus \d{P_0}$.
Further, we partition $P_1$ into a string $P_2$ of width $d$, which is a restriction of $P$ onto $[m-d,m - 1] \times [m - d]$ and a string $P_3$ of height~$d$, which is a restriction of $P$ onto $[m] \times [m-d,m -1]$. (See \cref{figure:pattern_restriction}.) Then, it holds
\[ \Ham(P + q, F_1'') = \Ham(P_0 + q, F_1'') + \Ham(P_2 + q, F_1'') + \Ham(P_3 + q, F_1'').\]
In a moment, we will show that $\Ham(P_0 + q, F_1'') = 0$ for every $q \in Q$. Hence we only have to consider $\Ham(P_2 + q, F_1'')$ and $\Ham(P_3 + q, F_1'')$. We focus on $P_2$, while the proof for $P_3$ is symmetric.

We partition $F_1''$ into $\ell$ strings $V_i$, $i\in [\ell]$ of width at most $d$, where $\ell = \O(m/d)$ by \cref{lm:partitioning}. (In the proof for $P_3$, we have to use a symmetric version of \cref{lm:partitioning} such that the $V_i$'s are of height at most $d$.) Now we show that $\d{P_0+q}$ and $\d{F_1''}$ are disjoint, and thus $\Ham(P_0 + q, F_1'') = 0$.

\begin{claim}\label{border_hamming_reduction}
For all $q \in Q$ and $i \in [\ell]$, it holds $\d{P_0 + q} \cap \d{V_i} = \emptyset$.
\end{claim}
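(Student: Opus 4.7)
The plan is to prove the claim by contradiction using \cref{border_lemma}. Suppose for contradiction that there exists $u \in \d{P_0 + q} \cap \d{V_i}$. Since $V_i$ is part of the partition of $F_1''$, which in turn is a substring of $F_1$, we have $u \in \d{F_1}$, so \cref{border_lemma} applies to $u$.

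The key idea is to exhibit a single point $v \in \d{\Ta}$ that simultaneously violates both disjuncts of \cref{border_lemma}. The natural candidate is the far corner $v = (\x{q} + m - 1,\ \y{q} + m - 1)$. Since $q \in Q$, the definition of the active text gives $[m]^2 + q = \d{P + q} \subseteq \d{\Ta}$, so $v \in \d{\Ta}$. On the other hand, $u \in \d{P_0 + q}$ where $\d{P_0} = [m-d]^2$, so $\x{u} \le \x{q} + m - d - 1$ and $\y{u} \le \y{q} + m - d - 1$. Consequently
\begin{align*}
\x{v} - \x{u} \ &\ge\ (\x{q} + m - 1) - (\x{q} + m - d - 1)\ =\ d, \\
\y{v} - \y{u} \ &\ge\ (\y{q} + m - 1) - (\y{q} + m - d - 1)\ =\ d,
\end{align*}
so neither $\x{v} - \x{u} < d$ nor $\y{v} - \y{u} < d$ holds, contradicting \cref{border_lemma}. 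Hence no such $u$ exists, and $\d{P_0 + q} \cap \d{V_i} = \emptyset$.

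I expect this to be essentially the entire argument: all technical content is already packaged inside \cref{border_lemma}, and the only thing to do is pick the correct witness point $v$. The one detail to double-check is that the far-corner point is genuinely in $\d{\Ta}$, which follows immediately from $q \in Q$ (so no appeal to \cref{obs:active_text} or further structure of $V_i$ is needed, and the claim holds uniformly for every $i \in [\ell]$).
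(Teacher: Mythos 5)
Your proof is correct and takes essentially the same approach as the paper: both argue by contradiction, exhibiting a point $v \in \d{\Ta}$ with $\x{v} - \x{u} \ge d$ and $\y{v} - \y{u} \ge d$, thereby contradicting \cref{border_lemma}. The only cosmetic difference is the choice of witness: the paper uses $v = u + (d,d)$ (which lies in $[m]^2 + q$ since $u \in [m-d]^2 + q$), whereas you use the fixed far corner $v = q + (m-1, m-1)$; both work equally well.
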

\begin{proof}
Assume towards a contradiction that for some $q \in Q$ there exist $u \in \d{P_0 + q} \cap \d{V_i}$. Define $v = (\x{u} + d, \y{u} + d)$. Since $u \in [m - d]^2 + q$, we have $v \in [m]^2 + q \subseteq \d{\Ta}$. Consequently, we have $u \in \d{V_i}$, $v \in \d{\Ta}$, and $\x{v} - \x{u} = \y{v} - \y{u}$, a contradiction with \Cref{border_lemma}.
\end{proof}

\begin{figure}[!t]
	\begin{center}
		\includegraphics[width=0.8\textwidth]{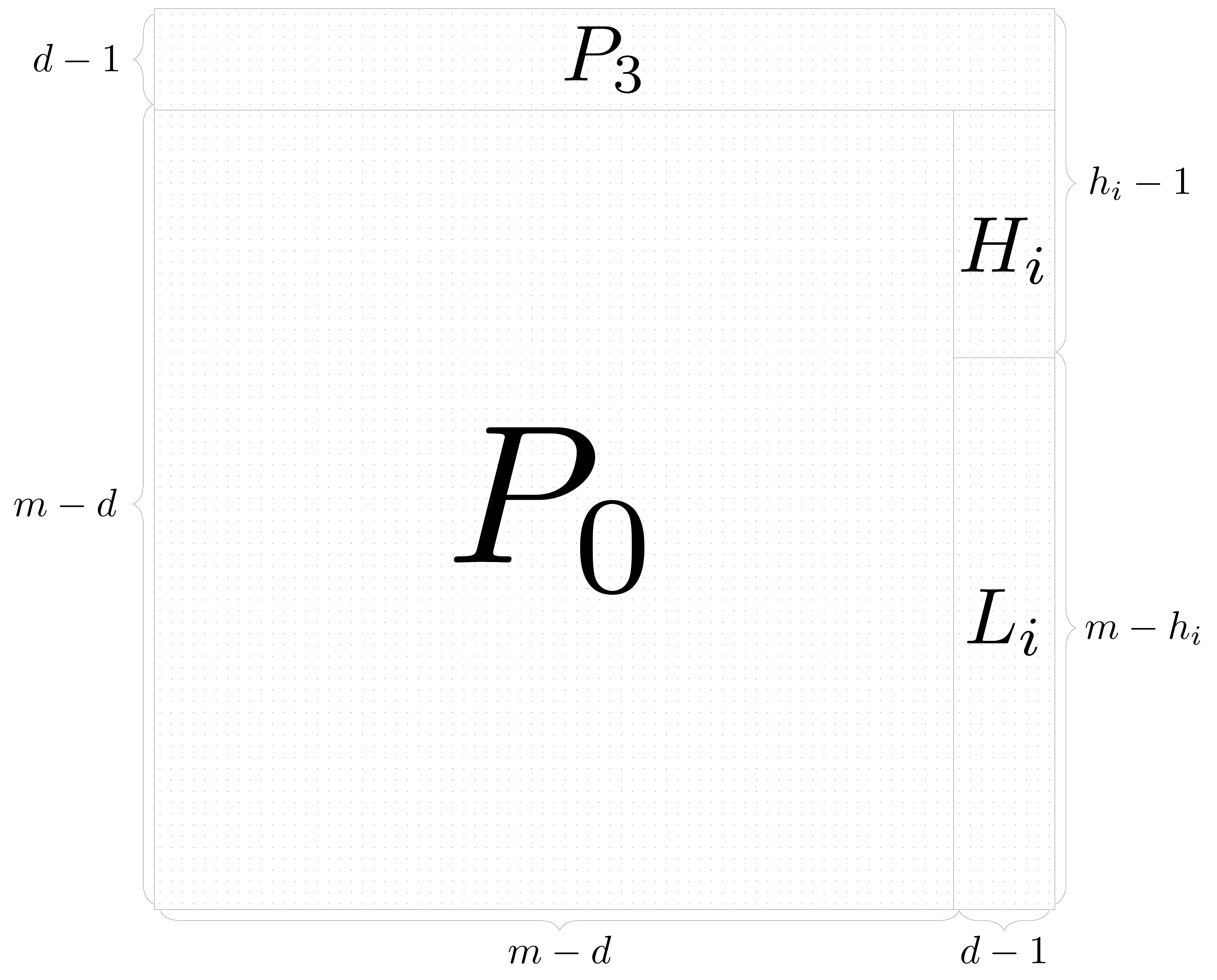}
	\end{center}
	\caption{Pattern partitioning.}
	\label{figure:pattern_restriction}
\end{figure}

Our goal is to compute, for every $q \in Q$, the value

\[ \Ham(P_2 + q, F_1'') = \sum_{i \in [\ell]} \Ham(P_2 + q, V_i).\]

For every $i \in [\ell]$, let $h_i = h(V_i)$. We construct a string $L_i$ as the restriction of $P_2$ to $[m] \times [m - h_i] \cap \d{P_2}$ and a string $H_i$ as the restriction of $P_2$ to $\d{P_2} \setminus \d{L_i}$. (See \Cref{figure:pattern_restriction}.) Since $L_i$ and $H_i$ partition $P_2$, we have

\[\Ham(P_2 + q, V_i) = \Ham(L_i + q, V_i) + \Ham(H_i + q, V_i). \]

\begin{claim}\label{pattern_height_reduction}
For all $q \in Q$ and $i \in [\ell]$, we have $\d{L_i + q} \cap \d{V_i} = \emptyset$ .
\end{claim}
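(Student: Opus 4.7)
The plan is to derive a contradiction from the assumption that there exists some $u \in \d{L_i + q} \cap \d{V_i}$ by producing a point $v$ with $\x{v} = \x{u}$ and $\y{v} = \y{u} + h_i$ which simultaneously must and must not lie in $\d{\Ta}$. If $V_i$ is empty, the claim is trivial since $h_i=0$ and the intersection is vacuous, so we assume $V_i$ is nonempty and hence $h_i \geq 1$.

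First, from $u \in \d{L_i + q}$ we get $u - q \in \d{L_i}$. Since $L_i$ is the restriction of $P_2$ to $[m] \times [m - h_i]$, this gives $\y{u - q} \leq m - h_i - 1$, equivalently
\begin{equation*}
    \y{u} + h_i \leq \y{q} + m - 1.
\end{equation*}
Moreover, $u - q \in \d{P}$ so $\x{u} - \x{q} \in [0, m-1]$. Define $v := (\x{u}, \y{u} + h_i)$. Then $v - q \in [m]^2$, so $v \in \d{P+q}$, and since $q \in Q$, we have $v \in \d{P+q} \subseteq \d{\Ta}$ by the definition of the active text.

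On the other hand, by the definition of $h_i = h(V_i)$, for every pair of points $u',v' \in \d{V_i}$ we have $(\x{v'}, \y{u'} + h_i) \notin \d{\Ta}$. Applying this with $u' = v' = u$ (which is valid because $u \in \d{V_i}$) yields $(\x{u}, \y{u} + h_i) = v \notin \d{\Ta}$, a contradiction.

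There is no real obstacle here; the only subtlety is keeping track of whether $h_i$ comes from the definition applied to the same point $u$ used in both slots (it does), and separately handling the edge case $V_i = \emptyset$. The argument is essentially a one-line reduction to the minimality property that \emph{defines} $h_i$, combined with the fact that any shifted copy of $P_0$ (or more generally, of the lower $[m-h_i]$ rows of $P$) aligned under $q \in Q$ must land inside $\d{\Ta}$ up to height $m-1$ above $\y{q}$.
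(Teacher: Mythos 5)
Your proof is correct and follows essentially the same route as the paper: assume $u$ is in the intersection, set $v = (\x{u}, \y{u}+h_i)$, show $v \in [m]^2 + q \subseteq \d{\Ta}$ from $u \in \d{L_i+q}$ and $q \in Q$, and derive the contradiction with the defining property of $h_i$ by taking both bound points in that definition equal to $u$. You are a bit more explicit than the paper about the coordinate arithmetic and the $V_i = \emptyset$ edge case, but the argument is the same.
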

\begin{proof}
Assume towards a contradiction that there exist $q \in Q$ and $i \in [\ell]$ such that $\d{L_i + q} \cap \d{V_i}$ contains $u \in \Z^2$. Define $v = (\x{u}, \y{u} + h_i)$.
Since $u \in [m] \times [m - h_i] + q$, we have $v \in [m]^2 + q \subseteq \d{\Ta}$, thus $v \in \d{\Ta}$, which contradicts the definition of $h_i$.
\end{proof}

By \Cref{pattern_height_reduction}, we have for all $q \in Q$:

\[\Ham(P_2 + q, F_1'') = \sum_{i \in [\ell]} \Ham(P_2 + q, V_i) = \sum_{i \in [\ell]} \Ham(H_i + q, V_i). \]

\paragraph{Wrapping up.}
We apply \cref{cor:sigman2d} to solve the text-to-pattern Hamming distances problem for every pair of $H_i$ and $V_i$. In total, this takes $\tO(\sum_{i \in [\ell]} ((\sigma+1) \cdot d \cdot h_i) = \O((\sigma+1) dm)$ time by \cref{cor:area_F}. We then need to combine these results to obtain $\Ham(P_2 + q, F_1'')$ for all $q \in Q$. 

To do so, we first naively compute, for all $i \in [\ell]$, a set ${q \in \Z^2 : \d{H_i+q} \subseteq \d{V_i}}$ in $\tO(\sum_{i \in [\ell]} d \cdot h_i) = \tO(dm)$ time (again, by \cref{cor:area_F}). Note that the union of these sets contains all vectors $q \in Q$ for which the value $\Ham(P_2 + q, F_1'')$ can be different from zero. 

We then compute $Q' = Q \cap \bigl( \cup_i \{q \in \Z^2 : \d{H_i+q} \subseteq \d{V_i}\} \bigr)$ by sorting and merging the two sets in $\tO(m^2 + dm) = \tO(m^2)$ time. Note that $\absolute{Q'} = \O(dm)$ and $Q'$ does not contain duplicates. For each $q \in Q'$, we compute $\Ham(P + q, F_1) = \sum_{i \in [\ell]} \Ham(H_i + q, V_i)$ in $\tO(\sum_{i \in [\ell]} d \cdot h_i) = \tO(dm)$ total time, completing the proof of \cref{lm:sigma_border}.
\end{proof}

\bibliographystyle{plainurl}
\bibliography{references}
\appendix
\section[Constructing a parallelogram grid]{Proof of \cref{lm:parallelogram_grid}}
\label{app:parallelogram_grid}

\restateLemParallelogramGrid*

\begin{proof}
\def\paragraphrp#1{\vspace{.5\baselineskip}

\noindent\textbf{#1}}

\paragraphrp{Construction of the grid and Property (\ref{it:unique}).}
Recall that the grid is defined by a sequence of $\ell+1$ lines $h_0, h_1, \ldots, h_\ell$ parallel to $\phi$ and a sequence of $\ell+1$ lines $v_0, v_1, \ldots, v_\ell$ parallel to $\psi$. 

To construct the lines $h_0, h_1, \ldots, h_\ell$, we first compute $\h{u}$ for all $u \in [n^2]$ in $\O(n^2) = \O(m^2)$ time. As $\phi \in (0,+\infty) \times [0,+\infty]$, we have 
$$-\absolute{\phi} \cdot n \le \h{u} = \x{u} \cdot \y{\phi} - \y{u} \cdot \x{\phi} \le \absolute{\phi} \cdot n.$$
We then sort these values in $\O(n^2 \log n) = \O(m^2 \log m)$ time and delete duplicates. Let $-\absolute{\phi} \cdot n \le c_1 < \ldots < c_{N} \le \absolute{\phi} \cdot n$, where $1 \le N \le n^2$, be the resulting values. Let $\delta = (\min_{i=1}^{N-1} (c_{i+1}-c_i))/2\ell = \O(n/\ell)$. Define $\alpha_0 = c_1 -\delta$ and $\alpha_\ell = c_{N} + \delta$. Next, choose equispaced values $\alpha_1, \alpha_2, \ldots, \alpha_{\ell-1}$ (necessarily rational) in the interval $[\alpha_0, \alpha_{\ell}]$. If for some $i, j$ we have $\alpha_i = c_j$, set $\alpha_i = \alpha_i + \delta$. We now have that for all $i$, each of the following is satisfied:
\begin{enumerate}
\item $\alpha_i \notin \{\h{u} : u \in [n^2]\}$;
\item $\alpha_{i+1}-\alpha_i = \O(\absolute{\phi} n / \ell)$;
\item For all $u \in [n^2]$, there is $\alpha_0 < \h{u} < \alpha_\ell$.
\end{enumerate}
Finally, we define $h_i = \{u: u \in \R^2, \h{u} = \alpha_i\}$. In total construction of the lines $h_i$ takes $\O(m^2 \log m)$ time.

We construct the lines $v_0, v_1, \ldots, v_\ell$ analogously. Namely, we select in $\O(m^2 \log m)$ time rational values $\beta_i$, $0 \le i \le \ell+1$, satisfying each of the following:
\begin{enumerate}
\item $\beta_i \notin \{\s{u} : u \in [n^2]\}$;
\item $\beta_{i+1}-\beta_i = \O(\absolute{\psi} n / \ell)$;
\item For all $u \in [n^2]$, there is $\beta_0 < \s{u} < \beta_\ell$.
\end{enumerate}
We then define $v_i = \{u: u \in \R^2, \s{u} = \beta_i\}$. 

For $0 \le i,j \le \ell+1$ let $w_{i,j}$ be the intersection of $h_i$ and $v_j$ (defined correctly as they are not collinear). For $i,j \in [\ell]$ define a parallelogram $p_{i,j}$ as the parallelogram with vertices $w_{i-1,j-1}$, $w_{i-1,j}$, $w_{i,j-1}$, $w_{i,j}$. Property $\ref{it:unique}$ easily follows from the properties of $\{\alpha_i\}$ and $\{\beta_j\}$.

\paragraphrp{Property (\ref{it:small_parallelogram}).}
Consider any $p_{i,j}$, $i,j \in [\ell]$. First note that	for all $u, v \in p_{i, j}$, we have $\absolute{u - v} = \O(n / \ell)$. Consider any $u, v \in p_{i, j}$ and denote $w = u - v$.
By definition of $p_{i, j}$, we have
\[ \absolute{\h{w}} = \absolute{\h{(u - v)}} = \absolute{\h{u} - \h{v}} = \O(n\absolute{\phi} / \ell) \]
and similarly $\absolute{\s{w}} = \O(n\absolute{\psi} / \ell)$. Since $\phi$ and $\psi$ are not collinear, there exist $s, t \in \mathbb{R}$, such that $w = s\phi + t\psi$. Recall that by \Cref{get_periods} we have $\absolute{\phi \times \psi} \ge \frac{1}{2}\absolute{\phi}\absolute{\psi}$ (since $\absolute{\sin \alpha} \ge 1/2$), thus
\[ \frac{1}{2}\absolute{t}\absolute{\phi}\absolute{\psi} \le \absolute{t}\absolute{\phi \times \psi} = \absolute{\phi \times (s\phi + t\psi)} = \absolute{\h{w}} = \O(n\absolute{\phi} / \ell), \]
which gives us $\absolute{t\psi} = \O(n / \ell)$. We can similarly prove that $\absolute{s\phi} = \O(n / \ell)$ and finally 
\[ \absolute{w} = \absolute{s\phi + t\psi} \le \absolute{s\phi} + \absolute{t\psi} = \O(n / \ell). \]

Secondly, we show that	for all $u \in \X(p_{i, j}) \times \Y(p_{i, j})$ there exists $v \in p_{i, j}$, such that $\absolute{u - v} = \O(n / \ell)$. Indeed, there exist $w \in p_{i, j}$ such that $\x{u} = \x{w}$ and $v \in p_{i, j}$ such that $\y{u} = \y{v}$. From above, 
\[\absolute{u - v} = \absolute{\x{u} - \x{v}} = \absolute{\x{w} - \x{v}} \le \absolute{w - v} = \O(n / \ell).\] 

Finally, consider any $u, v \in p_{i, j}$. From above, there exist $u', v' \in p_{i, j}$ such that $\absolute{u - u'} = \O(n / \ell)$ and $\absolute{v - v'} = \O(n / \ell)$. Therefore, 
\[ \absolute{u - v} \le \absolute{u - u'} + \absolute{u' - v'} + \absolute{v' - v} = \O(n / \ell). \]

\paragraphrp{Property (\ref{it:monotonicity}).} 
We show that for all $i \in [\ell - 1]$ and $j \in [\ell]$ we have
$$\min \X(p_{i, j}) < \min \X(p_{i + 1, j}),$$
the rest can be shown analogously. Recall that $w_{s,t}$ is the intersection of the lines $h_s$ and $v_t$ defining the parallelogram grid, $\phi \in [0, +\infty] \times (-\infty,0)$, and $\psi \in (0,+\infty) \times [0,+\infty)$. 

As $p_{i,j}$ and $p_{i+1,j}$ are parallelograms, we have 
\begin{align*}
\min \X(p_{i, j}) = \min\{\x{w_{i, j}}, \x{w_{i + 1,j}}, \x{w_{i + 1,j + 1}},\x{w_{i,j + 1}}\} = \x{w_{i,j + 1}},\\ 
\min \X(p_{i+1, j}) = \min\{\x{w_{i + 1, j}}, \x{w_{i + 2, j}}, \x{w_{i + 2,j + 1}},\x{w_{i+1,j + 1}}\} = \x{w_{i + 1,j + 1}}.
\end{align*}

From the construction of $h_s$ and $v_t$ it follows that $w_{i + 1,j + 1} = w_{i,j+1} + \beta \cdot \psi$ for some $\beta > 0$, and hence $\x{w_{i + 1,j + 1}} > \x{w_{i,j + 1}}$. Therefore, $\min \X(p_{i+1, j}) > \min \X(p_{i, j})$ as desired. 
\end{proof}
\end{document}